\let\llncsproof\proof
\renewcommand{\proof}[1][]{%
  \ifx!#1!\else\renewcommand{\proofname}{#1}\fi
  \llncsproof
}
\newcommand{\abs}[1]{\left|#1\right| }
\newcommand{\suchthat}{\ |\ }
\newcommand{\mset}[1]{\left\{#1\right\}}
\newcommand{\union}{\cup}
\newcommand{\seq}[1]{\overline{#1}}
\renewcommand{\iff}{\equiv}
\newcommand{\imp}{\Rrightarrow}
\newcommand{\limp}{\rightarrow}
\newcommand{\liff}{\leftrightarrow}
\newcommand{\Lor}{\bigvee}
\newcommand{\Land}{\bigwedge}
\newcommand{\oeq}{\simeq}
\newcommand{\noeq}{\not\oeq}
\newcommand{\resclosure}[2]{\mathrm{Res}_{#1}({#2})}
\newcommand{\resclosurepremise}[2]{\mathrm{ResU}_{#1}(#2)}
\newcommand{\resclosurepredicate}[1]{\mathrm{pResU}_{#1}}
\newcommand{\definedas}{:=}
\newcommand{\N}{\mathbb{N}}
\newcommand{\Exists}[1]{\exists #1 \, }
\newcommand{\Forall}[1]{\forall #1 \, }
\newcommand{\witness}{\alpha}
\let\todonote=\todo
\newcommand{\todocolor}{red}
\renewcommand{\todo}[1]{%
  \ifthenelse{\boolean{todoactive}}%
  {{\color{\todocolor}\footnote{\todonote[inline,size=\footnotesize,textcolor=\todocolor,bordercolor=white,backgroundcolor=white,noinlinepar]{TODO #1}}}}%
  {}%
}
\begin{document}

\title{Computing Witnesses Using the SCAN Algorithm \\ (Extended Preprint)}

\author{Fabian Achammer\inst{1}\orcidID{0009-0002-3799-6393}
\and Stefan Hetzl \inst{1}\orcidID{0000-0002-6461-5982}
\and Renate A.~Schmidt \inst{2}\orcidID{0000-0002-6673-3333}}
\authorrunning{F. Achammer \and S. Hetzl \and R. A. Schmidt}

\institute{Institute for Discrete Mathematics and Geometry, TU Wien, Vienna, Austria
\email{\{fabian.achammer,stefan.hetzl\}@tuwien.ac.at}\\
\and
Department of Computer Science, University of Manchester, Manchester, UK\\
\email{renate.schmidt@manchester.ac.uk}}

\maketitle

\begin{abstract}
  Second-order quantifier-elimination is the problem of finding, given a formula with second-order quantifiers, a logically equivalent first-order formula.
  While such formulas are not computable in general, there are practical algorithms and subclasses with applications throughout computational logic.
  One of the most prominent algorithms for second-order quantifier elimination is the SCAN algorithm which is based on saturation theorem proving.
  In this paper we show how the SCAN algorithm on clause sets can be extended to solve a more general problem: namely, finding an instance of the second-order quantifiers that results in a logically equivalent first-order formula.
  In addition we provide a prototype implementation of the proposed method.
  This work paves the way for applying the SCAN algorithm to new problems in application domains such as modal correspondence theory, knowledge representation, and verification.
  \keywords{Second-order quantifier elimination \and SCAN algorithm \and Formula equations \and Saturation theorem proving}
\end{abstract}

\section{Introduction}
\emph{Second-order quantifier elimination (SOQE)}~\cite{Gabbay92Quantifier,Gabbay08Second} is the problem
of, given a formula with second-order quantifiers 
$\Exists{\seq{X}} \varphi$ where $\varphi$ is first-order, find a first-order formula
$\psi$ such that
$$
\Exists{\seq{X}} \varphi \iff \psi.
$$
For example, one can show that $\varPhi = \Exists{X} (X(a) \land \Forall{u} (X(u) \limp B(u)))$
is equivalent to $B(a)$.
An algorithm for eliminating existential quantifiers can
be used to eliminate universal quantifiers by writing
$\Forall{\seq{X}}$ as $\neg \exists \seq{X} \neg$.

SOQE is an important topic in logic, automated reasoning and artificial
intelligence that has real-world applications in diverse areas.
It has been used to automate correspondence theory for modal
logic~\cite{Gabbay92Quantifier}, which has led
to new attention and results in correspondence theory of various
algebras and logics~\cite{BrinkGabbayOhlbach95,Goranko03Scan,ConradieGorankoVakarelov06a,ConradieGhilardiPalmigiano14}.
SOQE can be used to transform a knowledge base (or set of formulas, given by~$\varphi$) into
a restricted view $\psi$ of the knowledge base in which all occurrences
of the predicate symbols $\overline X$ have been eliminated.
This reduction has also been referred to as
forgetting~\cite{LinReiter94} or projection and studied in the form
of uniform interpolation or strongest necessary condition in knowledge
representation~\cite{KonevWaltherWolter09,KoopmannSchmidt13a,KoopmannSchmidt13c,Delgrande17,FereeVanDerGiessenEtAl24,DohertyLukaszewiczSzalas01} 
 and answer set programming~\cite{EiterKernIsberner19,GoncalvesKnorrLeite23}.
Forgetting offers solutions to several important applications and manipulations of
ontologies: ontology extraction, ontology creation, reuse and comparison, 
and abductive reasoning~\cite{ChenAlghamdiEtAl19a,LudwigKonev14,LiuLuEtAl21,KoopmannDelPintoEtAl20}.
In automated reasoning an application of SOQE is (predicate) symbol
elimination~\cite{Gabbay92Quantifier,HoderKovacsVoronkov10,KhasidashviliKorovin16,PeuterSofronieStokkermans21}.
Another very promising application domain is agent communication,
requiring knowledge sharing among agents~\cite{DohertySzalas04,ToluhiSchmidtParsia21}.

Two prominent approaches to computing SOQE-solutions 
are the SCAN algorithm introduced
in~\cite{Gabbay92Quantifier} and the DLS algorithm
introduced in~\cite{Doherty97Computing} and extended
in~\cite{Doherty98General,Nonnengart98Fixpoint}.
SCAN is based on the idea of computing all logical
consequences of $\Exists{\seq{X}} \varphi$ and omit those formulas that contain
predicate variables from $\seq{X}$.
It transforms the first-order part $\varphi$ into clausal
normal form and computes the closure under a constraint resolution calculus,
performing inferences only on $\seq{X}$-literals. During this process 
clauses with $\seq{X}$-literals for which sufficiently many inferences
have been performed are deleted~\cite{Engel96Quantifier,Ohlbach96SCAN}.
If the algorithm terminates, it returns a set of first-order consequences
which is logically equivalent $\Exists{\seq{X}} \varphi$.
Reverse Skolemization might be applied,
if during the clause form transformation Skolemization was performed.
SCAN has been shown complete for computing frame correspondence properties
for the class of modal Sahlqvist axioms~\cite{Goranko03Scan}.
The SCAN algorithm is closely related to hierarchic
superposition~\cite{Bachmair94Refutational,BaumgartnerWaldmann19}.

The other prominent approach, the DLS algorithm, is based on a result of
Ackermann~\cite{Ackermann35Untersuchungen}.
Known as Ackermann's Lemma, it
states a condition under which a predicate variable is eliminable from a second-order formula $\exists X \varphi$.
The idea of the DLS algorithm is to use equivalence preserving
operations to bring the problem into a form
where Ackermann's Lemma is applicable.

In this paper we are interested in a more general problem which we call
\emph{witnessed second-order quantifier elimination (WSOQE)}: Given a formula $\Exists{\seq{X}} \varphi$ with first-order~$\varphi$, find first-order predicates
(formulas) $\seq{\witness}$ satisfying this \emph{WSOQE-condition}
\begin{equation*}
  \label{wsoqe-condition}
  \Exists{\seq{X}} \varphi \iff \varphi[\seq{X} \leftarrow \seq{\witness}]. \tag{$\ast$}
\end{equation*}
That is, the goal is to find a first-order instantiation $\seq{\witness}$ (in the same language as $\varphi$) of variables $\seq X$
such that an equivalent first-order formula is obtained.
We call $\seq{\witness}$ a \emph{WSOQE-witness for $\Exists{\seq{X}} \varphi$}.
Recall that the second-order formula $\varPhi$ from the beginning of the introduction is equivalent to $B(a)$.
One can check that $\alpha = \lambda u. u \oeq a$ and $\beta = \lambda u. B(u)$ are WSOQE-witnesses for $\varPhi$.

The contribution of this paper is an algorithm to solve WSOQE and compute such witnesses.
Our algorithm, called \emph{WSCAN}, is an extension of the SCAN algorithm with
a post-processing step which extracts a witness from the derivation
of a terminating SCAN run.
The witness construction proceeds in a bottom-up way, computing
iteratively from a
witness at derivation step~$i$ a witness for step~$i-1$.
A prototype has been implemented in version 2.18.1 of the GAPT software package~\cite{GAPT}.
In addition to solving the more general WSOQE problem, the concepts
introduced in this paper provide a new correctness proof for the
SCAN algorithm.

The WSOQE problem bridges the gap between SOQE and another important problem, 
namely, \emph{solving formula equations (FEQ)}: Given an input formula $\Exists{\seq{X}} \varphi$, where~$\varphi$ is first-order find a tuple $\seq{\witness}$ of first-order predicates such that 
$$\models \varphi[\seq{X} \leftarrow \seq{\witness}].$$
Since the earlier example $\varPhi$ is not valid it does not have an FEQ-solution.
However, one can check that the witnesses $\alpha$ and $\beta$ from above are FEQ-solutions for the modified formula $\varPhi' = \Exists{X}(B(a) \limp (X(a) \land \Forall{u} (X(u) \limp B(u))))$.

Going back to studies in the 19th century by~\cite{Schroeder90Vorlesungen}, solving formula equations is one of the oldest and most central problems of logic.
The book~\cite{Rudeanu74Boolean} comprehensively considers the problem in the setting of Boolean algebra.
Solving Boolean equations is closely related to Boolean unification, a subject of thorough study in
computer science, see, e.g.,~\cite{Martin89Boolean} for a survey.
The generalization of this problem from propositional to first-order logic has been made explicit
as early as~\cite{Behmann50Aufloesungsproblem,Behmann51Aufloesungsproblem}.

Today, FEQ is the central problem underlying several areas of computational logic, even though
this is often not made explicit.
For example, solving constrained Horn clauses~\cite{Bjorner15Horn}, a popular formalism in verification,
is a restriction of FEQ, with the existentially quantified variables representing
unknown loop invariants in applications of software verification.
More generally, the problem of inductive theorem proving, e.g., in Peano arithmetic, can be
formulated as a restriction of FEQ, where the predicate variables stand for the unknown
induction formulas.
One can naively solve FEQ for $\Exists{\seq{X}} \varphi$ by enumerating all first-order predicates $\seq{\witness}$ and checking whether $\varphi[\seq{X} \leftarrow \seq{\witness}]$ is valid by using, e.g., a first-order theorem prover.

An algorithm for WSOQE, like the one we introduce in this paper, can be used to solve both SOQE and FEQ:
A WSOQE-witness $\seq{\witness}$ satisfies ($*$) and therefore
$\varphi[\seq{X} \leftarrow \seq{\witness}]$ solves SOQE for $\Exists{\seq{X}} \varphi$.
However, note that WSOQE is not complete for SOQE as there are formulas that have a SOQE-solution, but no WSOQE-witness (see \Cref{sec.discussion} for such an example).
Furthermore, if one has a WSOQE-witness $\seq{\witness}$ for $\Exists{\seq{X}} \varphi$ one can determine whether it is an FEQ-solution for $\Exists{\seq{X}} \varphi$ by checking whether $\varphi[\seq{X} \leftarrow \seq{\witness}]$ is valid which can be done with a first-order theorem prover.
The use of solutions to WSOQE for solving SOQE or FEQ is widespread in the literature:
it is the basis of Ackermann's Lemma~\cite{Ackermann35Untersuchungen}, which provides the foundation for the DLS
algorithm~\cite{Doherty97Computing} for SOQE.
It is central for the decidability of quantifier-free Boolean unification with predicates~\cite{Eberhard17Boolean}
and for the fixed-point theorem for Horn formula equations of~\cite{Hetzl21Fixed,Hetzl21Abstract}.
Solutions to WSOQE are called ELIM-witnesses in~\cite{Wernhard17Boolean}.

The paper is organized as follows.
In \Cref{sec.preliminaries} we introduce some necessary notation.
\Cref{sec.extendedSCAN} describes the SCAN algorithm.
We present our method of extracting witnesses from SCAN derivations in \Cref{sec.witness-extraction}.
\Cref{sec.first-order-witnesses} provides conditions under which these SCAN derivations guarantee first-order witnesses.
The prototype implementation of our method is outlined in \Cref{sec.implementation}.
\Cref{sec.discussion} discusses limitations of our method and outlines avenues for future work.
Full proofs of all results and further examples are given in the appendix.

\section{Preliminaries}\label{sec.preliminaries}

We assume familiarity with classical second-order and first-order logic
with equality, the usual Tarskian semantics and first-order clause logic, and use standard
notation for the connectives, semantic entailment and logical
equivalence.
Two formulas (of first- or second-order logic) are called \emph{(logically) equivalent}, if they have the same Tarski models.
We assume a language with equality $\oeq$ and countably many
individual variables $u,v,w,\dots$, predicate variables
$X, Y, Z, \dots$, individual constant symbols $a, b,
c, \dots$, function symbols $f, g, h, \dots$ and predicate symbols
$A, B, \dots$.
We denote terms by $t, s, r \dots$, formulas by $\varphi, \psi, \dots$, 
literals by $L,L', \dots$, clauses by $C, C', \dots$ 
and clause sets by $N, N', \dots$.
A literal $L$ is called an $\seq{X}$-literal, if $L = Y(\seq{t})$ or $L = \neg Y(\seq{t})$ for some predicate variable $Y$ in $\seq{X}$ and terms $\seq{t}$.
The dualization of a literal $L$ is denoted by $L^{\perp}$ and defined as $\neg X(\seq{t})$ if $L = X(\seq{t})$ and $X(\seq{t})$, if $L = \neg X(\seq{t})$.

When a clause set is used in the context of a formula we mean the
first-order universal closure of its corresponding conjunctive
normal form.
We will sometimes use infinite conjunctions and disjunctions.

The set of \emph{basic expressions} is the union of the set of terms, formulas, clauses and clause set.
The set of \emph{expressions} is the smallest set containing the basic expressions that is closed under
$\lambda$-abstraction, i.e., if $E$ is an expression and $u$ is an individual variable, then $\lambda u. E$ is an expression.

Given expressions $E_1, \dots, E_n$ we write $\seq{E}$ for the tuple $(E_1, \dots, E_n)$.
If $\seq{u} = (u_1, \dots, u_n)$ is a tuple of first-order variables and $E$ an expression, we write $\lambda \seq{u}. E$ for the expression $\lambda u_1. \dots \lambda u_n. E$.
For tuples of terms $\seq{t} = (t_1, \dots, t_n)$ and $\seq{s} = (s_1, \dots, s_n)$ we write $\seq{t} \oeq \seq{s}$  for $t_1 \oeq s_1 \land \dots \land t_n \oeq s_n$ and $\seq{t} \noeq \seq{s}$ for $t_1 \noeq s_1 \lor \dots \lor t_n \noeq s_n$.
If we consider an expression $E(\seq{u})$, 
where $\seq{u}$ is a tuple of first-order variables (or constants),
and $\seq{t}$ is a tuple of terms with the same length as $\seq{u}$, 
then $E(\seq{t})$ denotes the simultaneous substitution of the free variables (or constants) $\seq{u}$ in $E$ by $\seq{t}$.
Additionally, using $E(\seq{u})$ implies that all free variables of~$E$ are among those in $\seq{u}$.
A \emph{predicate expression} $\alpha$ is an expression of the form~$\lambda \seq{u}. \varphi$ where $\varphi$ is a formula.
If $\varphi$ is first-order, we call $\alpha$ a \emph{first-order predicate}.
Predicate expressions $\alpha = \lambda \seq{u}. \varphi(\seq{u})$ and $\beta = \lambda \seq{u}. \psi(\seq{u})$ are \emph{equivalent (in symbols $\alpha \iff \beta$)} if $\models \Forall{\seq{u}} (\varphi(\seq{u}) \liff \psi(\seq{u}))$.

Let $E$ be an expression.
For $1 \leq i \leq d$, let $X_i$ be a predicate variable and $F_i = \lambda \seq{u_i}. \varphi_i(\seq{u_i})$ be a predicate expression of the same arity.
Then we denote by $E[X_1 \leftarrow F_1, \dots, X_d \leftarrow F_d]$ (abbreviated as $E[\seq{X} \leftarrow \seq{F}]$) the expression resulting from $E$ by simultaneously replacing all occurrences $X_i(\seq{t})$ in $E$ by $\varphi_i(\seq{t})$.

For terms $\seq{t} = (t_1, \dots, t_n)$ and $\seq{s} = (s_1, \dots, s_n)$ a substitution $\sigma$ is called a \emph{unifier} of $\seq{t}$ and $\seq{s}$, if $t_i\sigma = s_i\sigma$ for all $i \in \mset{1, \dots, n}$.
A unifier $\sigma$ of $\seq{t}$ and $\seq{s}$ is called a \emph{most general unifier} of $\seq{t}$ and $\seq{s}$, if for all unifiers $\tau$ of $\seq{t}$ and $\seq{s}$ there exists a substitution $\rho$ such that $\tau = \sigma\rho$.

\section{The SCAN Algorithm}\label{sec.extendedSCAN}
The SCAN algorithm takes as input a formula $\Exists{\seq{X}} \varphi$, where $\varphi$ is first-order
and then transforms $\varphi$ into its clausal normal form $N$, which might include a Skolemization step.
Next, $N$ is saturated by a \emph{purification} process:
SCAN picks a clause $C$ and a designated $\seq{X}$-literal $L \in C$.
All non-redundant constraint resolution inferences (defined below) between $C$ (on that literal $L$) and the rest of the clauses are performed.
In addition constraint factoring inferences (defined below) on $C$, constraint elimination and redundancy criteria are used to simplify the resulting clauses.
If all non-redundant inferences have been generated, $C$ is deleted from the clause set and the process is repeated with a new clause and designated $\seq{X}$-literal.
This process might not terminate, but if we reach a point where no clause contains an $\seq{X}$-literal the result is a clause set $N^\ast$ such that $\Exists{\seq{X}} N \iff N^\ast$.
SCAN attempts to reverse the Skolemization step from the beginning, if performed, and returns a first-order formula logically equivalent to $\Exists{\seq{X}} \varphi$.
As we present our results for clause sets, we do not deal with reverse Skolemization.

Note that during the purification process, SCAN not only chooses a clause, but also a literal within this clause.
To give a name to this concept we introduce the notion of \emph{pointed clause} (in analogy to \emph{pointed sets} in mathematics):
\begin{definition}
  Let $L$ be a literal and $C$ a clause.
  A \emph{pointed clause} $P$ is a clause $\underline{L} \lor C$, in which the literal $L$ in $L \lor C$ is underlined.
  We call $L$ the \emph{designated literal of $P$}.
  If $L$ is an $X$-literal, then we say $P$ is an \emph{$X$-pointed} clause.
\end{definition}

We fix an existential prefix of predicate variables $\seq{X}$. The inference rules on clauses used in the SCAN algorithm are:
\begin{flalign*}
  &\textbf{Constraint resolution:}
  &&\AxiomC{$C \lor L(\seq{t})$}
  \AxiomC{$C' \lor L(\seq{s})^\perp$}
  \RightLabel{$\mathrm{Res}$}
  \BinaryInfC{$\seq{t} \noeq \seq{s} \lor C \lor C'$}
  \DisplayProof&\\
  \intertext{where $L$ is an $\seq{X}$-literal and w.l.o.g. $C \lor L(\seq{t})$ and $C' \lor L(\seq{s})^\perp$ are assumed to be variable-disjoint.
  The clause $\seq{t} \noeq \seq{s} \lor C \lor C'$ is called the \emph{resolvent} of this inference.}
  &\textbf{Constraint factoring:}
  &&\AxiomC{$C \lor L(\seq{t}) \lor L(\seq{s})$}
  \RightLabel{$\mathrm{Fac}$}
  \UnaryInfC{$\seq{t} \noeq \seq{s} \lor C \lor L(\seq{t})$}
  \DisplayProof&\\
  \intertext{where $L$ is an $\seq{X}$.
  The clause $\seq{t} \noeq \seq{s} \lor C \lor L(\seq{t})$ is called the \emph{factor} of this inference.}
  &\textbf{Constraint elimination:}
  &&\AxiomC{$\seq{t} \noeq \seq{s} \lor C$}
  \RightLabel{$\mathrm{ConstrElim}$}
  \UnaryInfC{$C\sigma$}
  \DisplayProof&
\end{flalign*}
where $\sigma$ is a most general unifier of $\seq{t}$ and $\seq{s}$.

Our method does not depend on the specific choice of redundancy criteria, as long as the notion of redundancy is
(i) \emph{compatible with entailment}: If a clause $C$ is redundant in a clause set $N$, then $\models N \limp C$ and (ii) \emph{equivalence-preserving}: If $N/N'$ is a redundancy elimination step, then $N \iff N'$.
Common redundancy criteria such as tautology deletion, subsumption deletion, constraint elimination and condensing (replacing a clause by a minimal subclause that subsumes it) all satisfy these properties.

We now give a calculus for the saturation process of SCAN:
\begin{definition}
  We denote by $\mathcal{C}$ the \emph{constraint resolution calculus} operating on clause sets with the following derivation steps:
  \begin{flalign*}
    &\text{\smallskip\noindent\emph{\textbf{Inference:}}}
    && \AxiomC{$N$}
    \RightLabel{$I$}
    \UnaryInfC{$N \union \mset{C}$}
    \DisplayProof&\\
    \intertext{for $I \in \mset{\mathrm{Res},\mathrm{Fac},\mathrm{ConstrElim}}$ if \ $\AxiomC{$\ C_1\cdots\ C_n$}\RightLabel{$I$}\UnaryInfC{$C$}\DisplayProof$
    is one of the inferences from above and $C_1,\dots,C_n\in~N$.}
    &\text{\medskip\noindent\emph{\textbf{Redundancy elimination:}}}
    &&
    \AxiomC{$N$}
    \RightLabel{$\mathrm{RedElim}$}
    \UnaryInfC{$N'$}
    \DisplayProof&\\
    \intertext{if $N' = N \setminus \mset{C}$ with $C \in N$ and $C$ is redundant in $N$.}
    &\text{\smallskip\noindent\emph{\textbf{Extended purity deletion:}}}
    &&
      \AxiomC{$N$}
      \RightLabel{$\mathrm{ExtPurDel}_X^{p}$}
      \UnaryInfC{$N \setminus \mset{C \in N \suchthat \text{$C$ contains $X$}}$}
      \DisplayProof
    &\\
    \intertext{for $p\in \mset{+, -}$, where $X$ is a predicate variable if every clause in $N$ containing~$X$ contains $X$ with polarity $p$.}
    &\text{\medskip\noindent\emph{\textbf{Purified clause deletion:}}}
    &&
    \AxiomC{$N \uplus \mset{P}$}
    \RightLabel{$\mathrm{PurDel}_P$}
    \UnaryInfC{$N$}
    \DisplayProof
    ,&
    \intertext{if $P$ is an $X$-pointed clause, where $X$ is a predicate variable and every resolvent of $P$ with a clause from $N$ is redundant in $N$.
    We call $P$ \emph{the pointed clause} of this purified clause deletion step and also say that \emph{$P$ is purified in $N$}.
    By \emph{purification of $P$} we refer to the process of adding resolvents between $P$ and $N$ until $P$ is purified in $N$ followed by the application of $\mathrm{PurDel}_P$.}
  \end{flalign*}
\end{definition}

One can show that $\mathcal{C}$ preserves equivalence with respect to existential quantification which shows that SCAN is correct:
\begin{restatable}{proposition}{calculusIsSoundAndExistentialEquivalencePreserving}
  \label{correctness-of-scan}
  If $N/N'$ is a $\mathcal{C}$-derivation step, then $\Exists{\seq{X}} N \iff \Exists{\seq{X}} N'$.
\end{restatable}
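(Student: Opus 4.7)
The plan is a case analysis on the kind of derivation step $N / N'$. The three \textbf{inference} rules $\mathrm{Res}$, $\mathrm{Fac}$ and $\mathrm{ConstrElim}$ are each first-order sound: for $\mathrm{Res}$ and $\mathrm{Fac}$ this follows by case-splitting on whether the constraint $\seq{t} \oeq \seq{s}$ holds under a given ground instantiation, and for $\mathrm{ConstrElim}$ because applying the MGU $\sigma$ falsifies $\seq{t}\sigma \noeq \seq{s}\sigma$. Hence $N \iff N \union \mset{C}$, and prefixing with $\Exists{\seq X}$ preserves equivalence. \textbf{Redundancy elimination} is immediate from assumption~(ii) on the redundancy notion. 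For \textbf{extended purity deletion}, $\Exists{\seq X} N \models \Exists{\seq X} N'$ is trivial since $N' \subseteq N$; for the converse, given a model of $N'$, I would interpret $X$ as the total relation if $p = +$ and as the empty relation if $p = -$, so that every deleted clause is satisfied through its $X$-literal of polarity $p$ while clauses of $N'$ (which do not mention $X$) are unaffected.

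The nontrivial case is \textbf{purified clause deletion} $\mathrm{PurDel}_P$, where only the direction $\Exists{\seq X} N \models \Exists{\seq X}(N \union \mset{P})$ requires work. Write $P = \underline{L} \lor C$ with $\seq y$ its free variables; I treat the subcase $L = X(\seq t)$ (the case $L = \neg X(\seq t)$ is dual). Given a model $M$ of $N$ under an interpretation $\mathcal{I}(X)$, I would build an enlarged interpretation $\mathcal{I}'(X)$ by adjoining every tuple $\seq t[\seq y \leftarrow \seq b]$ for which $C[\seq y \leftarrow \seq b]$ fails under $M$, so that $P$ becomes true by construction. To see that an arbitrary $D \in N$ is preserved, suppose $D$ becomes false at some instantiation under $\mathcal{I}'$; since $\mathcal{I}' \supseteq \mathcal{I}$ the failure must originate from a \emph{newly} falsified literal $\neg X(\seq s)$ in $D$, so that $\seq s$ coincides with some $\seq t[\seq y \leftarrow \seq b]$ where $C[\seq y \leftarrow \seq b]$ was false under $\mathcal{I}$. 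The resolvent of $P$ against $D$ on this literal pair has the form $\seq t \noeq \seq s \lor C \lor D'$ and is redundant in $N$, hence holds in $M$; at the chosen instantiation its constraint part collapses to false, forcing either $C$ or a literal of $D'$ to hold, contradicting the assumed failure of $D$.

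The main obstacle is that $C$ or the residue $D'$ may themselves contain further occurrences of $X$, so enlarging $\mathcal{I}(X)$ can cascade. I would address this by defining $\mathcal{I}'(X)$ as the limit of iterating the enlargement above — as a least fixed point when the designated literal is positive and as a greatest fixed point when it is negative — and carry out the clause-preservation argument along this iteration, so that each application of resolvent redundancy discharges exactly one newly falsified $X$-instance at a time. I expect this semantic construction to mirror step-by-step the bottom-up witness extraction developed in \Cref{sec.witness-extraction}, so that this proposition should in fact also be obtainable as a corollary of the correctness of that extraction.
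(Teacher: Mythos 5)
Your handling of the inference rules, redundancy elimination and extended purity deletion coincides with the paper's: the paper proves soundness of $\mathrm{Res}$, $\mathrm{Fac}$ and $\mathrm{ConstrElim}$ by the same case split on the constraint, invokes the equivalence-preservation assumption for $\mathrm{RedElim}$, and for $\mathrm{ExtPurDel}_X^{\pm}$ substitutes $\lambda\seq{u}.\top$ resp. $\lambda\seq{u}.\bot$ for $X$, which is the syntactic form of your total/empty relation. For the proposition as a whole the paper takes exactly the route you conjecture in your closing sentence: it proves a transformation lemma $N' \imp N[\seq{X} \leftarrow T_S']$ for an explicit tuple of predicate expressions $T_S'$ (with $\resclosurepredicate{P}$ in the $\mathrm{PurDel}_P$ slot) and then evaluates these expressions in a model of $N'$ to obtain the required interpretation of $\seq{X}$; the proposition is thus a corollary of the witness-extraction machinery.

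The genuine gap is in your $\mathrm{PurDel}_P$ case, precisely at the point you flag as "the main obstacle". First, the one-step argument does not close: redundancy gives that the resolvent $\seq{t} \noeq \seq{s} \lor C \lor D'$ is entailed by $N$ and hence true under the \emph{original} interpretation $\mathcal{I}$, whereas the assumed failures of $D$ and of $C[\seq{y}\leftarrow\seq{b}]$ are relative to the enlarged interpretation(s); the literal of $D'$ (or $C$) that saves the resolvent under $\mathcal{I}$ may itself be a negative $X$-literal that is newly falsified, so the contradiction you want is not yet there and one must recurse — along the resolvent, which has acquired the extra $X$-literals of $C$. Second, the proposed repair by a "least/greatest fixed point" is not available off the shelf: the enlargement operator is not monotone when $C$ or other clauses of $N$ contain $X$ with both polarities, so one must instead work with the $\omega$-limit of the stages, check that $P$ holds at the limit (a finitely-many-literals argument), and run the preservation argument using entailment by $N$ of the \emph{entire iterated closure} $\resclosure{P}{D}$ rather than only the one-step resolvents that the side condition of $\mathrm{PurDel}_P$ mentions. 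This is exactly where the paper invests its effort: the structural description of $\resclosure{P}{C}$ in terms of $\resclosurepremise{P}{\seq{c}}$ (\Cref{resclosure-structure}) and \Cref{resclosurepredicate-lemma} are the syntactic packaging of your iteration, with the infinitary disjunction/conjunction of \Cref{res-neg-definition} playing the role of the limit. Your plan is the correct semantic shadow of that proof, but the decisive step is named rather than carried out, and as stated both the well-definedness of the fixed point and the clause-preservation argument along it are unproven.
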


We record the derivation steps performed during the saturation process using the following definition:
\begin{definition}
  Let $D := (D_i)_{1 \leq i \leq m}$ be a finite sequence of $\mathcal{C}$-derivation steps and let $N,N'$ be clause sets.
  We inductively define the concept of a \emph{$\mathcal{C}$-derivation from~$N$ with conclusion~$N'$}
  by: 
  (i) the empty sequence is a $\mathcal{C}$-derivation from $N$ with conclusion $N$ and 
  (ii) if $D=(D_i)_{1 \leq i \leq m}$ is a $\mathcal{C}$-derivation from~$N$ with conclusion $N'$ 
  and $D_{m+1}$ is a $\mathcal{C}$-derivation step from $N'$ to $N''$, 
  then $(D_i)_{1 \leq i \leq m+1}$ is a $\mathcal{C}$-derivation from $N$ with conclusion $N''.$

  We say \emph{$D$ is a $\mathcal{C}$-derivation from $N$}, if $D$ is a $\mathcal{C}$-derivation from $N$ with conclusion $N'$ for some clause set $N'$.
  We assume the $D_i$ contain enough information so that $N'$ is uniquely determined by $D$ and $N$ so we call $N'$ \emph{the conclusion of $D$ from $N$}.
  For a $\mathcal{C}$-derivation $D$ from $N$ we define the sequence $N(D) := (N(D)_i)_{0 \leq i \leq m}$ of intermediate clause sets in which
  $N(D)_i$ is defined to be the conclusion of the $\mathcal{C}$-derivation $(D_j)_{1 \leq j \leq i}$ from $N$.
  For a tuple of predicate variables $\seq{X}$ we say \emph{$D$ is $\seq{X}$-eliminating from $N$} if the conclusion of $D$ from $N$ contains no predicate variables from $\seq{X}$.
  A pointed clause $P$ is \emph{purified in $D$} if there is a derivation step $D_i$ with $D_i = \mathrm{PurDel}_P$.
\end{definition}

\begin{example}
  \label{ex.main-example.derivation}
  Consider the clause set $N$ with the clauses
  \begin{align*}
    (1)\ B(a,v) \qquad (2)\ X(a) \qquad (3)\ B(u,v) \lor \neg X(u) \lor X(v) \qquad  (4)\ \neg X(c)
  \end{align*}
  and suppose we want to eliminate $X$.
  We denote clauses by their numbers, e.g.,~$1$ refers to clause $B(a,v)$.
  Pointed clauses are referred to by dot notation, e.g., $3.2$ refers to the pointed clause $B(u,v) \lor \underline{\neg X(u)} \lor X(v)$.
  Applying SCAN, some resolvents from $N$ are
  \begin{align*}
    (5)\ a \noeq u \lor B(u,v) \lor X(v) \quad \text{($2$ with $3$)} \quad \text{and} \quad
    (6)\ & a \noeq c & \text{($2$ with $4$)}
  \end{align*}
  Note that clause~$5$ is semantically entailed by clause~$1$ because $1$ subsumes $5$ after constraint elimination.
  Let $D$ be this derivation:
  $$
    \Axiom$\mset{1,2,3,4}\fCenter$
    \RightLabel{$\mathrm{Res}_{2.1,4.1}$}
    \UnaryInf$\mset{1,2,3,4,6}\fCenter$
    \RightLabel{$\mathrm{PurDel}_{2.1}$}
    \UnaryInf$\mset{1,3,4,6}\fCenter$
    \RightLabel{$\mathrm{ExtPurDel}_X^{-}$}
    \UnaryInf$\mset{1,6}\fCenter$
    \DisplayProof
  $$
  The subscripts in $\mathrm{Res}$ indicate the corresponding resolution premises.
  Note that the pointed clause~$2.1$ is purified in $\mset{1,3,4,6}$ since $5$, the resolvent of $2$ with $4$, is redundant in $\mset{1,3,4,6}$.
  Thus the $\mathrm{PurDel}_{2.1}$ step is applicable to $\mset{1,2,3,4,6}$.
  As $\mset{1,6}$ does not contain $X$, we have that $D$ is an $X$-eliminating derivation.
\end{example}

\section{Witness Extraction from SCAN Derivations}
\label{sec.witness-extraction}

In this section we introduce the WSCAN algorithm (`W' for witness), as an extension of SCAN on clause sets by providing a witness for the second-order quantifier elimination on termination.
At first we also consider solutions to WSOQE that are not necessarily first-order and give them a name: A tuple of predicate expressions $\seq{\witness}$ is called an \emph{elimination witness}, or simply a \emph{witness for $\Exists{\seq{X}} \varphi$}, if it satisfies the WSOQE-condition \eqref{wsoqe-condition} for $\Exists{\seq{X}} \varphi$.

An illustration of how WSCAN works is given in the following diagram:
$$
\begin{array}{rcccccccl}
  N = N(D)_0
  &\xrightarrow{D_1}& N(D)_1
  &\xrightarrow{D_2}& \quad \dots \quad&
  N(D)_{m-1} &\xrightarrow{D_m}& \ N(D)_m \\[0.3em]
  \seq{\mathrm{wit}(D)} = \seq{\mathrm{wit}_{0}(D)}
  & \xleftarrow{T_{D_1}}& \seq{\mathrm{wit}_{1}(D)}
  & \xleftarrow{T_{D_2}}& \quad \dots \quad & \seq{\mathrm{wit}_{m-1}(D)} &\xleftarrow{T_{D_m}} 
  &\ \seq{\mathrm{wit}_m(D)}
\end{array}
$$
We start by saturating a given clause set $N$ using the derivation steps of the calculus and collect them in a derivation $D = (D_i)_{1 \leq i \leq m}$.
If saturation terminates we construct a witness $\seq{\mathrm{wit}(D)}$ for $\Exists{\seq{X}} N$.
To this aim we iteratively compute witnesses $\seq{\mathrm{wit}_i(D)}$ for $\Exists{\seq{X}} N(D)_i$ for all $0 \leq i \leq m$ starting from the back with~$\seq{\mathrm{wit}_m(D)}$.
Since $D$ resulted from saturating the input clause set $N$ we have that $N(D)_m$, i.e.,  the conclusion of $D$ from $N$, contains none of the variables in $\seq{X}$, otherwise a $\mathcal{C}$-derivation step would be applicable.
We now need an initial witness $\seq{\mathrm{wit}_m(D)}$ for $N(D)_m$.
Since $N(D)_m$ does not contain any of the variables of $\seq{X}$ choosing for $\seq{\mathrm{wit}_m(D)}$ any tuple of predicate expressions (compatible with the arities of $\seq{X}$) produces a witness.
We then iteratively transform witness $\seq{\mathrm{wit}_i(D)}$ of $N(D)_i$ to a witness $\seq{\mathrm{wit}_{i-1}(D)}$ of $N(D)_{i-1}$ using an operation $T_{D_i}$.
Finally, the tuple of predicates $\seq{\mathrm{wit}_0(D)}$ is a witness for $\Exists{\seq{X}} N$.

We now want to define the transformation $T_S$ for the different derivation steps $S$ in $\mathcal{C}$.
It has to satisfy the following property:
\begin{restatable}[Witness preservation lemma]{lemma}{witnessPreservationLemma}
  \label{witness-preservation-lemma}
  \ If $S$ is a $\mathcal{C}$-derivation step from~$N$ to $N'$ and $\seq{\witness}$ is a witness for $\Exists{\seq{X}} N'$,
  then $T_S(\seq{\witness})$ is a witness for $\Exists{\seq{X}} N$.
\end{restatable}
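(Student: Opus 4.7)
The plan is to prove the lemma by case analysis on the type of derivation step $S$, defining $T_S$ in each case and verifying that $T_S(\seq{\witness})$ witnesses $\Exists{\seq{X}} N$ whenever $\seq{\witness}$ witnesses $\Exists{\seq{X}} N'$. In every case, the direction $N[\seq{X} \leftarrow T_S(\seq{\witness})] \models \Exists{\seq{X}} N$ is immediate because instantiation always entails the existential closure; the real work lies in the converse, which I would attack by combining the assumed witness condition for $N'$ with \Cref{correctness-of-scan} (which gives $\Exists{\seq{X}} N \iff \Exists{\seq{X}} N'$) and a direct argument showing $N[\seq{X} \leftarrow T_S(\seq{\witness})] \iff N'[\seq{X} \leftarrow \seq{\witness}]$.

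For inference steps and redundancy elimination I would let $T_S$ be the identity. Constraint resolution, constraint factoring, and constraint elimination are sound in first-order logic, so an inference step $N \mapsto N \union \mset{C}$ satisfies $N \iff N \union \mset{C}$ as first-order formulas; redundancy elimination preserves equivalence by the second condition on the redundancy notion. Substituting $[\seq{X} \leftarrow \seq{\witness}]$ throughout these first-order equivalences and invoking the witness hypothesis for $N'$ closes these cases immediately.

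For extended purity deletion $\mathrm{ExtPurDel}_X^{p}$ I would modify $\seq{\witness}$ by replacing its $X$-component by $\lambda \seq{u}.\top$ when $p = +$ and by $\lambda \seq{u}.\bot$ when $p = -$, leaving all other components unchanged. Every clause of $N$ that contains $X$ then becomes a tautology after substitution (since all its $X$-occurrences have the same polarity), while $N'[\seq{X} \leftarrow T_S(\seq{\witness})] \iff N'[\seq{X} \leftarrow \seq{\witness}]$ because $X$ does not occur in $N'$. Chaining these two facts with \Cref{correctness-of-scan} gives the required equivalence.

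The main obstacle is purified clause deletion $\mathrm{PurDel}_P$ where $P = \underline{L(\seq{t}(\seq{v}))} \lor C(\seq{v})$ is an $X$-pointed clause. Inspired by Ackermann's Lemma, I would modify the $X$-component of $\seq{\witness}$ by forming the disjunction of $\witness_X(\seq{u})$ with $\Exists{\seq{v}}\bigl(\seq{u} \oeq \seq{t}(\seq{v}) \land \neg C(\seq{v})[\seq{X} \leftarrow \seq{\witness}]\bigr)$ in the positive case, and dually by taking a conjunction with the universal closure of $\seq{u} \oeq \seq{t}(\seq{v}) \limp C(\seq{v})[\seq{X} \leftarrow \seq{\witness}]$ in the negative case. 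The clause $P$ is then satisfied by construction. The delicate point is verifying that the remaining clauses of $N$ also stay satisfied after the substitution: the ground instances at which $T_S(\seq{\witness})$ differs from $\seq{\witness}$ on $X$ are precisely the instances at which a dual-literal clause in $N$ would enter into a constraint-resolution inference with $P$, and the purification condition ensures that every such resolvent is redundant in $N$ and hence entailed by $N[\seq{X} \leftarrow \seq{\witness}]$. The hard part will be turning this semantic intuition into a precise derivation, carefully tracking the simultaneous substitution inside $C(\seq{v})$ (which may itself contain further $\seq{X}$-literals and so requires a monotonicity argument), the universal closure of clauses, and the variable renaming between $P$ and the resolving clauses of $N$.
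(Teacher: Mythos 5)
Your overall skeleton matches the paper's: the instantiation direction is trivial, the converse is reduced via soundness and the witness hypothesis for $N'$ to a clause-set-level implication, and the cases $\mathrm{Res}$, $\mathrm{Fac}$, $\mathrm{ConstrElim}$, $\mathrm{RedElim}$ and $\mathrm{ExtPurDel}_X^{p}$ are handled exactly as in the paper (identity, respectively $\lambda\seq{u}.\top$ / $\lambda\seq{u}.\bot$), and correctly. (One caution: in the paper the equivalence $\Exists{\seq{X}} N \iff \Exists{\seq{X}} N'$ of \Cref{correctness-of-scan} is itself \emph{derived from} this lemma's machinery, so you should only use the direction $\Exists{\seq{X}} N \imp \Exists{\seq{X}} N'$, which follows from plain soundness since every step either adds entailed clauses or deletes clauses.)

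The genuine gap is in the $\mathrm{PurDel}_P$ case. Your single Ackermann-style modification coincides with the paper's $\resclosurepredicate{P}[\seq{X} \leftarrow \seq{\witness}]$ only when $P$ is \emph{one-sided}. If $C$ contains $X$-literals dual to the designated literal $L(\seq{t})$, your construction breaks already at the clause $P$ itself: writing $\beta$ for your new $X$-component in the positive case, $P[X \leftarrow \beta]$ reduces to $\neg C[\seq{X} \leftarrow \seq{\witness}] \lor C[X \leftarrow \beta]$, and since $\beta$ is \emph{larger} than $\witness_X$, monotonicity only transfers truth across the \emph{positive} occurrences of $X$ in $C$; the dual occurrences go the wrong way. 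Concretely, for $P = B(u,v) \lor \underline{\neg X(u)} \lor X(v)$ your recipe gives $\beta = \lambda w.\, \witness_X(w) \land \Forall{v}(B(w,v) \lor \witness_X(v))$, and in the model with domain $\{0,1,2\}$, $B = \{(0,2)\}$, $\witness_X = \{0,1\}$ one checks $\beta(0)$ holds, $\beta(1)$ fails and $B(0,1)$ is false, so $P[X \leftarrow \beta]$ is not valid — the iteration has to be continued one more step (and, in general, unboundedly often). This is exactly why the paper defines $\resclosurepredicate{P}$ via the full resolution closure $\resclosurepremise{P}{\seq{c}}$, a potentially \emph{infinite} conjunction/disjunction obtained by iterating your one-step construction, and why it needs \Cref{resclosurepredicate-lemma} together with a structural analysis of $\resclosure{P}{C}$ (and a duality argument between the two polarities) rather than a plain monotonicity argument. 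Your closing remark that the extra $\seq{X}$-literals in $C$ "require a monotonicity argument" is precisely where the proof cannot be completed as proposed; restricted to one-sided $P$, however, your construction and argument are essentially the paper's.
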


Before we give the formal definition of $T_S$, we give some intuition on how to satisfy \Cref{witness-preservation-lemma}.
For $S \in \mset{\mathrm{Res}, \mathrm{Fac}, \mathrm{ConstrElim}, \mathrm{RedElim}}$ we have $N \iff N'$.
Therefore we can use the same witness, i.e., we set $T_S(\seq{\witness})~=~\seq{\witness}$.
If we have $S = \mathrm{ExtPurDel}_X^{+}$ we see that $N' \iff N[X \leftarrow \lambda \seq{u}. \top]$ since every clause in $N$ that contains $X$ has a positive occurrence of $X$, making those clauses logically equivalent to~$\top$ under $[X \leftarrow \lambda \seq{u}. \top]$.
If $\seq{\witness}$ is a witness for $\Exists{\seq{X}} N'$,
then setting the $X$-component of $\seq{\witness}$ to $\lambda \seq{u}. \top$ and leaving the other components the same results in a witness for $N$.
Similarly, if $S = \mathrm{ExtPurDel}_X^{-}$ we set the $X$-component of $T_S(\seq{\witness})$ to $X$ to $\lambda \seq{u}. \bot$. Then $N' \iff N[X \leftarrow \lambda \seq{u}. \bot]$.

Defining the transformation $T_S$ for $S = \mathrm{PurDel}_P$ needs a bit of groundwork.
A central piece is the resolution closure of a clause set with respect to a pointed clause:

\begin{definition}
  Let $P$ be a pointed clause with designated literal $L$ and let $N$ be a clause set.
  We define $\resclosure{P}{N}$ to be the smallest set such that
  (i) $N \subseteq \resclosure{P}{N}$, and (ii)\label{def.res.res} if $C \in \resclosure{P}{N}$ and $R$ is a resolvent between $C$ and $P$ upon $L$, then $R \in \resclosure{P}{N}$.
\end{definition}
For a clause $C$ we define the abbreviation $\resclosure{P}{C}~\definedas~\resclosure{P}{\mset{C}}$
and for a literal $L$ we set $\resclosure{P}{L}~\definedas~\resclosure{P}{\mset{L}}$.

Furthermore for $P = \underline{L(\seq{t})} \lor C_P$ we define the set $\resclosurepremise{P}{\seq{c}}~\definedas~\resclosure{P}{L(\seq{c})^\perp}$, where $\seq{c}$ is a tuple of fresh constants.
The `$\mathrm{U}$' stands for unit and signifies that we consider the resolution closure with respect to $P$ starting with a unit clause dual to the designated literal of $P$.

To simplify the presentation of the set $\resclosure{P}{N}$ we often tacitly replace it by a logically equivalent clause set, for example by replacing a clause $C \in \resclosure{P}{N}$ by a logically equivalent clause $C'$, which results from $C$ by constraint elimination.

Some examples of $\resclosurepremise{P}{\seq{c}}$ are given in the following:
\begin{example}
  \label{ex.resolution-closure}
  Let $P = \underline{X(a)}$, then we have
  $\resclosurepremise{P}{c} = \mset{\neg X(c), a \noeq c}$ and for $P' = \underline{\neg X(a)} \lor \neg X(b)$ we get
  $\resclosurepremise{P'}{c} = \mset{X(c), a \noeq c \lor \neg X(b)}$.
  
  For $P'' = B(u,v) \lor \underline{\neg X(u)} \lor X(v)$ we get an infinite set: 
  \begin{align*}
    \resclosurepremise{P''}{c} = \{
    &X(c),\ B(c,v) \lor X(v), \\
    &B(c,v) \lor B(v,v') \lor X(v'), \\
    &B(c,v) \lor B(v,v') \lor B(v',v'') \lor X(v''), \dots \}.
  \end{align*}
\end{example}

In the definition of the clause set $\resclosurepremise{P}{\seq{c}}$ we use a tuple of constants $\seq{c}$ instead of variables as we want to avoid the implicit universal quantification over the clauses for these placeholders.
These constants are intended as arguments for a predicate expression $\resclosurepredicate{P}$ (`$\mathrm{p}$' for predicate) given by:
\begin{definition}
  \label{res-neg-definition}
  Let $P$ be an $X$-pointed clause. We set
  $$
    \resclosurepredicate{P} \definedas
    \begin{cases}
      \lambda \seq{u}. \Land_{C'(\seq{c}, \seq{v}) \in \resclosurepremise{P}{\seq{c}}} \Forall{\seq{v}} C'(\seq{u}, \seq{v})     & \text{ if } P = \underline{\neg X(\seq{t})} \lor C \text{ and} \\
      \lambda \seq{u}. \Lor_{C'(\seq{c}, \seq{v}) \in \resclosurepremise{P}{\seq{c}}} \Exists{\seq{v}} \neg C'(\seq{u}, \seq{v}) & \text{ if } P = \underline{X(\seq{t})} \lor C,
    \end{cases}
  $$
  where $\seq{u}$ is a tuple of fresh variables of the same size as $\seq{c}$.
\end{definition}

\begin{example}
  For $P = \underline{X(a)}$ and $P' = \underline{\neg X(a)} \lor \neg X(b)$ as in \Cref{ex.resolution-closure} we get $\resclosurepredicate{P} = \lambda u. X(u) \lor u \oeq a$ and $\resclosurepredicate{P'} = \lambda u. X(u) \land (u \noeq a \lor \neg X(b))$.
  If $\resclosurepremise{P}{\seq{c}}$ is infinite, as for $P''$ of \Cref{ex.resolution-closure}, then $\resclosurepredicate{P}$ is an infinite predicate expression.
\end{example}

A crucial property of $\resclosurepredicate{P}$ is the following:
\begin{restatable}{lemma}{resclosurepredicateLemma}
  \label{resclosurepredicate-lemma}
  Let $P$ be an $X$-pointed clause and let $C$ be a clause. 
  Then we have $\models P[X \leftarrow \resclosurepredicate{P}]$ and $\models \resclosure{P}{C} \limp C[X \leftarrow \resclosurepredicate{P}]$.
\end{restatable}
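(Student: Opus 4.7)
The plan is to prove both statements by a joint semantic contradiction argument, splitting on the polarity of the designated literal of $P$. I focus on the case $P = \underline{\neg X(\seq{t})} \lor C_P$; the case $P = \underline{X(\seq{t})} \lor C_P$ is dual, obtained by swapping $\land/\lor$ and $\Forall/\Exists$ and flipping polarities throughout.

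First I record two cheap structural facts about $\resclosurepredicate{P}$ that drive the ``easy'' sub-cases below. Since $X(\seq{c})$ itself lies in $\resclosurepremise{P}{\seq{c}}$, every $\resclosurepredicate{P}(\seq{a})$ has $X(\seq{a})$ as a conjunct; in particular $\models \resclosurepredicate{P}(\seq{a}) \limp X(\seq{a})$. Since the first resolvent $\seq{c}\noeq\seq{t}(\seq{y}) \lor C_P(\seq{y})$ also lies in $\resclosurepremise{P}{\seq{c}}$, instantiating $\seq{a} = \seq{t}(\sigma)$ and then $\seq{y} = \sigma$ collapses the constraint, so $\resclosurepredicate{P}(\seq{t}(\sigma))$ forces $C_P(\sigma)$ under the standard interpretation of $X$.

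The workhorse for the hard sub-case is a correspondence lemma: for every clause $C = X(\seq{s}(\seq{y})) \lor C^-$ with a distinguished positive $X$-literal and every $C'(\seq{c}, \seq{v}) \in \resclosurepremise{P}{\seq{c}}$, the clause $C^- \lor C'(\seq{s}(\seq{y}), \seq{v})$ lies in $\resclosure{P}{C}$ up to variable renaming. This is proved by induction on the resolution derivation of $C'$ from $X(\seq{c})$: each step resolving with $P$ on a positive $X$-literal of an intermediate clause in $\resclosurepremise{P}{\seq{c}}$ is mirrored by a corresponding resolution step in $\resclosure{P}{C}$ on the matching literal of the mimicked clause.

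The second statement is then proved by contradiction. Assume $M \models \resclosure{P}{C}$ yet $M, \sigma \not\models C[X \leftarrow \resclosurepredicate{P}]$; since $C \in \resclosure{P}{C}$, some literal $L$ of $C$ is true under $\sigma$ in $M$. If $L$ is not an $X$-literal, the substitution is a no-op on $L$ and we have an immediate contradiction. If $L = \neg X(\seq{s})$, the first structural fact gives $M, \sigma \models \neg\resclosurepredicate{P}(\seq{s})$, making $L[X \leftarrow \resclosurepredicate{P}]$ true, again a contradiction. If $L = X(\seq{s})$, the correspondence lemma yields $C^- \lor C'(\seq{s}, \seq{v}) \in \resclosure{P}{C}$ for each $C'$; combined with the same case analysis applied literal-by-literal to $C^-$, this forces $\Forall{\seq{v}} C'(\seq{s}(\sigma), \seq{v})$ to hold in $M$, so $M, \sigma \models \resclosurepredicate{P}(\seq{s}(\sigma))$, contradicting the falsity of $L[X \leftarrow \resclosurepredicate{P}]$. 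The first statement follows by a closely parallel argument applied to $P$ itself: the second structural fact extracts $C_P(\sigma)$ from any putative counterexample, and the same three-way case analysis applies to a true literal of $C_P$, with the hard sub-case now using iterated resolvents of $X(\seq{c})$ in $\resclosurepremise{P}{\seq{c}}$ in place of the correspondence lemma.

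The main obstacle is the $L = X(\seq{s})$ sub-case, where every conjunct of $\resclosurepredicate{P}(\seq{s}(\sigma))$ has to be verified simultaneously. The treatment of $C^-$ can reintroduce positive $X$-literals (and, in the first-statement version, so can $C_P$ itself), so the case analysis must be arranged as an outer induction---on the multiset of positive $X$-literals in $C$, or on the derivation depth inside $\resclosurepremise{P}{\seq{c}}$---to guarantee termination; this is particularly delicate when $\resclosurepremise{P}{\seq{c}}$ is infinite, as in \Cref{ex.resolution-closure}.
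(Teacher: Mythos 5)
Your outline shares the paper's key ingredients: your ``correspondence lemma'' is exactly part (ii) of the paper's \Cref{resclosure-structure} (closure of $\resclosure{P}{C}$ under substituting members of $\resclosurepremise{P}{\seq{c}}$ for $L^\perp$-literals), your two structural facts are used there as well, and the contradiction set-up for the second statement is the same in spirit. The genuine differences are organisational: the paper obtains the first statement as a consequence of the second (applied to the clause $\seq{c} \noeq \seq{t} \lor C_P$) rather than by a parallel argument, and it makes the ``dual'' positive case precise via an explicit duality lemma involving the substitution $X \leftarrow \lambda\seq{u}.\neg X(\seq{u})$ rather than an informal swap of connectives.

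There is, however, a real gap in your hard sub-case, and it is exactly the one you flag at the end without closing. Your literal-by-literal case analysis needs, for the $L = X(\seq{s})$ branch, that all of $C^-$ is false under $\sigma$; but from the falsity of $C^-[X \leftarrow \resclosurepredicate{P}]$ you can only infer the falsity of non-$X$-literals and negative $X$-literals of $C^-$ --- for a positive $X$-literal $X(\seq{r})$, falsity of $\resclosurepredicate{P}(\seq{r})$ does \emph{not} entail falsity of $X(\seq{r})$, since $\resclosurepredicate{P}$ is a strengthening of $X$. Recursing on such a literal sends you back to the same kind of goal, and neither of your two proposed measures is justified as stated: the clauses from $\resclosurepremise{P}{\seq{c}}$ reintroduce positive $X$-literals (so the multiset does not obviously shrink), and the derivation depth inside $\resclosurepremise{P}{\seq{c}}$ is unbounded precisely in the infinite case you worry about. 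The paper's resolution is part (iii) of \Cref{resclosure-structure}: $\resclosure{P}{C}$ consists of all clauses obtained from the decomposition $C = C'' \lor \Lor_{i=1}^n X(\seq{t_i})$ by \emph{independently and simultaneously} replacing every positive $X$-literal by an arbitrary member of $\resclosurepremise{P}{\seq{c}}$. One then picks, for each $i$, a falsified conjunct $C_i(\seq{t_i})$ witnessing the failure of $\resclosurepredicate{P}[\seq{t_i}]$, and the single clause $C'' \lor \Lor_{i=1}^n C_i(\seq{t_i})$ contradicts $\mathcal{M},\theta \models \resclosure{P}{C}$ in one step. The positive $X$-literals sitting inside the $C_i$ never need to be examined because those disjuncts are already falsified by the chosen assignments; this is the observation that makes the induction terminate and that your sketch is missing. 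Your first measure (positive $X$-literals of the \emph{original} $C$) can be made to work once the argument is reorganised this way, but as written the termination of your recursion is not established.
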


A way to view this lemma is (i) $X \leftarrow \resclosurepredicate{P}$ is always a satisfying assignment for $P$ and (ii) $X \leftarrow \resclosurepredicate{P}$ is a satisfying assignment for $C$, given that there is an assignment of $X$ that satisfies all clauses in $\resclosure{P}{C}$ (note that $X$ can occur freely in $\resclosure{P}{C}$ and $\resclosurepredicate{P}$).
The reader might gain an intuition for the lemma by applying it to the pointed clauses from the previous example.

Using \Cref{resclosurepredicate-lemma} we can show the witness preservation lemma (\Cref{witness-preservation-lemma}) for purified clause deletion steps using $\resclosurepredicate{P}$:
\begin{restatable}{lemma}{witnessPreservationLemmaPurDel}
  \label{witness-preservation-lemma.purdel}
  Let $N$ be a clause set, let $P$ be $X_i$-pointed and purified in $N$ and suppose $\seq{\witness} = (\witness_1, \dots, \witness_d)$ is a witness for $\Exists{\seq{X}} N$.
  Then a witness for $\Exists{\seq{X}}(N \union \mset{P})$ is given by $(\witness_1, \dots, \witness_{i-1}, \resclosurepredicate{P}[\seq{X} \leftarrow \seq{\witness}], \witness_{i+1}, \dots, \witness_d)$.
\end{restatable}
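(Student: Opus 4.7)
The plan is to establish the equivalence $\Exists{\seq{X}}(N \union \mset{P}) \iff (N \union \mset{P})[\seq{X} \leftarrow \seq{\beta}]$, writing $\seq{\beta}$ for the tuple obtained from $\seq{\witness}$ by replacing its $i$-th component with $\resclosurepredicate{P}[\seq{X} \leftarrow \seq{\witness}]$. The $(\Leftarrow)$ direction is immediate: any model of the right-hand side supplies $\seq{\beta}$ as an explicit witness for the existential quantification.

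For $(\Rightarrow)$, I would fix a model $\mathcal{M}$ of $\Exists{\seq{X}}(N \union \mset{P})$. Since $N \union \mset{P}$ entails $N$ as a clause set, we also have $\mathcal{M} \models \Exists{\seq{X}} N$, and the assumption that $\seq{\witness}$ is a witness for $\Exists{\seq{X}} N$ then gives $\mathcal{M} \models N[\seq{X} \leftarrow \seq{\witness}]$. It remains to verify that $\mathcal{M}$ satisfies both $P[\seq{X} \leftarrow \seq{\beta}]$ and $N[\seq{X} \leftarrow \seq{\beta}]$.

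The first conjunct, $P[\seq{X} \leftarrow \seq{\beta}]$, follows from the first part of \Cref{resclosurepredicate-lemma} (which gives $\models P[X_i \leftarrow \resclosurepredicate{P}]$) by applying the substitution $[\seq{X} \leftarrow \seq{\witness}]$; validity is preserved, and since $\resclosurepredicate{P}$ may contain $X_i$ free, the composition of the two nested substitutions unfolds precisely to $[\seq{X} \leftarrow \seq{\beta}]$. For the second conjunct, I would fix an arbitrary $C \in N$ and apply the second part of \Cref{resclosurepredicate-lemma}, namely $\models \resclosure{P}{C} \limp C[X_i \leftarrow \resclosurepredicate{P}]$; substituting $[\seq{X} \leftarrow \seq{\witness}]$ on both sides reduces the task to showing that every clause $R \in \resclosure{P}{C}$ is true in $\mathcal{M}$ under $[\seq{X} \leftarrow \seq{\witness}]$.

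This last reduction is where the main obstacle lies. The case $R = C \in N$ is immediate from the hypothesis on $\mathcal{M}$, and a first-level resolvent $\mathrm{Res}(P, C)$ is by the purification hypothesis redundant in $N$, hence $N$-entailed by entailment-compatibility of redundancy, hence true in $\mathcal{M}$ under the substitution. The hard part will be to extend this to iterated resolvents of arbitrary depth in $\resclosure{P}{C}$. My approach is induction on the derivation depth of $R$ within the closure, relying on the auxiliary claim that under the purification hypothesis the entire set $\resclosure{P}{N}$ is a semantic consequence of $N$; justifying this claim rigorously is the crux of the proof and requires combining the first-level redundancy guaranteed by purification with a careful propagation argument through the iterated resolution inferences, exploiting the closure structure of $\resclosure{P}{N}$ and the equivalence-preserving properties of the chosen redundancy criterion.
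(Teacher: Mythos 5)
Your proposal follows the same route as the paper's proof, step for step, up to the final reduction: the $(\Leftarrow)$ direction by instantiating the existential quantifier; the chain $\Exists{\seq{X}}(N \union \mset{P}) \imp \Exists{\seq{X}} N \imp N[\seq{X} \leftarrow \seq{\witness}]$ for $(\Rightarrow)$; the verification of $P[\seq{X} \leftarrow \seq{\beta}]$ from the first part of \Cref{resclosurepredicate-lemma} together with the composition of the two substitutions; and the verification of each $C[\seq{X} \leftarrow \seq{\beta}]$ for $C \in N$ from the second part of \Cref{resclosurepredicate-lemma}, which reduces everything to $N[\seq{X} \leftarrow \seq{\witness}] \imp \resclosure{P}{C}[\seq{X} \leftarrow \seq{\witness}]$. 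Up to this point the two arguments coincide exactly.

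Where you stall is the one step the paper dispatches in a single sentence: the paper takes the purification hypothesis to mean that \emph{every} clause of $\resclosure{P}{N}$ --- not only the first-level resolvents of $P$ with clauses of $N$ --- is redundant in $N$, so that compatibility of redundancy with entailment yields $N \imp \resclosure{P}{C}$ at once, with no induction on resolution depth. Your instinct that a depth-induction cannot be run on entailment alone is sound: a depth-two resolvent is a semantic consequence of $P$ together with a depth-one resolvent, and the model at hand satisfies $N[\seq{X} \leftarrow \seq{\witness}]$ but need not satisfy $P[\seq{X} \leftarrow \seq{\witness}]$, so soundness of resolution does not propagate truth down the closure. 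What your ``careful propagation argument'' would actually have to supply is that \emph{redundancy itself} (not mere $N$-entailment) is preserved when resolving with $P$ --- for instance via the standard subsumption-lifting argument: if a clause $R$ of the closure is subsumed by some $C' \in N$, then each resolvent of $P$ with $R$ is subsumed either by $C'$ or by a resolvent of $P$ with $C'$, and the latter is redundant by purification. Since you explicitly leave this claim unjustified, the proposal is incomplete at precisely this point; everything else matches the paper's argument, which itself resolves the issue by assertion rather than by the induction you envisage.
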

\begin{proof}[Proof sketch]
  To avoid notational overhead we give a proof sketch under the simplifying assumption $d=1$, i.e., we only consider a single variable $X$.

  To show that $\witness_P := \resclosurepredicate{P}[X \leftarrow \witness]$ is a witness for $N_P := N \union \mset{P}$ we need to show that both directions of the WSOQE-condition \eqref{wsoqe-condition} hold.
  The direction $\models N_P[X \leftarrow \witness_P] \limp \Exists{X} N_P$ holds as $\witness_P$ is an instance of the existential quantifier.
  Now it suffices to show $\models \Exists{X} N_P \limp N_P[X \leftarrow \witness_P]$.
  Since $N \subseteq N_P$ we have $\models \Exists{X} N_P \limp \Exists{X} N$
  and because $\witness$ is a witness for $\Exists{X} N$ we have $\models \Exists{X} N \limp N[X \leftarrow \witness]$.
  It remains to show $\models N[X \leftarrow \witness] \limp N_P[X \leftarrow \witness_P]$.
  We reduce this to $\models P[X \leftarrow \witness_P]$ and $\models N[X \leftarrow \witness] \limp C[X \leftarrow \witness_P]$ for all $C \in N$.
  
  Using the first statement of \Cref{resclosurepredicate-lemma} we get $\models P[X \leftarrow \resclosurepredicate{P}]$.
  Applying the substitution which sends $X$ to $\witness$ results in $\models P[X \leftarrow \resclosurepredicate{P}][X \leftarrow \witness]$.
  Note that applying $[X \leftarrow \witness]$ after $[X \leftarrow \resclosurepredicate{P}]$ is the same as applying $[X \leftarrow \resclosurepredicate{P}[X \leftarrow \witness]]$.
  Thus $\models P[X \leftarrow \witness_P]$.

  To show $\models N[X \leftarrow \witness] \limp C[X \leftarrow \witness_P]$ we use that $P$ is purified in $N$ to get that all clauses in $\resclosure{P}{N}$, 
  and in particular all clauses in $\resclosure{P}{C} \subseteq \resclosure{P}{N}$, are redundant in $N$.
  Therefore $\models N \limp \resclosure{P}{C}$ and we can apply the second statement of \Cref{resclosurepredicate-lemma} to get $\models N \limp C[X \leftarrow \resclosurepredicate{P}]$.
  Again, applying $[X \leftarrow \witness]$ results in the desired statement $\models N[X \leftarrow \witness] \limp C[X \leftarrow \witness_P]$.
\end{proof}

We now give the formal definition of the witness transformation:
\begin{restatable}{definition}{witnessTransformationDefinition}
  \label{ext-definition}
  Let $S$ be a $\mathcal{C}$-derivation step and let $\seq{X} = (X_1, \dots, X_d)$ be a tuple of predicate variables.
  For $1 \leq i \leq d$, let $\seq{u_i}$ be a tuple of distinct variables, the size of which is the arity of $X_i$.
  Also, let $\seq{\witness} = (\witness_1, \dots, \witness_d)$ be a tuple of predicate expressions such that $\witness_i$ has the same arity as $X_i$ for $1 \leq i \leq d$. 
  We define the predicate expression $T_S(\seq{\witness})$ by a case distinction on $S$ as follows:
  $$
  \begin{array}{rll}
    T_{S}(\seq{\witness}) &:= &\seq{\witness} \text{ for $S \in \mset{\mathrm{Res}, \mathrm{Fac}, \mathrm{ConstrElim}, \mathrm{RedElim}}$}, \\
    T_{\mathrm{ExtPurDel}^{+}_{X_i}}(\seq{\witness}) &:= &(\witness_1, \dots, \witness_{i-1}, \lambda \seq{u_i}. \top, \witness_{i+1}, \dots, \witness_d),\\
    T_{\mathrm{ExtPurDel}^{-}_{X_i}}(\seq{\witness}) &:= &(\witness_1, \dots, \witness_{i-1}, \lambda \seq{u_i}. \bot, \witness_{i+1}, \dots, \witness_d), \text{ and}\\
    T_{\mathrm{PurDel}_P}(\seq{\witness}) &:= &(\witness_1, \dots, \witness_{i-1}, \resclosurepredicate{P}[\seq{X} \leftarrow \seq{\witness}], \witness_{i+1}, \dots, \witness_d), \\
    & &\text{if $P$ is $X_i$-pointed}.
  \end{array}
  $$
\end{restatable}
Above we explained why the witness preservation lemma holds for these $T_S(\seq{\witness})$.
We now construct witnesses iteratively given a $\mathcal{C}$-derivation $(D_i)_{1 \leq i \leq m}$ starting from the back:
\begin{restatable}{definition}{derivationWitnessDefinition}
  \label{witness-definition}
  Let $N(X_1, \dots, X_d)$ be a clause set and let $D = (D_i)_{1 \leq i \leq m}$ be a $\mathcal{C}$-derivation from $N$. 
  For $1 \leq i \leq d$ let $W_i$ be a predicate variable with the same arity as $X_i$.
  We inductively define a tuple $\seq{\mathrm{wit}_i(D)}$ of predicate expressions, for $0 \leq i \leq m$, starting with $\seq{\mathrm{wit}_m(D)}$ as the base case by
  $$
  \begin{aligned}
    \seq{\mathrm{wit}_{m}(D)} &:= (W_1, \dots, W_d) 
    \quad \text{and} \quad \seq{\mathrm{wit}_{i-1}(D)} &:= T_{D_i}(\seq{\mathrm{wit}_i(D)}).
  \end{aligned}
  $$
  Finally, the witness for $\Exists{\seq{X}} N$ is given by $\seq{\mathrm{wit}(D)} := \seq{\mathrm{wit}_0(D)}$.
\end{restatable}

If $D = (D_i)_{1 \leq i \leq m}$ is an $\seq{X}$-eliminating $\mathcal{C}$-derivation from $N$, 
the conclusion of $D$ does not contain any of the predicate variables in $\seq{X}$.
Thus, any tuple of predicate expressions (compatible with the arities in $\seq{X}$) provides a witness for the conclusion of $D$. 
The predicate variables $W_i$ in the above definition serve as placeholders for this arbitrary choice of witness and any instantiation of these predicate variables provides a witness.
We call $W_i$ the \emph{witness parameter corresponding to $X_i$}. 
The $\seq{\mathrm{wit}_i(D)}$ are witnesses for the intermediate clause sets $N(D)_i$ which can be proved by an inductive argument on the length of $D$ using \Cref{witness-preservation-lemma}.

Our main result is:
\begin{restatable}{theorem}{mainresult}
  \label{witnesses-from-eliminating-scan-derivations}
  If $D$ is an $\seq{X}$-eliminating $\mathcal{C}$-derivation from $N$, then $\seq{\mathrm{wit}(D)}$ is a witness for $\Exists{\seq{X}} N$.
\end{restatable}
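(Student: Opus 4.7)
The plan is to prove by backward (descending) induction on $i$ that, for every $0 \le i \le m$, the tuple $\seq{\mathrm{wit}_i(D)}$ is a witness for $\Exists{\seq{X}} N(D)_i$. The theorem is then the special case $i = 0$, since $N(D)_0 = N$ and $\seq{\mathrm{wit}(D)} = \seq{\mathrm{wit}_0(D)}$ by definition.

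For the base case $i = m$, I would use that $D$ is $\seq{X}$-eliminating, so the conclusion $N(D)_m$ contains no predicate variables from $\seq{X}$. Consequently, for any tuple of predicate expressions $\seq{\beta}$ of matching arities, the substitution $[\seq{X} \leftarrow \seq{\beta}]$ acts as the identity on $N(D)_m$, so $\Exists{\seq{X}} N(D)_m \iff N(D)_m \iff N(D)_m[\seq{X} \leftarrow \seq{\beta}]$. In particular this holds for $\seq{\beta} = (W_1, \dots, W_d) = \seq{\mathrm{wit}_m(D)}$, confirming the base case.

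For the inductive step, assume $\seq{\mathrm{wit}_i(D)}$ is a witness for $\Exists{\seq{X}} N(D)_i$. The derivation step $D_i$ is by construction a $\mathcal{C}$-derivation step from $N(D)_{i-1}$ to $N(D)_i$, so the Witness Preservation Lemma (\Cref{witness-preservation-lemma}) applies and yields that $T_{D_i}(\seq{\mathrm{wit}_i(D)})$ is a witness for $\Exists{\seq{X}} N(D)_{i-1}$. But by \Cref{witness-definition} this is precisely $\seq{\mathrm{wit}_{i-1}(D)}$, completing the induction.

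Essentially all the real content has been pushed into \Cref{witness-preservation-lemma}, and in particular into \Cref{witness-preservation-lemma.purdel} for the purified clause deletion case. The only part of the present theorem that requires separate care is the base case, where one must check that a tuple of fresh predicate variables is indeed a witness for a formula whose clause set does not mention $\seq{X}$; this is immediate from the definition of the substitution $[\seq{X} \leftarrow \seq{\mathrm{wit}_m(D)}]$. No further obstacle is expected, since the definition of $\seq{\mathrm{wit}_i(D)}$ is precisely engineered to match the statement of the Witness Preservation Lemma step by step.
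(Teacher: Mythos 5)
Your proposal is correct and follows essentially the same route as the paper: the paper also proves the theorem by induction along the derivation (peeling off the first step and applying the induction hypothesis to the tail), with the base case handled by noting that the $\seq{X}$-free conclusion makes any substitution a witness, and the inductive step delegated entirely to \Cref{witness-preservation-lemma}. Your descending induction on the index $i$ of the intermediate clause sets is just a rephrasing of that same argument, which the paper itself sketches right before the theorem statement.
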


Here is an example illustrating the algorithm:
\begin{example}
  \label{ex.main-example}
  Remember the clause set $N$ from \Cref{ex.main-example.derivation}:
  \begin{align*}
    (1)\ B(a,v) \qquad (2)\ X(a) \qquad (3)\ B(u,v) \lor \neg X(u) \lor X(v) \qquad  (4)\ \neg X(c)
  \end{align*}
  and suppose we want to find a witness for $X$.
  Some resolvents from $N$ are
  \begin{align*}
    (5)\ a \noeq u \lor B(u,v) \lor X(v) \quad \text{($2$ with $3$)} \quad \text{and} \quad
    (6)\ & a \noeq c & \text{($2$ with $4$)}.
  \end{align*}
  Also recall the derivation $D$:
  $$
    \Axiom$\mset{1,2,3,4}\fCenter$
    \RightLabel{$\mathrm{Res}_{2.1,4.1}$}
    \UnaryInf$\mset{1,2,3,4,6}\fCenter$
    \RightLabel{$\mathrm{PurDel}_{2.1}$}
    \UnaryInf$\mset{1,3,4,6}\fCenter$
    \RightLabel{$\mathrm{ExtPurDel}_X^{-}$}
    \UnaryInf$\mset{1,6}\fCenter$
    \DisplayProof
  $$

  In computing $\mathrm{wit}_i(D)$,
  we freely use equivalences of the form $\bot \lor \varphi \iff \varphi$, $\top \land \varphi \iff \varphi$ and $\neg \neg \varphi \iff \varphi$ as well as commutativity and associativity of $\land$ and $\lor$ to simplify the constructed witnesses.
  We use $W_X$ to denote the witness parameter corresponding to $X$.
  The steps to construct $\mathrm{wit}(D)$ are:
  \renewcommand{\arraystretch}{1.0}
  $$
  \begin{array}{r@{\quad}l@{\quad}l@{\quad}l@{\quad}l}
    i\ & N(D)_i & D_{i+1} & T_{D_{i+1}}(\witness) &  \mathrm{wit}_{i}(D) \\
    \hline
    3\ & \mset{1,6} & - & - & W_X \\
    2\ & \mset{1,3,4,6} & \mathrm{ExtPurDel}_X^{-} & \lambda u. \bot & \lambda u. \bot  \\
    1\ & \mset{1,2,3,4,6} & \mathrm{PurDel}_{2.1} &  \resclosurepredicate{2.1}[X \leftarrow \witness] & \lambda u. u \oeq a \\
    0\ & \mset{1,2,3,4} & \mathrm{Res}_{2.1,4.1} & \witness & \lambda u. u \oeq a
  \end{array}
  $$
  \renewcommand{\arraystretch}{1.0}%
  Note that $\resclosurepremise{2.1}{d} = \mset{\neg X(d), d \noeq a}$ and thus $\resclosurepredicate{2.1} = \lambda u. X(u) \lor u \oeq a$.
  Therefore $\mathrm{wit}(D) = \lambda u. u \oeq a$ is a WSOQE-witness for $\Exists{X} N$.
\end{example}

\section{First-order Witnesses}
\label{sec.first-order-witnesses}

At this point it is not clear that the components of $\seq{\mathrm{wit}(D)}$ are always first-order predicates as the expression $\resclosurepredicate{P}$ can be infinite.
In fact, there are $\seq{X}$-eliminating derivations where the predicate expressions in $\seq{\mathrm{wit}(D)}$ are not equivalent to first-order predicates.
This can be avoided, if for all purified clauses~$P$ in $D$ there are first-order predicates $F_P$ equivalent to $\resclosurepredicate{P}$:
\begin{definition}
  Let $D$ be a $\mathcal{C}$-derivation from $N$.
  Let $F$ be a mapping that assigns to all purified pointed clauses $P$ of $D$ a first-order predicate $F_P$ with $\resclosurepredicate{P} \iff F_P$.
  Then $F$ is called a \emph{first-order annotation for $D$}.
  In that case, denote by $\seq{\mathrm{wit}(D,F)}$ the tuple of first-order predicate expressions resulting from $\seq{\mathrm{wit}(D)}$ where every occurrence of $\resclosurepredicate{P}$ is replaced by $F_P$. 
\end{definition}

Now the first-order version of our main result is:
\begin{restatable}{corollary}{finiteMainResult}
  \label{finite-witnesses-from-eliminating-scan-derivations}
  If $D$ is an $\seq{X}$-eliminating $\mathcal{C}$-derivation from $N$ and $F$ is a first-order annotation for $D$,
  then $\seq{\mathrm{wit}(D,F)}$ is a first-order witness for $\Exists{\seq{X}} N$.
\end{restatable}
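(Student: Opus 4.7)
The plan is to reduce this corollary to \Cref{witnesses-from-eliminating-scan-derivations} by observing that $\seq{\mathrm{wit}(D,F)}$ differs from $\seq{\mathrm{wit}(D)}$ only in that each (possibly infinite) expression $\resclosurepredicate{P}$ has been replaced by its logically equivalent first-order counterpart $F_P$. Two things then need to be checked: (i) $\seq{\mathrm{wit}(D,F)}$ is still a witness for $\Exists{\seq{X}} N$, and (ii) its components are first-order predicates.

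For (i), I would prove by induction on $m - i$, i.e., starting from the end of the derivation and working backward, that $\seq{\mathrm{wit}_i(D,F)}$ and $\seq{\mathrm{wit}_i(D)}$ are componentwise equivalent, where $\seq{\mathrm{wit}_i(D,F)}$ denotes the tuple obtained from $\seq{\mathrm{wit}_i(D)}$ by applying the $F$-replacement. The base case $i = m$ is immediate since both tuples equal $(W_1, \dots, W_d)$. The inductive step uses a compositionality property of logical equivalence: if $\alpha \iff \alpha'$ and $\seq{\witness}$ is componentwise equivalent to $\seq{\witness}'$, then $\alpha[\seq{X} \leftarrow \seq{\witness}] \iff \alpha'[\seq{X} \leftarrow \seq{\witness}']$; the analogous statements for the identity, $\lambda \seq{u}. \top$, and $\lambda \seq{u}. \bot$ cases are trivial. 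The only nontrivial case is a $\mathrm{PurDel}_P$ step, where we combine $\resclosurepredicate{P} \iff F_P$ (which holds by the definition of a first-order annotation) with the inductive hypothesis applied to $\seq{\witness}$. Once componentwise equivalence of the final tuples $\seq{\mathrm{wit}(D)}$ and $\seq{\mathrm{wit}(D,F)}$ is established, the WSOQE-condition for the former provided by \Cref{witnesses-from-eliminating-scan-derivations} transfers to the latter.

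For (ii), I would again proceed by backward induction, showing that each component of $\seq{\mathrm{wit}_i(D,F)}$ is first-order. The base tuple $(W_1, \dots, W_d)$ is first-order by convention (the $W_i$ serve as first-order placeholder predicate symbols), and each annotated witness transformation introduces only $\lambda \seq{u}. \top$, $\lambda \seq{u}. \bot$, or $F_P[\seq{X} \leftarrow \seq{\witness}]$ on top of the identity; substituting first-order predicates into a first-order predicate again yields a first-order predicate, so first-orderness is preserved along the derivation. The main obstacle I expect, and the reason the corollary is not entirely immediate from the main theorem, is making the compositionality-of-equivalence-under-substitution argument rigorous when $\resclosurepredicate{P}$ is an infinite conjunction or disjunction: replacing $\resclosurepredicate{P}$ by $F_P$ can change the syntactic shape drastically, so the cleanest route is to argue at the level of Tarski models, where logical equivalence is a semantic congruence and the infinite connectives pose no additional difficulty.
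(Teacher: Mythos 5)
Your proposal is correct and takes essentially the same route as the paper, which simply states that the corollary follows directly from \Cref{witnesses-from-eliminating-scan-derivations} because predicate expressions are replaced by logically equivalent first-order predicates. Your backward induction merely spells out the semantic-congruence-under-substitution argument that the paper leaves implicit, so there is no substantive difference in approach.
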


To find such a first-order annotation $F$ we compute for all purified pointed clauses $P$ a finite clause set $N_P$ logically equivalent to $\resclosurepremise{P}{\seq{c}}$.
Then replacing $\resclosurepremise{P}{\seq{c}}$ by $N_P$ in the definition of $\resclosurepredicate{P}$ results in a first-order predicate expression $F_P$ equivalent to $\resclosurepredicate{P}$.
We use SCAN's purification process to find such finite clause sets $N_P$:
Let $P = \underline{L(\seq{t})} \lor C$.
Start with the clause set $\mset{L(\seq{c})^\perp}$ and saturate it with respect to constraint resolution with $P$.
In addition we use constraint and redundancy elimination to simplify the intermediate clause sets.
If at some point all constraint resolvents of the clause set with $P$ are redundant, we obtain a finite clause set $N_P$ which is logically equivalent to $\resclosurepremise{P}{\seq{c}}$.
However, this process might not terminate as is the case for $P''$ from \Cref{ex.resolution-closure}.
Thus it is unclear whether computing a first-order annotation $F$ is always possible.

To get a concrete strategy for constructing $\mathcal{C}$-derivations where \Cref{finite-witnesses-from-eliminating-scan-derivations} is applicable we direct our search to a decidable property which is sufficient to guarantee first-order annotations.
One such property is:
\begin{definition}
  An $X$-pointed clause $P$ is called \emph{one-sided} if $X$ occurs only positively or only negatively in $P$.
  A $\mathcal{C}$-derivation $D$ is called \emph{one-sided} if all purified pointed clauses in $D$ are one-sided (not necessarily with the same polarity).
\end{definition}

Derivation $D$ in \Cref{ex.main-example} is one-sided as its only purified pointed clause~$2.1$ is one-sided.
In general, if $P = \underline{L(\seq{t})} \lor C$ is one-sided,
then we have $\resclosurepremise{P}{\seq{c}} = \mset{L(\seq{c})^\perp, \seq{t} \noeq \seq{c} \lor C}$, i.e., $\resclosurepremise{P}{\seq{c}}$ is finite.
Thus $P \mapsto \resclosurepredicate{P}$ is a first-order annotation for one-sided $\mathcal{C}$-derivations.
One-sidedness of a pointed clause $P$ is not necessary for the existence of a first-order predicate equivalent to $\resclosurepredicate{P}$, e.g., for $P = \underline{\neg X(u,v)} \lor X(v,u)$ we have $\resclosurepremise{P}{c,d} \iff \mset{X(c,d), X(d,c)}$, therefore $F_P = \lambda u \lambda v. X(u,v) \land X(v,u)$ satisfies $F_P \iff \resclosurepredicate{P}$.

As a special case of \Cref{finite-witnesses-from-eliminating-scan-derivations} we get:
\begin{restatable}{corollary}{oneSidedMainResult}
  \label{finite-witnesses-from-one-sided-scan-derivations}
  If $D$ is a one-sided $\seq{X}$-eliminating $\mathcal{C}$-derivation from $N$,
  then $\seq{\mathrm{wit}(D)}$ is a first-order witness for $\Exists{\seq{X}} N$.
\end{restatable}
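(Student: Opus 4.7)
My plan is to derive this corollary directly from Corollary~\ref{finite-witnesses-from-eliminating-scan-derivations}. The strategy is to exhibit an explicit first-order annotation $F$ for $D$ by setting $F_P := \resclosurepredicate{P}$ for every purified pointed clause $P$ occurring in $D$, and then to verify that each such $\resclosurepredicate{P}$ is already first-order so that the substitution $\resclosurepredicate{P} \mapsto F_P$ in the definition of $\seq{\mathrm{wit}(D,F)}$ leaves $\seq{\mathrm{wit}(D)}$ unchanged.

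The key step is the following observation about one-sided pointed clauses: for a one-sided $X$-pointed clause $P = \underline{L(\seq t)} \lor C$, the set $\resclosurepremise{P}{\seq c}$ is finite, and after the usual simplifications is just $\{L(\seq c)^\perp,\ \seq t \noeq \seq c \lor C\}$. I would argue this by unwinding the definition of $\resclosurepremise{P}{\seq c} = \resclosure{P}{L(\seq c)^\perp}$: starting from the unit clause $L(\seq c)^\perp$, the unique constraint resolvent against $P$ on its designated literal $L$ is $\seq t \noeq \seq c \lor C$; no further resolution step against $P$ on $L$ is possible, since resolution on $L$ would require the current clause to contain a literal dual to $L$, and one-sidedness of $P$ ensures that $C$ carries no such $X$-literal, while the constraint part $\seq t \noeq \seq c$ contains none either. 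Hence the closure saturates after one step. Both polarities of $L$ are handled symmetrically.

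Once finiteness of $\resclosurepremise{P}{\seq c}$ is established, \Cref{res-neg-definition} immediately expresses $\resclosurepredicate{P}$ as a finite conjunction (respectively, finite disjunction) of universal (respectively, existential) closures of first-order clauses, so $\resclosurepredicate{P}$ is a first-order predicate expression. Therefore $F : P \mapsto \resclosurepredicate{P}$ is a first-order annotation for $D$; and since $F_P = \resclosurepredicate{P}$, we have $\seq{\mathrm{wit}(D,F)} = \seq{\mathrm{wit}(D)}$ by construction. Applying Corollary~\ref{finite-witnesses-from-eliminating-scan-derivations} then yields that $\seq{\mathrm{wit}(D)}$ is a first-order witness for $\Exists{\seq X} N$, as required. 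The only real obstacle is the saturation argument for $\resclosurepremise{P}{\seq c}$; this is routine, and is in fact already sketched in the paragraph following the definition of one-sidedness, so no essential new work is needed beyond setting up the annotation and invoking the preceding corollary.
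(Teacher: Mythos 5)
Your proposal is correct and follows the paper's proof exactly: the paper likewise establishes (as a separate lemma) that one-sidedness forces $\resclosurepremise{P}{\seq{c}} = \mset{L(\seq{c})^\perp, \seq{c} \noeq \seq{t} \lor C}$ to be finite, takes $F_P = \resclosurepredicate{P}$ as the first-order annotation, notes $\seq{\mathrm{wit}(D,F)} = \seq{\mathrm{wit}(D)}$, and invokes \Cref{finite-witnesses-from-eliminating-scan-derivations}. No differences worth noting.
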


Since one-sidedness is a decidable property, this suggests a strategy to construct $\seq{X}$-eliminating $\mathcal{C}$-derivations which induce WSOQE-witnesses: Run SCAN in such a way that only one-sided pointed clauses $P$ are purified.
If SCAN terminates, the resulting derivation will also be one-sided, ensuring a first-order witness $\seq{\mathrm{wit}(D)}$.
However, it is open whether this affects completeness of the calculus (see Problem \ref{completeness-of-onesided-derivations}).

Furthermore for one-sided $\seq{X}$-eliminating derivations $D$ one can show a tight exponential upper bound on the size of $\seq{\mathrm{wit}(D)}$ in terms of the length of $D$.
If every purified pointed clause in $D$ contains at most one $\seq{X}$-literal, the bound is even linear.

\section{Implementation}
\label{sec.implementation}

The presented method is implemented in the Scala programming language~\cite{scala} as part of the General Architecture for Proof Theory (GAPT) software package~\cite{GAPT} which provides useful data structures to develop algorithms in computational logic, including terms, formulas and clauses.
The prototype is available in version 2.18.1 of GAPT (\texttt{https://www.logic.at/gapt/release\_archive.html}).
Section 8.4 of the GAPT user manual details how to use the prototype.

To evaluate our witness construction method we have implemented the SCAN algorithm which, in addition to finding a logically equivalent first-order clause set, provides a $\mathcal{C}$-derivation which is then used by our method to extract a witness.

The saturation loop proceeds by picking a pointed clause $P = \underline{L(\seq{t})} \lor C$ from the active clause set and performs purification, i.e., adds all non-redundant resolvents between $P$ and the rest of the clause set while eagerly applying constraint elimination, tautology deletion, subsumption deletion, as well as inferring all non-redundant constraint factors of newly added clauses.
Once $P$ is purified, it is deleted from the active clause set.
Since this process might not terminate we provide parameters that limit the inferences to be performed.
If the input clause set is not saturated within this limit the procedure reports an error.
Since one-sided derivations guarantee first-order witnesses, we also provide an option to only apply the purification process to one-sided pointed clauses.
During the saturation process all performed steps are stored in a $\mathcal{C}$-derivation, modelled as the initial clause set together with a list of $\mathcal{C}$-derivation steps.

The choice of pointed clause determines the outcome of the saturation process.
Different choices can lead to termination or non-termination.
This also affects which $\mathcal{C}$-derivations are found.
To find all possible $\mathcal{C}$-derivations from a given clause set, 
we have implemented backtracking on top of the saturation process to be able to get $\mathcal{C}$-derivations based on the different choices of pointed clauses.
This backtracking can be performed on-demand, and allows to find multiple derivations and therefore multiple witnesses.

The witness extraction method recurses on the produced $\mathcal{C}$-derivation following \Cref{witness-definition}.
We implemented $T_S$ according to \Cref{ext-definition} where $\resclosurepredicate{P}$ is computed by using the same purification process as described above, but starting with the clause set consisting of the single literal $L(\seq{c})^\perp$.
This means we also use redundancy criteria and constraint factoring inferences to simplify the clause set $\resclosurepremise{P}{\seq{c}}$.
Since $\resclosurepredicate{P}$ is potentially infinite we also provide parameters that limit the inference steps performed in its computation.
If the witness computation succeeds, it returns a substitution that maps the eliminated symbols to the computed first-order witness predicates.

Our prototype implementation has been tested on 26 examples from the file
\texttt{examples/predicateEliminationProblems.scala} included in GAPT 2.18.1.
These examples were either created by us or selected from the literature and
possibly modified. Tests were run using the GAPT shell with the Azul Zulu
Java JDK 21.0.2 on a MacBook Pro with an M2 Pro CPU and 32GB RAM. We run the \texttt{wscan} method with default parameters, measuring execution
time, input clause set size, and resulting witness size (if found).
We measure the size of a literal by the number of non-logical symbols it contains.
The size of a clause is given by the sum of the sizes of its literals and the size of a clause set is given by the sum of the sizes of its clauses.
The size of a witness is measured by the number of logical and non-logical symbols it contains.
See the \texttt{wscanTest} method in the file \texttt{examples/predicateEliminationProblems.scala} for the exact test protocol. 

The results are the following: The input size ranged from 2 to 177 with an average of 21.58.
For 21 of the 26 examples a witness could be computed.
In the remaining five cases a user-set timeout based on the number of derivation steps was reached
 without finding a derivation.
For the 21 terminating examples the running times ranged from 0.03 milliseconds to 150.60
 milliseconds with an average of 14.96 milliseconds.
The witness sizes ranged from 0 (this example had no second-order quantifiers) to 220 with an average of 21.50.

We have not optimized our implementation for large clause sets.
Furthermore, we are not aware of a sufficiently large database of examples to test our method on.
This means a thorough empirical evaluation is left as future work.

\section{Discussion and Future Work}\label{sec.discussion}

\subsubsection{Constructing multiple witnesses.}
Different $\mathcal{C}$-derivations can result in different, non-equivalent witnesses. Thus one can use our method to construct multiple witnesses for a given input clause set.
As already mentioned, to get different $\mathcal{C}$-derivations our prototype performs backtracking on different choices of pointed clauses in the purification process of SCAN.

Another approach to finding multiple witnesses would be to instantiate the free predicate variables $W_1, \dots, W_d$ in the definition of $\mathrm{wit}(D)$.
These are contained in the witness if $D$ does not use extended purity deletion.

In practice certain witnesses might be preferred over others, e.g., based on size.
Accounting for additional constraints like these when constructing witnesses is an avenue for future work.

\subsubsection{Limitations for finding witnesses.}
\label{sec.skolemization}
The original SCAN algorithm uses Sko\-lem\-ization if the input formula is not in clause form and after the saturation process tries to undo the Skolemization to find a logically equivalent first-order formula.
While reverse Skolemization is undecidable in general, it does work in some cases, for example, if the input formula has the form $\Exists{\seq{u}} \varphi$ where $\varphi$ is quantifier-free.
We discuss such an example where the SCAN algorithm terminates and finds a SOQE-solution, but where no WSOQE-witness exists.

The input formula is $\Phi = \Exists{X} \Exists{u} \Exists{v} ( X(u) \land \neg X(v))$.
Skolemization of the first-order part yields the clause set $N = \mset{X(a), \neg X(b)}$ with Skolem constants~$a$ and~$b$.
An $X$-eliminating $\mathcal{C}$-derivation from $N$ is given by $D = $
$$
  \Axiom$\mset{X(a), \neg X(b)}\fCenter$
  \RightLabel{$\mathrm{Res}_{\underline{X(a)}, \underline{\neg X(b)}}$}
  \UnaryInf$\mset{X(a), \neg X(b), a \noeq b}\fCenter$
  \RightLabel{$\mathrm{PurDel}_{\underline{X(a)}}$}
  \UnaryInf$\mset{\neg X(b), a \noeq b}\fCenter$
  \RightLabel{$\mathrm{ExtPurDel}_X^{-}$}
  \UnaryInf$\mset{a \noeq b}\fCenter$
  \DisplayProof
$$
The original SCAN algorithm now reverses the Skolemization on the resulting formula $a \noeq b$, to get the logically equivalent formula $\Exists{u} \Exists {v} u \noeq v$, i.e., we have $\Phi \iff \Exists{u} \Exists {v} u \noeq v$.
The computed witness $\mathrm{wit}(D)=\resclosurepredicate{\underline{X(a)}}[X \leftarrow \lambda u.\bot]$ is equivalent to $\lambda u. u \oeq a$.
However, this witness contains a Skolem constant which is not in the language of the input formula.
In fact, no WSOQE-witness exists in the input language:
Assume $\Phi$ had a WSOQE-witness $\witness = \lambda u. \varphi(u)$.
Then $\witness$ is a first-order predicate in the empty language and in every structure $\mathcal{M}$ it defines the subset $\witness^{\mathcal{M}}$ given by $\mset{m \in M \suchthat \mathcal{M} \models \varphi(u)}$.
For every two-element model $\mathcal{M} = \mset{m_1, m_2}$ 
we have $\mathcal{M} \models \Exists{X} \Exists{u} \Exists{v} (X(u) \land \neg X(v))$.
Since $\witness$ is a witness we get
$\mathcal{M} \models \Exists{u} \Exists{v}(\varphi(u) \land \neg \varphi(v))$.
Thus $\witness^{\mathcal{M}}$ is neither the empty set nor the whole set $M$.
However, in the empty language the only first-order parameter-free definable sets in a structure
 $\mathcal{M}$ are the empty set and $M$ itself, see~\cite[Theorem 2.1.2]{Hodges97Shorter}, which is a contradiction.

This shows a fundamental limitation for finding witnesses, even if SOQE-solutions exist.
For practical applications it would be useful to add Skolemization and reverse Skolemization steps to the implementation and investigate classes of formulas where witness construction succeeds in the presence of Skolemization.

\subsubsection{Quantifier alternations}

We have discussed the WSOQE problem in the context of existential quantifiers, but one can also consider the dual problem for universal quantifiers:
Given a formula $\Forall{\seq{X}} \varphi$ with first-order~$\varphi$, find first-order predicates $\seq{\witness}$ such that
\begin{equation*}
  \Forall{\seq{X}} \varphi \iff \varphi[\seq{X} \leftarrow \seq{\witness}].
\end{equation*}
One can reduce this problem to the existential case since for all first-order predicates $\seq{\witness}$ we have $\Forall{\seq{X}} \varphi \iff \varphi[\seq{X} \leftarrow \seq{\witness}]$ if and only if $\Exists{\seq{X}} \neg \varphi \iff \neg \varphi[\seq{X} \leftarrow \seq{\witness}]$ by writing $\Forall{\seq{X}}$ as $\neg \Exists{\seq{X}} \neg$.
This way, one can solve the problem for an arbitrary quantifier prefix by successively eliminating alternating blocks of existential and universal quantifiers starting with the innermost block.

While the reduction above works if one has a general WSOQE method that can handle any input formula, 
our submission only
works when the input is a clause set, 
i.e., the domain variables are universally
quantified.
The above reduction introduces a negation on the input formula, 
meaning that we now have to deal with existential quantifiers on
the domain variables as well, e.g., via Skolemization. 
However, as stated before there are fundamental limitations when introducing Skolemization.
For input formulas where the first-order part is quantifier-free our method is applicable though.

\subsubsection{Completeness of one-sided $\seq{X}$-eliminating derivations.}

To ensure that SCAN produces a one-sided derivation, if terminating, one can restrict the application of $\mathrm{PurDel}_P$ to one-sided $P$.
However, this might restrict the number of clause sets for which an $\seq{X}$-eliminating derivation can be found.
It is open whether one-sided $\seq{X}$-eliminating derivations are complete in this sense:
\begin{problem}
  \label{completeness-of-onesided-derivations}
  If there is an $\seq{X}$-eliminating $\mathcal{C}$-derivation from $N$, is there always
  a one-sided $\seq{X}$-eliminating $\mathcal{C}$-derivation from $N$?
\end{problem}
A positive answer would justify using the strategy of restricting $\mathrm{PurDel}_P$ to one-sided $P$ during a run of the SCAN algorithm which, if terminating, guarantees a first-order witness by \Cref{finite-witnesses-from-one-sided-scan-derivations}.

\subsubsection{Computing witnesses with DLS(*).}
As this work extends a known algorithm for solving SOQE to solve the more general WSOQE problem, the question arises whether other algorithms like DLS \cite{Doherty97Computing} and DLS* \cite{Doherty98General,Nonnengart98Fixpoint} can be equally modified to solve WSOQE and how the resulting witnesses compare.

\subsubsection{Improvement of Ackermann's Lemma.}
\label{sec.discussion-ackermann-dls}

Ackermann's Lemma provides a method for solving WSOQE as well.
On clause sets, our method is an improvement on Ackermann's Lemma in the following sense:
If Ackermann's Lemma applies to $\Exists{X} N$, then there is a one-sided $\mathcal{C}$-derivation $D$ from $N$, such that $\mathrm{wit}(D)$ is equivalent to the witness produced by Ackermann's Lemma.
Furthermore our method can construct a witness for \Cref{ex.main-example} where Ackermann's Lemma is not applicable.
This result makes no claim about whether our method improves on other Ackermann-based algorithms, such as DLS and DLS*, since they employ other transformation steps besides the application of Ackermann's Lemma.

\subsubsection{Equality reasoning.}
A desirable extension is an implementation with
equality reasoning.
The original SCAN algorithm works in the presence of
equality~\cite{Bachmair94Refutational}, so we expect the witness construction to carry over
to first-order logic with equality.

\subsubsection{Solving FEQ.}

As discussed in the Introduction, any WSOQE algorithm can be applied to solving formula equations, by appending a validity check with a first-order theorem prover.
In some cases our method can skip the additional validity check: If SCAN terminates with the empty set (which implies that $\Exists{\seq{X}} N$ is valid) and the resulting derivation is one-sided, the witness found is also a solution to the FEQ problem.
However, SCAN might output a valid clause set which is not the empty set, in which case a validity check is still required to solve FEQ.

This opens the door to apply this method in Boolean unification (which is a restriction of FEQ to nullary predicate variables) and in the context of software verification for solving sets of constrained Horn clauses~\cite{Bjorner15Horn}.

\subsubsection{Application to Forgetting.}
Methods for computing forgetting solutions and uniform
interpolants have been developed for description logics and modal
logics~\cite{KoopmannSchmidt13a,KoopmannSchmidt13c,KoopmannSchmidt15a,AlassafSchmidtSattler22}.
For applications, e.g., knowledge processing, agent applications and
Boolean unification, the possibility of extending these to compute also
witnesses for eliminated variables would be attractive.
Using a WSOQE-solution instead of a SOQE-solution has the advantage that the structure of the
existing knowledge base is left largely intact.
Functionality to return witnesses could
provide a promising avenue to develop methods for such applications, but
such development is subject to future work.

\subsubsection{Expressivity.}
Investigating SOQE, and related problems, e.g., uniform
interpolation, one quickly encounters expressibility challenges.
These carry over to WSOQE, as we have seen in this paper, but also to FEQ
and Boolean unification as a special instance of FEQ.
However, applications might be able to cope with language extensions, e.g., 
the extent to which infinite witnesses can be expressed using fixpoints is worth investigating. 
Enhancing the expressivity of the target language might increase success rates.

\section{Conclusion}

We introduced WSCAN, an extension of the SCAN algorithm on clause sets, to compute witnesses for existential second-order quantifiers, thereby solving the more general WSOQE-problem.
Given an $\seq{X}$-eliminating derivation $D$ from a clause set $N$ we can produce a witness for $\Exists{\seq{X}} N$.
That witness might be infinite in general, but if in addition all purified pointed clauses of $D$ have a certain finiteness property, like one-sidedness, we can guarantee that the witness is first-order.
To the best of our knowledge there are currently no other WSOQE-algorithms applicable to arbitrary clause sets.
This work paves the way to new applications of the SCAN algorithm in a variety of different areas,
ranging from modal correspondence theory and knowledge representation to verification.

On a more abstract level, we see this work as a contribution to bridging the gap between
second-order quantifier elimination (SOQE) and solving formula equations (FEQ) which will be of interest to the
respective communities investigating these problems.
SOQE has been studied mostly in the context of modal logic, description logics, knowledge representation and answer set programming.
On the other hand, FEQ is studied (usually in different formalisms and under different names)
in the verification and automated (inductive) theorem proving communities.
We believe that the complex of problems consisting of SOQE, WSOQE, and FEQ provides a suitable
  {\em common logical foundation} for work that is done on all of these topics, as has also been observed in~\cite{Wernhard17Boolean}.
This perspective not only allows to relate these, at first sight different, problems to one another
on a logical level, but also stimulates a cross-fertilization in terms of practical solution
ideas, techniques, and algorithms.

\subsubsection{Acknowledgements.}

We thank Luke Sanderson who dockerfied the original version of SCAN with help from the University of Manchester Open Source Software Club.

Also, we thank the anonymous reviewers for many helpful comments that led to an improvement of the presentation of this paper.

This research was funded in part by the Austrian Science Fund (FWF) 10.55776/P35787.

\subsubsection{Disclosure of Interests.}
The authors have no competing interests to declare that are relevant to the content of this article.

\bibliographystyle{splncs04}
\bibliography{./main.bib}

\clearpage
\appendix

\section{Relationship between WSOQE and FEQ}
We give a proof of a relationship between WSOQE and FEQ which justifies using WSOQE algorithms for solving FEQ by appending a validity check using, e.g., a first-order theorem prover.
\begin{restatable}{proposition}{relationFeqWsoqe}\label{prop.FEQWSOQE}
  Consider a formula of the form $\exists \seq{X}\, \varphi$ where $\varphi$ is first-order.
  Then $\exists \seq{X}\, \varphi$ has an FEQ-solution iff it is valid and has a WSOQE-witness.
\end{restatable}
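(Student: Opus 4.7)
The plan is to establish both directions of the biconditional directly from the definitions, essentially observing that an FEQ-solution is automatically a WSOQE-witness once one notices that validity of both sides makes them trivially equivalent.

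For the forward direction, suppose $\seq{\alpha}$ is an FEQ-solution, i.e., $\models \varphi[\seq{X} \leftarrow \seq{\alpha}]$. First I would derive validity of $\exists \seq{X}\, \varphi$: since $\seq{\alpha}$ provides an instance of the existential quantifier, we have $\models \varphi[\seq{X} \leftarrow \seq{\alpha}] \limp \exists \seq{X}\, \varphi$, so $\models \exists \seq{X}\, \varphi$. Next I would show $\seq{\alpha}$ itself serves as a WSOQE-witness: both $\exists \seq{X}\, \varphi$ and $\varphi[\seq{X} \leftarrow \seq{\alpha}]$ are valid, hence they are trivially equivalent, so the WSOQE-condition $\exists \seq{X}\, \varphi \iff \varphi[\seq{X} \leftarrow \seq{\alpha}]$ holds.

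For the backward direction, suppose $\exists \seq{X}\, \varphi$ is valid and has a WSOQE-witness $\seq{\alpha}$, so $\exists \seq{X}\, \varphi \iff \varphi[\seq{X} \leftarrow \seq{\alpha}]$. Combining validity of the left-hand side with this equivalence yields $\models \varphi[\seq{X} \leftarrow \seq{\alpha}]$, which is exactly the definition of $\seq{\alpha}$ being an FEQ-solution.

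There is no real obstacle here, since the argument is essentially a direct unfolding of the three definitions (FEQ-solution, validity, and the WSOQE-condition $(\ast)$). The only conceptual point worth highlighting is that the same tuple $\seq{\alpha}$ functions as both the FEQ-solution and the WSOQE-witness in the forward direction; no separate construction of a witness is needed, because the equivalence required by $(\ast)$ is automatically satisfied whenever both of its sides are valid.
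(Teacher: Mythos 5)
Your proposal is correct and matches the paper's own proof: both directions unfold the definitions in exactly the same way, using the same tuple $\seq{\witness}$ as both FEQ-solution and WSOQE-witness in the forward direction and combining validity with the equivalence $(\ast)$ in the backward direction. No difference in approach worth noting.
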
%
\begin{proof}
  For the left-to-right direction, let $\seq{\witness}$ be a tuple of first-order predicates
  such that\ $\models \varphi[\seq{X} \leftarrow \seq{\witness}]$.
  Then $\models \Exists{\seq{X}} \varphi$ and thus $\Exists{\seq{X}} \varphi \iff \varphi[\seq{X} \leftarrow \seq{\witness}]$.

  For the right-to-left direction, there is a tuple $\seq{\witness}$ of first-order predicates
  such that\ $\Exists{\seq{X}} \varphi \iff \varphi[\seq{X} \leftarrow \seq{\witness}]$.
  Since also $\models \Exists{\seq{X}} \varphi$, we obtain $\models \varphi[\seq{X} \leftarrow \seq{\witness}]$.
\end{proof}

\section{Proof of Main Result}
\label{sec.proofs}

In this section, we prove the main result of this paper:
\mainresult*

The main lemma to prove \Cref{witnesses-from-eliminating-scan-derivations} is the witness preservation lemma which shows that $T_S$ preserves witnesses backwards across a $\mathcal{C}$-derivation step:
\witnessPreservationLemma*
We first show \Cref{witnesses-from-eliminating-scan-derivations} from \Cref{witness-preservation-lemma} and give the proof of \Cref{witness-preservation-lemma} in the rest of this section.

\begin{proof}[Proof of \Cref{witnesses-from-eliminating-scan-derivations}]
  We show the statement by an inductive argument on the length of $D$.
  If $D$ is the empty sequence, then the conclusion of $D$ is $N$ and since $D$ is $\seq{X}$-eliminating we have that $N$ contains none of the variables in $\seq{X}$.
  Therefore $\seq{\mathrm{wit}(D)} = (W_1, \dots, W_d)$ satisfies $\Exists{\seq{X}} N \iff N[\seq{X} \leftarrow \seq{\mathrm{wit}(D)}]$.
  If $D = (D_i)_{1 \leq i \leq m+1}$, then $D' := (D_i)_{2 \leq i \leq m+1}$ is an $\seq{X}$-eliminating $\mathcal{C}$-derivation from $N(D)_1$ of length $m$.
  Thus by the induction hypothesis we have that $\seq{\mathrm{wit}(D')}$ is a witness for $\Exists{\seq{X}} N(D)_1$.
  Note that $\seq{\mathrm{wit}(D)} = T_{D_1}(\seq{\mathrm{wit}_1(D)}) = T_{D_1}(\seq{\mathrm{wit}(D')})$.
  Thus by \Cref{witness-preservation-lemma} we get that $\seq{\mathrm{wit}(D)}$ is a witness for $\Exists{\seq{X}} N$. 
\end{proof}

For formulas $\varphi$, $\psi$ we write $\varphi \imp \psi$ for validity of logical implication, i.e., for $\models \varphi \limp \psi$.
$\varphi \imp \psi$ implies $\varphi \models \psi$
and if $\varphi$ and $\psi$ are closed formulas then $\varphi \imp \psi$ is equivalent to $\varphi \models \psi$.
However, if $\varphi, \psi$ contain free variables then $\varphi \models \psi$ need not imply $\varphi \imp \psi$.
As the formulas and clause sets we deal with often contain free predicate variables and we often want to express this stronger notion of entailment, this is a convenient shorthand.
Note that $\varphi \iff \psi$ is equivalent to $\varphi \imp \psi$ and $\psi \imp \varphi$.

As a first step to showing the witness preservation lemma we prove the soundness of $\mathcal{C}$.

\begin{lemma}[Soundness of $\mathcal{C}$]
  \label{soundness-of-calculus}
  If $N/N'$ is a $\mathcal{C}$-derivation step, then $N \imp N'$.
\end{lemma}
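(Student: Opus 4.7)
The plan is to proceed by a case analysis on the type of the $\mathcal{C}$-derivation step $N/N'$. The five cases naturally split into two groups. The deletion-type cases $\mathrm{RedElim}$, $\mathrm{ExtPurDel}_X^{p}$, and $\mathrm{PurDel}_P$ all satisfy $N' \subseteq N$ viewed as sets of clauses; since clause sets are interpreted as (universal closures of) conjunctions of their clauses, throwing away conjuncts only weakens the formula, and hence $N \imp N'$ follows immediately. (In particular, note that for these three cases we do not even need to invoke property (i) of redundancy criteria, i.e., compatibility with entailment --- although we will want it elsewhere for the reverse direction needed in \Cref{correctness-of-scan}.)

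The remaining case is an inference step, where $N' = N \cup \{C\}$ for some $C$ obtained from premises $C_1, \ldots, C_n \in N$ by one of the rules $\mathrm{Res}$, $\mathrm{Fac}$, $\mathrm{ConstrElim}$. Here it suffices to prove $N \imp C$, and I split further into the three sub-rules. For $\mathrm{Res}$, I fix a model and a valuation for the free variables of $\seq{t} \noeq \seq{s} \lor C \lor C'$ and case-split on whether $\seq{t} \oeq \seq{s}$ holds in the valuation: if not, the resolvent is satisfied through its constraint disjunct; if so, then $L(\seq{t})$ and $L(\seq{s})$ are interpreted identically, so the two universally closed parents force one of $C$ or $C'$ (the variable-disjointness assumption ensures that the two free-variable assignments can be chosen independently to match the unifying valuation). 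The case of $\mathrm{Fac}$ is analogous but simpler, as there is only one parent and one literal pair to reconcile.

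For $\mathrm{ConstrElim}$, the parent $\seq{t} \noeq \seq{s} \lor C$ is a universally closed clause, so it entails each of its substitution instances; in particular it entails $(\seq{t} \noeq \seq{s} \lor C)\sigma$ for the mgu $\sigma$ of $\seq{t}$ and $\seq{s}$. Since $\seq{t}\sigma = \seq{s}\sigma$ by the unifier property, the disjunct $\seq{t}\sigma \noeq \seq{s}\sigma$ is unsatisfiable, leaving $C\sigma$ as a consequence. The main (and only mild) obstacle is a bookkeeping issue: we have stated the lemma in terms of $\imp$ rather than $\models$, so free variables must be treated uniformly as implicitly universally quantified throughout, and the variable-disjointness convention on resolution premises has to be invoked to let the two premises be instantiated on a common valuation of the shared domain variables in the resolvent. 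Once that convention is fixed, each sub-case is a direct semantic argument.
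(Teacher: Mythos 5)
Your proposal is correct and follows essentially the same route as the paper: a case analysis in which the deletion-type steps are handled via $N' \subseteq N$ and the three inference rules via direct semantic arguments (case-splitting on the constraint $\seq{t} \oeq \seq{s}$ for $\mathrm{Res}$ and $\mathrm{Fac}$, and instantiating the universally closed parent by the mgu for $\mathrm{ConstrElim}$). The only cosmetic difference is that you dispatch $\mathrm{RedElim}$ by the subset observation alone, whereas the paper invokes the equivalence-preservation property of redundancy; both are valid one-liners.
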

\begin{proof}
  We first show soundness of all SCAN inference rules.
  Let $\mathcal{M}$ be a model.

  For constraint resolution assume $\mathcal{M} \models C \lor L(\seq{t})$ and $\mathcal{M} \models C' \lor L(\seq{s})^\perp$ and $\mathcal{M} \not\models \seq{t} \noeq \seq{s} \lor C \lor C'$.
  Then $\mathcal{M} \models \seq{t} \oeq \seq{s}$ and $\mathcal{M} \models C' \lor L(\seq{t})^\perp$.
  Thus, by soundness of propositional resolution we get $\mathcal{M} \models C \lor C'$ contradicting $\mathcal{M} \not\models \seq{t} \noeq \seq{s} \lor C \lor C'$.

  For factoring assume $\mathcal{M} \models C \lor L(\seq{t}) \lor L(\seq{s})$ and $\mathcal{M} \not\models \seq{t} \noeq \seq{s} \lor C \lor L(\seq{t})$.
  Then $\mathcal{M} \models \seq{t} \oeq \seq{s}$ and thus $\mathcal{M} \models C \lor L(\seq{t}) \lor L(\seq{s})$ contradicting $\mathcal{M} \not\models \seq{t} \noeq \seq{s} \lor C \lor L(\seq{t})$.

  For constraint elimination assume $\mathcal{M} \models \seq{t} \noeq \seq{s} \lor C$ and $\sigma$ is a most general unifier of $\seq{t}$ and $\seq{s}$ and $\mathcal{M} \not\models C\sigma$
  If $\mathcal{M} \models \seq{t} \noeq \seq{s}$, then $\seq{t}$ and $\seq{s}$ are not unifiable, contradicting the existence of the unifier $\sigma$.
  If $\mathcal{M} \models C$, then $\mathcal{M} \models C\theta$ for all substitutions $\theta$, contradicting $\mathcal{M} \not\models C\sigma$.
  
  Now if $S \in \mset{\mathrm{Res}, \mathrm{Fac}, \mathrm{ConstrElim}}$ is a $\mathcal{C}$-derivation step we get $N \imp N'$ by the just proven soundness of the inference steps.

  If $S = \mathrm{RedElim}$ we get $N \iff N'$ by redundancy being equivalence-preserving and thus $N \imp N'$.
  
  For the other derivation steps $\mathrm{ExtPurDel}_X^{+}, \mathrm{ExtPurDel}_X^{-}$ and $\mathrm{PurDel}_P$ we have $N' \subseteq N$ so soundness follows immediately.
\end{proof}

Instead of showing the witness preservation lemma directly we will show a slightly more general version of it by introducing a different version of $T_S$ which does not take a witness as input, but instead uses a second-order variable as a placeholder for it.
This will also be useful for correctness proof of SCAN (\Cref{correctness-of-scan}) where we not only substitute first-order predicates for the second-order variables, but also actual predicates in the context of a model.
\begin{restatable}[Witness transformation]{definition}{witnessTransformationPrimeDefinition}
  Let $\seq{X} = (X_1, \dots, X_d)$ be a tuple of predicate variables.
  We set
  $$
  \begin{array}{rll}
    T_{S}' &:= &\seq{X} \text{ for $S \in \mset{\mathrm{Res}, \mathrm{Fac}, \mathrm{ConstrElim}, \mathrm{RedElim}}$}, \\
    T_{\mathrm{ExtPurDel}^{+}_{X_i}}' &:= &(X_1, \dots, X_{i-1}, \lambda \seq{u_i}. \top, X_{i+1}, \dots, X_d)\\
    T_{\mathrm{ExtPurDel}^{-}_{X_i}}' &:= &(X_1, \dots, X_{i-1}, \lambda \seq{u_i}. \bot, X_{i+1}, \dots, X_d)\\
    T_{\mathrm{PurDel_P}}' &:= &(X_1, \dots, X_{i-1}, \resclosurepredicate{P}, X_{i+1}, \dots, X_d) \\
    & &\text{if $P$ is $X_i$-pointed}.
  \end{array}
  $$
\end{restatable}%
With this notation we have $T_S'[\seq{X} \leftarrow \seq{\witness}] = T_S(\seq{\witness})$.
The core property we want to show about $T_S'$ is the following:
\begin{restatable}[Transformation lemma]{lemma}{transformationLemma}
  \label{transformation-lemma}
  If $S$ is a $\mathcal{C}$-derivation step from $N$ to $N'$, then $N' \imp N[\seq{X} \leftarrow T_S']$.
\end{restatable}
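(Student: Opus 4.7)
My plan is to proceed by case analysis on the $\mathcal{C}$-derivation step $S$. Most cases dispatch quickly. For $S \in \mset{\mathrm{Res}, \mathrm{Fac}, \mathrm{ConstrElim}}$, the transformation $T_S'$ equals $\seq{X}$, so $N[\seq{X} \leftarrow T_S'] = N$; since the rule appends a clause, $N \subseteq N'$, and hence $N' \imp N$ is immediate. For $S = \mathrm{RedElim}$, the equivalence-preservation property of the redundancy criterion gives $N \iff N'$, which is more than needed. For $S = \mathrm{ExtPurDel}_{X_i}^{+}$, the assumption that $X_i$ occurs only positively in every $X_i$-clause of $N$ means the substitution $[X_i \leftarrow \lambda \seq{u_i}. \top]$ turns each such clause into a tautology while leaving the others unchanged, so $N[\seq{X} \leftarrow T_S'] \iff N'$; the negative case is symmetric.

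The substantive case is $S = \mathrm{PurDel}_P$, where $N = N' \uplus \mset{P}$ with $P$ an $X_i$-pointed clause purified in $N'$. Here $T_S'$ affects only the $X_i$-component, replacing it by $\resclosurepredicate{P}$, so I would decompose the target entailment into (a) $N' \imp P[X_i \leftarrow \resclosurepredicate{P}]$ and (b) $N' \imp C[X_i \leftarrow \resclosurepredicate{P}]$ for every $C \in N'$. Part (a) follows immediately from the first clause of \Cref{resclosurepredicate-lemma}, which states that $P[X \leftarrow \resclosurepredicate{P}]$ is valid. For part (b), the second clause of that lemma, $\models \resclosure{P}{C} \limp C[X \leftarrow \resclosurepredicate{P}]$, reduces the task to showing $N' \imp \resclosure{P}{C}$; since $\resclosure{P}{C} \subseteq \resclosure{P}{N'}$, it suffices to establish the stronger $N' \imp \resclosure{P}{N'}$.

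The hard part will be proving $N' \imp \resclosure{P}{N'}$, because the purification hypothesis as stated only directly controls first-level resolvents of $P$ with clauses of $N'$. I would try induction on the construction of $\resclosure{P}{N'}$: the base case is immediate from $N' \subseteq \resclosure{P}{N'}$, while the inductive step must show that a resolvent $R = \mathrm{Res}(P, C')$ with $C' \in \resclosure{P}{N'}$ and $N' \imp C'$ (by IH) satisfies $N' \imp R$. The difficulty is that $C'$ need not itself lie in $N'$, so the purification condition cannot be applied directly to $C'$. To bridge this gap, I expect to invoke the saturation character of the SCAN purification process, which iterates adding non-redundant resolvents until a fixed point at which every resolvent of $P$ with a clause of the resulting $N'$ is redundant; this should allow iterated resolvents to be treated as first-level resolvents from the saturated set, and compatibility of redundancy with entailment then yields $N' \imp R$, closing the induction and the proof.
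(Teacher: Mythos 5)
Your proposal is correct and follows essentially the same route as the paper: the non-$\mathrm{PurDel}$ cases are dispatched by the same equivalences (the paper's \Cref{transformation-lemma.except-purdel}), and the $\mathrm{PurDel}_P$ case uses the identical decomposition into $\models P[X_i \leftarrow \resclosurepredicate{P}]$ and $N' \imp C[X_i \leftarrow \resclosurepredicate{P}]$ via the two parts of \Cref{resclosurepredicate-lemma}, reduced to $N' \imp \resclosure{P}{C}$. The only difference is that you explicitly flag the passage from ``every resolvent of $P$ with a clause of $N'$ is redundant in $N'$'' to ``$N'$ entails all of $\resclosure{P}{N'}$'' as requiring an inductive argument appealing to the saturated character of the purification process, whereas the paper asserts this in a single line; your instinct that iterated resolvents are not directly covered by the stated purification hypothesis is sound, and this is the one place where both accounts leave the details implicit.
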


Before proving the transformation lemma we show how the witness preservation lemma (\Cref{witness-preservation-lemma}) follows from the transformation lemma:
\begin{proof}
  Let $\seq{\witness}$ be a witness for $\Exists{\seq{X}} N'$.
  We show both directions of the WSOQE-condition \eqref{wsoqe-condition} $\Exists{\seq{X}} N \iff N[\seq{X} \leftarrow T_S(\seq{\witness})]$.

  The direction $N[\seq{X} \leftarrow T_S(\seq{\witness})] \imp \Exists{\seq{X}} N$ follows as $T_S(\seq{\witness})$ is an instance of the existential quantifiers.
  
  It remains to show $\Exists{\seq{X}} N \imp N[\seq{X} \leftarrow T_S(\seq{\witness})]$.
  By soundness of $\mathcal{C}$ (\Cref{soundness-of-calculus}) we have $N \imp N'$ thus it suffices to show $\Exists{\seq{X}} N' \imp N[\seq{X} \leftarrow T_S(\seq{\witness})]$.
  Since $\seq{\witness}$ is a witness for $\Exists{\seq{X}} N'$ it even suffices to show $N'[\seq{X} \leftarrow \seq{\witness}] \imp N[\seq{X} \leftarrow T_S(\seq{\witness})]$.
  By the transformation lemma we have $N'[\seq{X} \leftarrow \seq{\witness}] \imp N[\seq{X} \leftarrow T_S'][\seq{X} \leftarrow \seq{\witness}]$. In other words, $N'[\seq{X} \leftarrow \seq{\witness}] \imp N[\seq{X} \leftarrow T_S(\seq{\witness})]$.
\end{proof}

It remains to show the transformation lemma (\Cref{transformation-lemma}).
We first show it for the $\mathcal{C}$-derivation steps other than purified clause deletion in which case we even have equivalence between the two clause sets:
\begin{lemma}
  \label{transformation-lemma.except-purdel}
  If $S$ is a $\mathcal{C}$-derivation step from $N$ to $N'$ which is not a purified clause deletion step, then $N[\seq{X} \leftarrow T_S'] \iff N'$.
\end{lemma}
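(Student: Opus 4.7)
The plan is to proceed by case distinction on the kind of derivation step $S$, relying on the definition of $T_S'$ in each case. For the four cases $S \in \mset{\mathrm{Res}, \mathrm{Fac}, \mathrm{ConstrElim}, \mathrm{RedElim}}$ the substitution $T_S'$ is the identity on $\seq{X}$, so $N[\seq{X} \leftarrow T_S'] = N$ and the goal reduces to $N \iff N'$. For the three inference rules this follows from soundness (\Cref{soundness-of-calculus}), which gives $N \imp N'$, together with $N \subseteq N'$, which gives $N' \imp N$. For $\mathrm{RedElim}$ the equivalence $N \iff N'$ is exactly the assumption that the chosen redundancy notion is equivalence-preserving.

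The remaining cases are the two extended purity deletions. Consider $S = \mathrm{ExtPurDel}_X^{+}$, so that $N' = N \setminus M$ where $M$ is the set of clauses in $N$ containing $X$, and by assumption every clause in $M$ has only positive occurrences of $X$. Then $T_S'$ maps $X$ to $\lambda \seq{u}. \top$ and acts as the identity on the other predicate variables; in particular, clauses in $N \setminus M$ are unaffected by this substitution. For a clause $C \in M$, write it as $C_0 \lor X(\seq{t}_1) \lor \dots \lor X(\seq{t}_k)$ with $k \geq 1$ and $X$ not occurring in $C_0$; applying $[X \leftarrow \lambda \seq{u}. \top]$ turns each $X(\seq{t}_j)$ into $\top$, so $C[X \leftarrow \lambda \seq{u}. \top]$ is a tautology. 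Hence $N[\seq{X} \leftarrow T_S']$ is, up to tautologies, exactly $N'$, giving $N[\seq{X} \leftarrow T_S'] \iff N'$. The case $S = \mathrm{ExtPurDel}_X^{-}$ is symmetric: the negative occurrences $\neg X(\seq{t}_j)$ become $\neg \bot$ after substituting $\lambda \seq{u}. \bot$ for $X$, again rendering the removed clauses tautological.

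I expect no real obstacle here; the only point requiring care is bookkeeping in the $\mathrm{ExtPurDel}$ cases to verify that the polarity restriction is precisely what makes each removed clause a tautology after the substitution. The argument genuinely uses both inclusions: the already-present clauses of $N'$ stay syntactically equal under $T_S'$ (because $\mathrm{ExtPurDel}$ only fires when none of them mention $X$), while the deleted clauses collapse to $\top$. Combining these observations with the earlier cases yields the claimed equivalence $N[\seq{X} \leftarrow T_S'] \iff N'$ for every non-$\mathrm{PurDel}$ derivation step.
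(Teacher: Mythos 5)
Your proof follows essentially the same route as the paper's: identity substitution plus soundness and the subset inclusion for the inference steps, equivalence preservation of redundancy for $\mathrm{RedElim}$, and the observation that the clauses deleted by $\mathrm{ExtPurDel}_X^{p}$ become tautologies under the corresponding substitution while the surviving clauses, which do not mention $X$, are untouched. One imprecision: you read the side condition of $\mathrm{ExtPurDel}_X^{+}$ as requiring that every deleted clause contain \emph{only} positive occurrences of $X$, and accordingly decompose $C = C_0 \lor X(\seq{t}_1) \lor \dots \lor X(\seq{t}_k)$ with $X$ absent from $C_0$. The rule in fact requires only that every clause containing $X$ have \emph{at least one} occurrence of polarity $p$; the paper's own \Cref{ex.main-example.derivation} applies $\mathrm{ExtPurDel}_X^{-}$ to a set containing $B(u,v) \lor \neg X(u) \lor X(v)$, a clause in which $X$ occurs with both polarities. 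Your argument survives unchanged once the restriction on $C_0$ is dropped, since a single occurrence of the right polarity turning into $\top$ (respectively $\neg\bot$) already renders the clause a tautology regardless of what the substitution does to the remaining literals.
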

\begin{proof}
  We prove this by a case distinction on $S$.

  For $S \in \mset{\mathrm{Res}, \mathrm{Fac}, \mathrm{ConstrElim}, \mathrm{RedElim}}$ we have $T_S' = \seq{X}$, thus we need to show $N \iff N'$.
  We get $N \imp N'$ by soundness of $\mathcal{C}$ (\Cref{soundness-of-calculus}).
  If $S$ is one of $\mathrm{Res}, \mathrm{Fac}$ or $\mathrm{ConstrElim}$ we get $N' \imp N$ by $N \subseteq N'$.
  For $S = \mathrm{RedElim}$ we have $N \iff N'$ since the notion of redundancy is equivalence-preserving.

  Now consider $S = \mathrm{ExtPurDel}_{X_i}^{+}$.
  Then $N' = N \setminus \mset{C \in N \suchthat \text{$C$ contains $X_i$}}$ and every clause in $N$ containing $X_i$ contains $X_i$ with positive polarity.
  Furthermore $T_S' = (X_1, \dots, X_{i-1}, \lambda \seq{u}. \top, X_{i+1}, \dots, X_d)$ and therefore we get that $N[\seq{X} \leftarrow T_S'] = N[X_i \leftarrow \lambda \seq{u}. \top]$.
  We first show $N[X_i \leftarrow \lambda \seq{u}. \top] \imp N'$.
  Since $N' \subseteq N$ we have $N \imp N'$.
  Applying the substitution $[X_i \leftarrow \lambda \seq{u}. \top]$ on both sides yields $N[X_i \leftarrow \lambda \seq{u}. \top] \imp N'[X_i \leftarrow \lambda \seq{u}. \top]$.
  Since $N'$ does not contain $X_i$ we get $N[X_i \leftarrow \lambda \seq{u}. \top] \imp N'$.
  It remains to show $N' \imp N[X_i \leftarrow \lambda \seq{u}. \top]$ which we do by showing $N' \imp C[X_i \leftarrow \lambda \seq{u}. \top]$ for all $C \in N$.
  If $C$ contains $X_i$, it contains $X_i$ positively and we get $\models C[X_i \leftarrow \lambda \seq{u}. \top]$.
  If $C$ does not contain $X_i$, then $C \in N'$ and $C[X_i \leftarrow \lambda \seq{u}. \top] = C$ and therefore $N' \imp C[X_i \leftarrow \lambda \seq{u}. \top]$.
  
  The proof for the case $S = \mathrm{ExtPurDel}_{X_i}^{-}$ is analogous to the proof for the positive case.
\end{proof}

Now it remains to show the transformation lemma for the case of purified clause deletion steps.
In order to do that we first study the structure of $\resclosure{P}{C}$ in relation to $\resclosurepremise{P}{\seq{c}}$, captured in the following lemma:
\begin{lemma}
  \label{resclosure-structure}
  Let $P = \underline{L(\seq{t})} \lor C_P$ be an $X$-pointed clause and let $C$ be a clause. Then 
  
  \begin{enumerate}[(i)]
    \item \label{decomposition} $C = C' \lor \Lor_{i=1}^n L(\seq{t_i})^{\perp}$
  for some terms $\seq{t_1}, \dots, \seq{t_n}$ and a clause $C'$ where $X$ does not occur with polarity opposite to $L$.
    \item \label{resclosure-closed-under-substitution-of-clauses-from-resclosurepremise}$\resclosure{P}{C}$ is closed under substitution of $L^\perp$ for clauses from $\resclosurepremise{P}{\seq{c}}$, i.e., if $C_0 \lor L(\seq{s})^\perp \in \resclosure{P}{C}$ and $R(\seq{c}) \in \resclosurepremise{P}{\seq{c}}$, then $C_0 \lor R(\seq{s}) \in \resclosure{P}{C}$.
    \item \label{resclosure-structure.main} 
    $$
      \resclosure{P}{C} = \left\{ C' \lor \Lor_{i=1}^n R_i(\seq{t_i}) \suchthat R_1(\seq{c}), \dots, R_n(\seq{c}) \in \resclosurepremise{P}{\seq{c}} \right\}.
    $$
  \end{enumerate}
\end{lemma}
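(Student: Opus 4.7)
The plan is to handle the three parts in order, each feeding into the next. Part~(i) is a purely syntactic partition of $C$: group together the literals of $C$ that have polarity on $X$ opposite to that of $L$ (these are, tautologically, exactly the literals of the form $L(\seq{t_i})^\perp$ for various tuples $\seq{t_i}$) and put the remaining literals into $C'$. By construction $C'$ satisfies the stated polarity condition; no further work is required.

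For part~(ii) I would induct on the derivation of $R(\seq{c})$ inside $\resclosurepremise{P}{\seq{c}} = \resclosure{P}{L(\seq{c})^\perp}$. The base case $R(\seq{c}) = L(\seq{c})^\perp$ gives $R(\seq{s}) = L(\seq{s})^\perp$, which is exactly the hypothesis. For the step, if $R(\seq{c})$ is the resolvent of some $R'(\seq{c}) = R''(\seq{c}) \lor L(\seq{a})^\perp \in \resclosure{P}{L(\seq{c})^\perp}$ with a fresh renaming of $P$, the induction hypothesis places $C_0 \lor R'(\seq{s}) = C_0 \lor R''(\seq{s}) \lor L(\seq{a}[\seq{c} \leftarrow \seq{s}])^\perp$ in $\resclosure{P}{C}$, and resolving with the same renamed copy of $P$ on the displayed literal is again a legal step inside $\resclosure{P}{C}$ producing $C_0 \lor R(\seq{s})$. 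The main obstacle here is a parametricity check: resolution must commute with the substitution $[\seq{c} \leftarrow \seq{s}]$. This works because $\seq{c}$ consists of fresh constants disjoint from the variables of every renamed copy of $P$, so the most general unifier computed during the derivation of $R(\seq{c})$ transports unchanged through the substitution and yields the resolvent of the substituted clause.

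Part~(iii) is then a combination of (i) and (ii). For $\supseteq$, I would start with $C = C' \lor \Lor_{i=1}^{n} L(\seq{t_i})^\perp$ from~(i), which lies in $\resclosure{P}{C}$, and apply~(ii) $n$ times, each application replacing one remaining $L(\seq{t_i})^\perp$ by the desired $R_i(\seq{t_i})$ (taking $C_0$ to be everything else in the current clause). For $\subseteq$, I would induct on the construction of $\resclosure{P}{C}$: the base case $C$ fits the right-hand side by choosing every $R_i(\seq{c}) = L(\seq{c})^\perp$. For the inductive step, a clause of the form $C' \lor \Lor_i R_i(\seq{t_i})$ can only be resolved with $P$ on a literal of $X$-polarity opposite to $L$; the polarity condition on $C'$ from~(i) forces that literal to sit inside some $R_i(\seq{t_i})$, so the resolvent differs from the given clause only by replacing this $R_i(\seq{t_i})$ by the $\seq{t_i}$-instance of a one-step extension $R_i'(\seq{c})$ of $R_i(\seq{c})$, which still lies in $\resclosurepremise{P}{\seq{c}}$. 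The same parametricity argument from~(ii) is used to identify the resolvent of $R_i(\seq{t_i})$ with $P$ as the $\seq{t_i}$-instance of a resolvent inside $\resclosurepremise{P}{\seq{c}}$.
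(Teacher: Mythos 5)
Your proposal is correct and follows essentially the same route as the paper: part (i) by syntactic splitting, part (ii) by induction on the derivation of $R(\seq{c})$ with the resolution step transported along the substitution $[\seq{c} \leftarrow \seq{s}]$, and part (iii) by combining the two, with $\supseteq$ via $n$-fold application of (ii) and $\subseteq$ via closure under resolution with $P$. One small remark: your ``parametricity check'' is even easier than you suggest, since the calculus uses \emph{constraint} resolution (disequality literals are appended rather than a most general unifier being applied), so the inference commutes with the substitution for free once $\seq{c}$ is fresh.
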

\begin{proof}
  \begin{enumerate}[(i)]
    \item Follows by splitting $C$ into the literals that have polarity opposite to $L$ and the remaining literals.
    
    \item We show this statement by induction on the structure of $\resclosurepremise{P}{C}$:
    The statement follows immediately from the premise for $R(\seq{c}) = L(\seq{c})^\perp$.
    
    It remains to show that when $R(\seq{c})$ is a resolvent between $P$ and a clause $R'(\seq{c}) \in \resclosurepremise{P}{\seq{c}}$ such that $C_0 \lor R'(\seq{s}) \in \resclosure{P}{C}$, then we get that $C_0 \lor R(\seq{s}) \in \resclosure{P}{C}$.
    Since $R$ is a resolvent between $P$ and $R'(\seq{c})$ we have that
    \begin{prooftree}
      \AxiomC{$P$}
      \AxiomC{$R'(\seq{c})$}
      \RightLabel{$\mathrm{Res}$}
      \BinaryInfC{$R(\seq{c})$}
    \end{prooftree}
    is a constraint resolution inference in $\mathcal{C}^\ast$ with the two resolution premises $P$ and $P_R(\seq{c}) = \underline{L(\seq{r(\seq{c})})^\perp} \lor C_R(\seq{c})$.
    Then by resolving on the same literal in $R'$ we get that
    \begin{prooftree}
      \AxiomC{$P$}
      \AxiomC{$C_0 \lor R'(\seq{s})$}
      \RightLabel{$\mathrm{Res}$}
      \BinaryInfC{$C_0 \lor R(\seq{s})$}
    \end{prooftree}
    is a constraint resolution inference in $\mathcal{C}^\ast$ with resolution premises $P$ and $C_0 \lor P_R(\seq{s})$.
    Since also $C_0 \lor R'(\seq{s}) \in \resclosure{P}{C}$ we get
    $C_0 \lor R(\seq{s}) \in \resclosure{P}{C}$.

    \item $\subseteq$: We show that the set on the right hand side contains $C$ and is closed under resolution with $P$.
  Since $\resclosure{P}{C}$ is the smallest such set the statement then follows.
  Recall that $\resclosurepremise{P}{\seq{c}} = \resclosure{P}{L(\seq{c})^\perp}$.
  Therefore we have $L(\seq{c})^\perp \in \resclosurepremise{P}{\seq{c}}$ which shows that $C$ is in the right hand set.

  Now let $R_1(\seq{c}), \dots, R_n(\seq{c}) \in \resclosurepremise{P}{\seq{c}}$ and let $R$ be a resolvent between $P$ and $C'' := C' \lor \Lor_{i=1}^n R_i(\seq{t_i})$.
  We need to show that $R$ is in the right hand set.
  Since $C'$ does not contain $X$ with polarity opposite to $L$ we have that the resolved upon literal in $C''$ must be in one of the $R_i$, i.e., $R_j = L(\seq{s})^\perp \lor R_j'$ for some $1 \leq j \leq n$.
  Now let $R'(\seq{c}) \in \resclosurepremise{P}{\seq{c}}$ be the resolvent between $P$ and $R_j(\seq{c}) \in \resclosurepremise{P}{\seq{c}}$.
  Then 
  $$R = C' \lor \Lor_{\stackrel{i=1}{i \neq j}}^n R_i(\seq{t_i}) \lor R'(\seq{t_j})$$ which shows that $R$ is contained in the right hand set.

  $\supseteq$:
  We now show the following statement by induction on $0 \leq j \leq n$:
  For all $R_1(\seq{c}), \dots, R_j(\seq{c}) \in \resclosurepremise{P}{\seq{c}}$ we have
  $$
  C' \lor \Lor_{i=1}^j R_j(\seq{t_i}) \lor \Lor_{i=j+1}^n L(\seq{t_i})^\perp \in \resclosure{P}{C}.
  $$
  The desired statement then follows for $j=n$.

  For $j = 0$ the statement becomes $C' \lor \Lor_{i=1}^n L(\seq{t_i})^\perp \in \resclosure{P}{C}$, in other words $C \in \resclosure{P}{C}$ by (\ref{decomposition}), which holds by definition of $\resclosure{P}{C}$.

  For the induction step we can assume 
  $$
  C' \lor \Lor_{i=1}^j R_j(\seq{t_i}) \lor \Lor_{i=j+1}^n L(\seq{t_i})^\perp \in \resclosure{P}{C}.
  $$
  Then by applying (\ref{resclosure-closed-under-substitution-of-clauses-from-resclosurepremise}) for the literal $L(\seq{t_{j+1}})^\perp$ we get  
  $$
  C' \lor \Lor_{i=1}^{j+1} R_j(\seq{t_i}) \lor \Lor_{i=j+2}^n L(\seq{t_i})^\perp \in \resclosure{P}{C}
  $$
  which is what we needed to show.
  \end{enumerate}
\end{proof}

Note that $\resclosurepremise{P}{\seq{c}}$ and $\resclosurepredicate{P}$ satisfy the following duality properties which will help us avoid repetition in the subsequent proofs.
In the following, if $\alpha = \lambda \seq{u}. \varphi$ is a predicate expression we write $\neg \alpha$ for $\lambda \seq{u}. \neg \varphi$.
\begin{lemma}
  \label{duality-lemma}
  Let $P = \underline{L} \lor C_P$ be an $X$-pointed clause.
  For a clause $C$ denote by $C^\perp$ the clause which results from $C$ by dualizing all $X$-literals.
  Furthermore let $P^\perp := \underline{L^\perp} \lor C_P^\perp$.
  Then
  \begin{enumerate}[(i)]
    \item \label{resclosure-duality} For all clauses $C$ we have $\resclosure{P^\perp}{C^\perp} = \mset{R^\perp \suchthat R \in \resclosure{P}{C}}$.
    \item \label{resclosurepremise-duality} $\resclosurepremise{P^\perp}{\seq{c}} = \mset{R^\perp \suchthat R \in \resclosurepremise{P}{\seq{c}}}$ and
    \item \label{resclosurepredicate-duality} $\resclosurepredicate{P} \iff \neg \resclosurepredicate{P^\perp}[X \leftarrow \lambda \seq{u}. \neg X(\seq{u})]$.
  \end{enumerate}
\end{lemma}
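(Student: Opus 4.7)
The three parts of the lemma build on each other, so my plan is to prove (i) by a straightforward induction, derive (ii) as an immediate specialization of (i), and then obtain (iii) by unwinding the definitions using the bijection from (ii).

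For (i), I would argue by induction on the inductive generation of $\resclosure{P}{C}$. The base case $C \in \resclosure{P}{C}$ gives $C^\perp \in \resclosure{P^\perp}{C^\perp}$ by definition. For the inductive step, if $R$ is the resolvent of some $D \in \resclosure{P}{C}$ with $P = \underline{L(\seq{t})} \lor C_P$ on the designated literal, then $D$ must contain a literal $L^\perp(\seq{s})$, so $D = D' \lor L^\perp(\seq{s})$ and $R = \seq{t} \noeq \seq{s} \lor D' \lor C_P$. Dualizing all $X$-literals commutes with this step: $D^\perp = (D')^\perp \lor L(\seq{s})$ and $P^\perp = \underline{L^\perp(\seq{t})} \lor C_P^\perp$ resolve on $L^\perp$ to yield $\seq{t} \noeq \seq{s} \lor (D')^\perp \lor C_P^\perp = R^\perp$. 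The reverse inclusion follows from the same argument applied to $P^\perp$, $C^\perp$ together with $(P^\perp)^\perp = P$ and $(C^\perp)^\perp = C$.

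For (ii), I would specialize (i) to the clause $C = L(\seq{c})^\perp$, where $L$ is the designated literal of $P$. Then $C^\perp = L(\seq{c}) = (L^\perp(\seq{c}))^\perp$ is precisely the starting unit clause used in the definition of $\resclosurepremise{P^\perp}{\seq{c}}$, so (i) yields $\resclosurepremise{P^\perp}{\seq{c}} = \resclosure{P^\perp}{C^\perp} = \mset{R^\perp \suchthat R \in \resclosurepremise{P}{\seq{c}}}$.

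For (iii), I would case-split on the polarity of the designated literal of $P$. Consider $P = \underline{\neg X(\seq{t})} \lor C_P$; then $\resclosurepredicate{P}$ has $\Land\Forall$-shape while $\resclosurepredicate{P^\perp}$ has $\Lor\Exists\neg$-shape. Pushing the outer negation in $\neg \resclosurepredicate{P^\perp}$ through the disjunction and existential quantifier and cancelling double negations rewrites it as a $\Land\Forall$-form indexed over $\resclosurepremise{P^\perp}{\seq{c}}$. Reindexing via the bijection from (ii), the equivalence with $\resclosurepredicate{P}$ reduces to the literalwise identity $R^\perp[X \leftarrow \lambda \seq{u}.\, \neg X(\seq{u})] \iff R$ for each $R \in \resclosurepremise{P}{\seq{c}}$. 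This is immediate: the dualization flips the polarity of every $X$-literal, and the substitution inserts a $\neg$ in front of each $X$-occurrence, producing a double negation that cancels the flip. The case $P = \underline{X(\seq{t})} \lor C_P$ is symmetric.

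The main obstacle is the bookkeeping in (iii), where two distinct negation-like operations interact: the syntactic dualization $(\cdot)^\perp$, which only flips polarity markers on $X$-literals, and the semantic substitution $[X \leftarrow \lambda \seq{u}.\, \neg X(\seq{u})]$, which inserts a fresh $\neg$ in front of every $X(\seq{s})$. One must simultaneously commute the outer negation in $\neg \resclosurepredicate{P^\perp}$ through the whole matrix while keeping the bijection from (ii) aligned. Once the literalwise cancellation is isolated, the rest is routine De Morgan manipulation.
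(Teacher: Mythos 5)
Your proposal is correct and follows essentially the same route as the paper: part (i) via the observation that dualizing all $X$-literals commutes with constraint resolution upon the designated literal, part (ii) by specializing (i) to the unit clause $L(\seq{c})^\perp$, and part (iii) by De Morgan manipulation, reindexing via the bijection from (ii), and the literalwise cancellation $R^\perp[X \leftarrow \lambda \seq{u}.\,\neg X(\seq{u})] \iff R$. The only (cosmetic) differences are that you obtain the second inclusion in (i) by symmetry from $(P^\perp)^\perp = P$ where the paper argues via minimality of the closure, and that you treat the negative-designated-literal case of (iii) directly where the paper starts from the positive case and derives the other by an involution argument.
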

\begin{proof}
  \begin{enumerate}[(i)]
    \item $\subseteq$: To show that $\resclosurepremise{P^\perp}{C^\perp}$ is contained in the right-hand set we show that the right-hand set contains $C^\perp$ and is closed under resolution with $P^\perp$.
          Since $\resclosurepremise{P^\perp}{C^\perp}$ is the smallest such set we get the desired statement.

          Since $C \in \resclosurepremise{P}{C}$ we get $C^\perp$ is in the right-hand set.
          
          Now let $R \in \resclosurepremise{P}{C}$ such that
          $$
          \AxiomC{$R^\perp$}
          \AxiomC{$P^\perp$}
          \RightLabel{$\mathrm{Res}$}
          \BinaryInfC{$R_P$}
          \DisplayProof
          $$
          is a constraint resolution inference in $\mathcal{C}^\ast$ on the designated literal of $P^\perp$.
          Then note that 
          $$
          \AxiomC{$R$}
          \AxiomC{$P$}
          \RightLabel{$\mathrm{Res}$}
          \BinaryInfC{$R_P^\perp$}
          \DisplayProof
          $$
          is a constraint resolution inference in $\mathcal{C}^\ast$ on the designated literal of $P$.
          Therefore $R_P^\perp \in \resclosure{P}{C}$.
          Since ${R_P^\perp}^\perp = R_P$ we get that $R_P$ is in the right-hand set.

          $\supseteq$: To show that the right-hand set is contained in  $\resclosurepremise{P^\perp}{C^\perp}$ we show that $R^\perp \in \resclosurepremise{P^\perp}{C^\perp}$ for all $R \in \resclosurepremise{P}{C}$ by induction on the definition of $\resclosurepremise{P}{C}$.

          If $R = C$ we get $R^\perp \in \resclosurepremise{P^\perp}{C^\perp}$ by definition.

          Now let 
          $$
          \AxiomC{$R_0$}
          \AxiomC{$P$}
          \RightLabel{$\mathrm{Res}$}
          \BinaryInfC{$R$}
          \DisplayProof
          $$
          be a constraint resolution inference in $\mathcal{C}^\ast$ on the designated literal of $P$ and $R_0 \in \resclosurepremise{P}{C}$.
          Then by induction hypothesis we get $R_0^\perp \in \resclosurepremise{P^\perp}{C^\perp}$.
          Now 
          $$
          \AxiomC{$R_0^\perp$}
          \AxiomC{$P^\perp$}
          \RightLabel{$\mathrm{Res}$}
          \BinaryInfC{$R^\perp$}
          \DisplayProof
          $$
          is a constraint resolution inference in $\mathcal{C}^\ast$ on the designated literal of $P^\perp$.
          Since $\resclosurepremise{P^\perp}{C^\perp}$ is closed under resolvents with $P$ we can conclude that $R^\perp \in \resclosurepremise{P^\perp}{C^\perp}$.
    \item Recall that $\resclosurepremise{P}{\seq{c}} = \resclosure{P}{L(\seq{c})^\perp}$ where $L(\seq{t})$ is the designated literal of $P$. 
          We now apply (\ref{resclosure-duality}) for $C = L(\seq{c})$ to get the desired statement.
    \item We denote by $\eta$ the substitution which sends $X$ to $\lambda \seq{u}. \neg X(\seq{u})$.
    First we show the statement for the case where $L$ is a positive literal, meaning $L^\perp$ is a negative literal.
    Then by definition of $\resclosurepredicate{P^\perp}$ we get
    \begin{align*}
      \neg \resclosurepredicate{P^\perp}\eta
      \iff& \lambda \seq{u}.\neg \Land_{R(\seq{c},\seq{v}) \in \resclosurepremise{P^\perp}{\seq{c}}} \Forall{\seq{v}} R(\seq{u}, \seq{v})\eta. \\
      \intertext{Using standard equivalences in predicate logic results in}
      \neg \resclosurepredicate{P^\perp}\eta \iff& \lambda \seq{u}. \Lor_{R(\seq{c}, \seq{v}) \in \resclosurepremise{P^\perp}{\seq{c}}} \Exists{\seq{v}} \neg (R(\seq{u}, \seq{v})\eta). \\
      \intertext{By (\ref{resclosurepremise-duality}) we have}
      \neg \resclosurepredicate{P^\perp}\eta \iff& \lambda \seq{u}. \Lor_{R(\seq{c}, \seq{v}) \in \resclosurepremise{P}{\seq{c}}} \Exists{\seq{v}} \neg (R(\seq{u}, \seq{v})^\perp\eta). \\
      \intertext{Elimination of double negation ($R^\perp\eta \iff R$) results in}
      \neg \resclosurepredicate{P^\perp}\eta \iff& \lambda \seq{u}. \Lor_{R(\seq{c}, \seq{v}) \in \resclosurepremise{P}{\seq{c}}} \Exists{\seq{v}} \neg R(\seq{u}, \seq{v}).
    \end{align*}
    Finally, by definition of $\resclosurepredicate{P}$ we get $\neg \resclosurepredicate{P^\perp}\eta \iff \resclosurepredicate{P}$.

    Now let $L$ be a negative $X$-literal.
    Then the designated literal of $P^\perp$ is positive so we can apply the already proved statement to it and get
    \begin{align*}
      &\resclosurepredicate{P^\perp} \iff \neg \resclosurepredicate{{P^\perp}^\perp}\eta. \\
      \intertext{Note that ${P^\perp}^\perp = P$, thus we get} 
      &\resclosurepredicate{P^\perp} \iff \neg \resclosurepredicate{P}\eta. \\
      \intertext{Applying negation on both sides and eliminating double negation yields}
      \neg&\resclosurepredicate{P^\perp} \iff \resclosurepredicate{P}\eta. \\
      \intertext{Applying $\eta$ on both sides and eliminating double negation results in}
      \neg&\resclosurepredicate{P^\perp}\eta \iff \resclosurepredicate{P}.
    \end{align*}
    which is what we wanted to show.
  \end{enumerate}
  
\end{proof}

With this we can now prove a crucial property about $\resclosurepredicate{P}$, already mentioned in the main text.
In the proof we use the following notation: 
For an expression $F = \lambda \seq{u}. E(\seq{u})$ and a tuple of terms $\seq{t}$ of the same size as $\seq{u}$ we denote by $F[\seq{t}]$ the expression $E(\seq{t})$, i.e., the substitution of the terms $\seq{t}$ for the $\lambda$-abstracted variables $\seq{u}$.

\resclosurepredicateLemma*
\begin{proof}
  \label{resclosurepredicate-lemma.proof}
  First, we show how to get the first statement from the second statement and then prove the second statement.

  Also, we first consider the case where $P = \underline{\neg X(\seq{t})} \lor C'$, i.e., the designated literal of $P$ is negative.
  We use the second statement for the clause $C=\seq{c} \noeq \seq{t} \lor C'$ where $\seq{c}$ is a tuple of fresh constants.
  Then we get 
  $$\resclosure{P}{\seq{c} \noeq \seq{t} \lor C'}~\imp~\seq{c} \noeq \seq{t} \lor C'[X~\leftarrow~\resclosurepredicate{P}].$$
  Resolving $P$ with $X(\seq{c}) \in \resclosurepremise{P}{\seq{c}}$ results in $\seq{c} \noeq \seq{t} \lor C' \in \resclosurepremise{P}{\seq{c}}$ which implies $\resclosure{P}{\seq{c} \noeq \seq{t} \lor C'}~\subseteq~\resclosurepremise{P}{\seq{c}}$.
  Thus 
  $$\resclosurepremise{P}{\seq{c}} \imp \seq{c} \noeq \seq{t} \lor C'[X \leftarrow \resclosurepredicate{P}]$$
  which by $\resclosurepremise{P}{\seq{c}} \iff \resclosurepredicate{P}[\seq{c}]$ is equivalent to 
  $$\models \neg \resclosurepredicate{P}[\seq{c}] \lor \seq{c} \noeq \seq{t} \lor C'[X \leftarrow \resclosurepredicate{P}].$$
  Eliminating the constraint $\seq{c} \noeq \seq{t}$ gives 
  $$\models \neg \resclosurepredicate{P}[\seq{t}] \lor C'[X \leftarrow \resclosurepredicate{P}]$$ which precisely means $\models P[X \leftarrow \resclosurepredicate{P}]$.
  
  Now we show the case where the designated literal of $P$ is positive, i.e., $P = \underline{X(\seq{t})} \lor C'$.
  Then by the negative case we get that $\models P^{\perp}[X \leftarrow \resclosurepredicate{P^{\perp}}]$.
  Since this holds for arbitrary $X$ we can substitute $X$ by $\neg X$ and get $\models P^{\perp}[X \leftarrow \resclosurepredicate{P^{\perp}}][X \leftarrow \lambda \seq{u}. \neg X(\seq{u})]$, i.e., $\models P^{\perp}[X \leftarrow \resclosurepredicate{P^{\perp}}[X \leftarrow \lambda \seq{u}. \neg X(\seq{u})]]$.
  By \Cref{duality-lemma} (\ref{resclosurepredicate-duality}) this means $\models P^{\perp}[X \leftarrow \lambda \seq{u}. \neg \resclosurepredicate{P}[\seq{u}]]$.
  Eliminating the double negation of dualization ${}^\perp$ on $X$-literals and the negation sign of the substitution gives us the desired statement $\models P[X \leftarrow \resclosurepredicate{P}]$.

  Now we show the second statement.
  We first give a proof for the case where $P = \underline{\neg X(\seq{t})} \lor C'$, i.e., the designated literal of $P$ is negative.
  By \Cref{resclosure-structure} (\ref{decomposition}) we have $C = C'' \lor \Lor_{i=1}^n X(\seq{t_i})$ for some clause $C''$ where $X$ does not occur positively. 
  Now assume there is a model $\mathcal{M}$ and an assignment $\theta$ of the free predicate variables in $\resclosure{P}{C}$ such that $\mathcal{M}, \theta \models \resclosure{P}{C}$ and furthermore $\mathcal{M}, \theta \not\models C[X \leftarrow \resclosurepredicate{P}]$.
  This means
  \begin{enumerate}[(a)]
    \item \label{a} $\mathcal{M}, \theta \models \resclosure{P}{C}$,
    \item \label{b} $\mathcal{M}, \theta \not\models C''[X \leftarrow \resclosurepredicate{P}]$ and
    \item \label{c} $\mathcal{M}, \theta \not\models \Lor_{i=1}^n\resclosurepredicate{P}[\seq{t_i}]$.
  \end{enumerate}

  By \Cref{resclosure-structure} all clauses of $\resclosure{P}{C}$ have the form 
  $C'' \lor \Lor_{i=1}^n C_i(\seq{t_i})$ where $C_1(\seq{c}), \dots, C_n(\seq{c}) \in \resclosurepremise{P}{\seq{c}}$.
  Thus from (\ref{a}) we get 
  \begin{equation}
    \label{one}
    \mathcal{M}, \theta \models C'' \lor \Lor_{i=1}^n C_i(\seq{t_i}) \text{ for all $C_1(\seq{c}), \dots, C_n(\seq{c}) \in \resclosurepremise{P}{\seq{c}}$.}    
  \end{equation}

  Note that since $X(\seq{c}) \in \resclosurepremise{P}{\seq{c}}$ we get $\models \Forall{\seq{u}} (\resclosurepredicate{P}[\seq{u}] \limp X(\seq{u}))$. 
  Since $X$ does not occur positively in $C''$ we get $\models C'' \limp C''[X \leftarrow \resclosurepredicate{P}]$ by a monotonicity argument.
  Thus from (\ref{b}) we get 
  \begin{equation}
    \label{two}
    \mathcal{M}, \theta \not\models C''.
  \end{equation}

  Remember that $\resclosurepredicate{P} = \lambda \seq{u}. \Land_{R(\seq{c}, \seq{v}) \in \resclosurepremise{P}{\seq{c}}} \Forall{\seq{v}} R(\seq{u}, \seq{v})$.
  Thus by (\ref{c}) we get that for all $i \in \mset{1, \dots n}$ there is a $C_i(\seq{c}) \in \resclosurepremise{P}{\seq{c}}$ such that $\mathcal{M} \not\models C_i(\seq{t_i})$ which together with \eqref{two} gives a contradiction to \eqref{one}. 
  
  We now prove the second statement for the case where the designated literal of $P$ is positive.
  Then the designated literal of $P^\perp$ is negative.
  We now use the second statement on $P^\perp$ and $C^\perp$ as in \Cref{duality-lemma} to get
  \begin{equation}
    \label{eq.resclosure-implication}
  \models \resclosure{P^\perp}{C^\perp} \limp C^\perp[X \leftarrow \resclosurepredicate{P^\perp}].
  \end{equation}
  Now let $\eta$ be the substitution which sends $X$ to $\lambda \seq{u}. \neg X(\seq{u})$.  
  By \Cref{duality-lemma} (\ref{resclosure-duality}) we get $\resclosure{P^\perp}{C^\perp} = \{{C'}^\perp \suchthat C' \in \resclosure{P}{C}\}$.
  Applying $\eta$ to all clauses in $\resclosure{P^\perp}{C^\perp}$ results in $\{{C'}^\perp\eta \suchthat C' \in \resclosure{P}{C}\}$ which by $C'^\perp \eta \iff C'$ is logically equivalent to $\resclosure{P}{C}$.
  Note that $C^\perp[X \leftarrow \resclosurepredicate{P^\perp}] \iff C[X \leftarrow \neg \resclosurepredicate{P^\perp}]$.
  Now applying $\eta$ to $C[X \leftarrow \neg \resclosurepredicate{P^\perp}]$ results in 
  $$C[X \leftarrow \neg \resclosurepredicate{P^\perp}]\eta = C[X \leftarrow \neg \resclosurepredicate{P^\perp}\eta]$$
  which by \Cref{duality-lemma} (\ref{resclosurepredicate-duality}) is equivalent to $C[X \leftarrow \resclosurepredicate{P}]$. 
  Thus applying $\eta$ to \eqref{eq.resclosure-implication} results in the desired statement.
\end{proof}

We can now prove the full transformation lemma which completes the proof of our main result.
\transformationLemma*
\begin{proof}
  If $S$ is not a purified clause the result follows from \Cref{transformation-lemma.except-purdel}.
  
  It remains to show the case where $S = \mathrm{PurDel}_P$ for some pointed clause $P$.
  Let $\seq{X} = (X_1, \dots, X_d)$ and let $N/N'$ be a $\mathrm{PurDel}_P$ step.
  Then we have $N = N' \union \mset{P}$ and $P$ is purified in $N'$ and $X_i$-pointed for some $i \in \mset{1, \dots, d}$.
  Therefore we have $T_{S}' = (X_1, \dots, X_{i-1}, \resclosurepredicate{P}, X_{i+1}, \dots, X_d)$, so it suffices to show the implication $N' \imp N[X_i \leftarrow \resclosurepredicate{P}]$.
  Using the first statement of \Cref{resclosurepredicate-lemma} we get $\models P[X_i \leftarrow \resclosurepredicate{P}]$.
  Thus it remains to show the statement $N' \imp N'[X_i \leftarrow \resclosurepredicate{P}]$.
  Since $P$ is purified in $N'$ we have that $\resclosure{P}{N'}$ is redundant in $N'$ and thus $N' \imp \resclosure{P}{N'}$ and in particular $N' \imp \resclosure{P}{C}$ for all $C \in N'$.
  Now by second statement of \Cref{resclosurepredicate-lemma} we get the implication $N' \imp C[X_i \leftarrow \resclosurepredicate{P}]$ which finishes the proof. 
\end{proof}

\section{Correctness of SCAN}
\label{sec.correctness-of-scan}

Using the transformation lemma from the previous section we can give a new correctness proof for SCAN:
\calculusIsSoundAndExistentialEquivalencePreserving*
\begin{proof}
  By soundness of $\mathcal{C}$ (\Cref{soundness-of-calculus}) we get $N \imp N'$ and thus $\Exists{\seq{X}} N \imp \Exists{\seq{X}} N'$.
  Therefore it only remains to show $\Exists{\seq{X}} N' \imp \Exists{\seq{X}} N$.

  Let $\mathcal{M}$ be a model of $\Exists{\seq{X}} N'$ and let $\theta'$ be an assignment of the predicate variables $\seq{X}$ such that $\mathcal{M}, \theta' \models N'$.
  By the transformation lemma (\Cref{transformation-lemma}) we get $\mathcal{M}, \theta' \models N[\seq{X} \leftarrow T_S']$.
  Now, let $T_S' = (\lambda \seq{u_1}. \varphi_1'(\seq{u_1}), \dots, \lambda \seq{u_d}. \varphi_d'(\seq{u_d}))$ with formulas $\varphi_1', \dots, \varphi_d'$ where $\seq{u_i}$ is of size $n_i$, i.e., the $\varphi_i'$ define the predicate expressions in $T_S'$.
  Now set $Q_i := \mset{\seq{m} \in M^{n_i} \suchthat \mathcal{M}, \theta' \models \varphi_i'(\seq{m})}$ for $1\leq i \leq d$ and let $\theta$ be the assignment of the predicate variables $\seq{X}$ to the relations $\seq{Q}$.
  Since $\mathcal{M}, \theta' \models N[\seq{X} \leftarrow T_S']$ we get $\mathcal{M}, \theta \models N$, i.e., $\mathcal{M} \models \Exists{\seq{X}} N$.
\end{proof}

\section{First-order witnesses}

We show our main result for the finite case as a corollary to \Cref{witnesses-from-eliminating-scan-derivations}:
\finiteMainResult*
\begin{proof}
  Follows directly from \Cref{witnesses-from-eliminating-scan-derivations} since we replace predicate expressions by equivalent first-order predicates.
\end{proof}

We show that one-sidedness implies finiteness of $\resclosurepremise{P}{\seq{c}}$:
\begin{lemma}
  \label{one-sided-implies-ResU-finite}
  Let $P$ be a one-sided pointed clause.
  Then $\resclosurepremise{P}{\seq{c}}$ is finite.
\end{lemma}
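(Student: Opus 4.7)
The plan is to observe that one-sidedness of $P$ prevents any resolution step beyond the very first one, so $\resclosurepremise{P}{\seq{c}}$ contains at most two clauses (up to equivalence). Let $P = \underline{L(\seq{t})} \lor C$ be one-sided, say with $L$ and all $X$-literals in $C$ of the same polarity $p$. Recall that $\resclosurepremise{P}{\seq{c}} = \resclosure{P}{L(\seq{c})^\perp}$, the closure under resolvents with $P$ on its designated literal $L(\seq{t})$.

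First, I would unfold one step of the closure: the only possible resolvent between $L(\seq{c})^\perp$ and $P$ on the designated literal $L(\seq{t})$ is $\seq{t} \noeq \seq{c} \lor C$. Thus after one resolution step we have obtained exactly the clauses $\mset{L(\seq{c})^\perp, \seq{t} \noeq \seq{c} \lor C}$.

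Next, the key observation is that no further resolvent with $P$ is possible. Any resolution with $P$ on its designated literal $L(\seq{t})$ requires the other premise to contain a literal of polarity opposite to $p$ on the predicate variable $X$ of $L$. The clause $L(\seq{c})^\perp$ has already been used, and its resolvent $\seq{t} \noeq \seq{c} \lor C$ inherits its $X$-literals from $C$, all of which (by one-sidedness of $P$) have polarity $p$; the constraint literals $\seq{t} \noeq \seq{c}$ are not $X$-literals. Hence $\resclosure{P}{\seq{t} \noeq \seq{c} \lor C} = \mset{\seq{t} \noeq \seq{c} \lor C}$, which gives $\resclosurepremise{P}{\seq{c}} = \mset{L(\seq{c})^\perp, \seq{t} \noeq \seq{c} \lor C}$, a finite set.

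There is no real obstacle here; the lemma is essentially a direct unfolding of the definitions together with the fact that constraint resolution requires literals of opposite polarity. The only small care needed is to argue cleanly that the polarity condition, together with one-sidedness, leaves no further resolution candidates after the first step — this can be framed as an easy induction on the definition of $\resclosure{P}{\cdot}$ starting from $L(\seq{c})^\perp$, showing that every clause in the closure is among the two listed.
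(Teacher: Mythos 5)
Your proof is correct and follows essentially the same route as the paper's: one-sidedness means the tail $C$ of $P$ contains no literals dual to the designated literal, so the closure stabilizes after the single resolvent $\seq{t} \noeq \seq{c} \lor C$, giving $\resclosurepremise{P}{\seq{c}} = \mset{L(\seq{c})^\perp, \seq{t} \noeq \seq{c} \lor C}$. The paper states this in one line; your version merely makes the ``no further resolvents'' step explicit.
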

\begin{proof}
  Let $P = \underline{L(\seq{t})} \lor C$.
  Since $P$ is one-sided we have that $C$ does not contain $L^\perp$-literals.
  Therefore
  $\resclosurepremise{P}{\seq{c}} = \mset{L(\seq{c})^\perp, \seq{c} \noeq \seq{t} \lor C}$ is finite.
\end{proof}
However, one-sidedness is not necessary for $\resclosurepremise{P}{\seq{c}}$ being logically equivalent to a finite clause set:
\begin{example}
  \label{mixed-resolution-candidate-with-finite-resolution-closure}
  $P = \underline{\neg X(u,v)} \lor X(v,u)$ is not one-sided, yet we have 
  \begin{align*}
    \resclosurepremise{P}{c,d} = \{ &X(c,d), \\
    &c \noeq u \lor d \noeq v \lor X(v,u), \\
    &u \noeq v' \lor v \noeq u' \lor c \noeq u' \lor d \noeq v' \lor X(v,u),\\
    &\dots\}
  \end{align*}
  which is logically equivalent to $\mset{X(c,d), X(d,c)}$.
\end{example}

For one-sided derivations we thus get:
\oneSidedMainResult*
\begin{proof}
  By \Cref{one-sided-implies-ResU-finite} we get that $F_P = \resclosurepredicate{P}$ is a first-order annotation for $D$.
  Also note that in this case $\seq{\mathrm{wit}(D,F)} = \seq{\mathrm{wit}(D)}$.
  Thus the result follows from \Cref{finite-witnesses-from-eliminating-scan-derivations}.
\end{proof}

\section{Size of witnesses}
\label{sec.size-of-witnesses}

To show the result about the size of witnesses of one-sided derivations we use a formal definition of size:
\begin{definition}
  Let $E$ be an expression or a tuple of expressions.
  We define the \emph{size of $E$} (denoted as $\abs{E}$) inductively as
  \begin{align*}
    \abs{E} &= 1, \text{ if $E$ is an individual constant, variable, $\bot$ or $\top$}  \\
    \abs{f(\seq{t})} &= 1 + \abs{\seq{t}} \text{ for function symbols $f$ and terms $\seq{t}$} \\
    \abs{R(\seq{t})} &= 1 +  \abs{\seq{t}}, \text{ for predicate symbols $R$ and terms $\seq{t}$} \\
    \abs{X(\seq{t})} &= 1 +  \abs{\seq{t}}, \text{ for predicate variables $X$ and terms $\seq{t}$} \\
    \abs{\neg \varphi} &= 1 + \abs{\varphi}, \text{ for a formula $\varphi$} \\
    \abs{\bigoplus_{i \in I} \varphi_i} &= \begin{cases}
      \sum_{i \in I} \abs{\varphi_i},& \text{ for $\oplus \in \mset{\land, \lor}$ and formulas $\varphi_i$, if $I$ is finite} \\
      \infty &\text{ for $\oplus \in \mset{\land, \lor}$ and formulas $\varphi_i$, if $I$ is infinite}
    \end{cases} \\
    \abs{\varphi \limp \psi} &= \abs{\varphi} + \abs{\psi}, \text{ for formulas $\varphi$ and $\psi$} \\
    \abs{\varphi \liff \psi} &= \abs{\varphi} + \abs{\psi}, \text{ for formulas $\varphi$ and $\psi$} \\
    \abs{\Forall{u} \varphi} &= 1 + \abs{\varphi}, \text{ for a formula $\varphi$} \\
    \abs{\Exists{u} \varphi} &= 1 + \abs{\varphi}, \text{ for a formula $\varphi$} \\
    \abs{\Forall{X} \varphi} &= 1 + \abs{\varphi}, \text{ for a formula $\varphi$} \\
    \abs{\Exists{X} \varphi} &= 1 + \abs{\varphi}, \text{ for a formula $\varphi$} \\
    \abs{\lambda u. E} &= \abs{E}, \text{ for an expression $E$} \\
    \abs{(E_1, \dots, E_d)} &= \sum_{i=1}^d \abs{E_i}, \text{ for expressions $E_1, \dots, E_d$.}
  \end{align*}
\end{definition}

For one-sided $\seq{X}$-eliminating $\mathcal{C}$-derivations we can give a tight upper bound on the size of the witnesses:
\begin{proposition}
  \label{size-of-witnesses-from-one-sided-eliminating-derivation}
  Let $D = (D_i)_{1 \leq i \leq m}$ be a one-sided $\seq{X}$-eliminating derivation from $N$.
  Let $p$ be the number of purified pointed clauses in $D$ and let $n$ be the maximum number of $\seq{X}$-literals in a purified pointed clause in $D$.
  Let $\mathrm{wit}(D) = (\witness_1(D), \dots, \witness_d(D))$.
  Then for all $1 \leq i \leq d$ we have
  \begin{enumerate}[(i)]
    \item If $n=1$, then $\abs{\witness_i(D)} = O(p)$ for $p \to \infty$ and
    \item if $n>1$, then $\abs{\witness_i(D)} = O(n^p)$ for $p \to \infty$.
  \end{enumerate}
  Furthermore, for any $p,n \in \N$ as above there is a clause set $N_{p,n}$ and a one-sided $X$-eliminating derivation $D_{p,n}$ from $N_{p,n}$ with $p$ purified pointed clauses where $n$ is the maximum number of $X$-literals such that
  \begin{enumerate}[(i)]
    \item if $n=1$, then $\abs{\alpha_i(D_{p,n})} = \Omega(p)$ for $p \to \infty$ and
    \item if $n>1$, then $\abs{\alpha_i(D_{p,n})} = \Omega(n^p)$ for $p \to \infty$.
  \end{enumerate}
\end{proposition}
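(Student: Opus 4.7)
The plan is to track, by backward induction on the derivation, the maximum size $M_i \definedas \max_j|\mathrm{wit}_i(D)_j|$ of a component of the intermediate witness. By inspection of \Cref{ext-definition}, for $S\in\mset{\mathrm{Res},\mathrm{Fac},\mathrm{ConstrElim},\mathrm{RedElim}}$ the transformation $T_S$ leaves the witness unchanged, while $T_{\mathrm{ExtPurDel}^{\pm}_{X_j}}$ replaces one component by $\lambda\seq{u}.\top$ or $\lambda\seq{u}.\bot$; neither can make $M_i$ grow. Thus only the $p$ steps of the form $\mathrm{PurDel}_P$ contribute to growth. For a one-sided purified pointed clause $P$ with $k_P\leq n$ many $\seq{X}$-literals, \Cref{one-sided-implies-ResU-finite} yields $|\resclosurepremise{P}{\seq{c}}|=2$, which forces $\resclosurepredicate{P}$ into a shallow Boolean combination containing exactly $k_P$ occurrences of $\seq{X}$-atoms. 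Substituting $\seq{\witness}$ therefore produces an expression of size at most $A+k_P\cdot\max_j|\witness_j|$, where $A$ is a constant depending only on $D$ (the largest non-$\seq{X}$ part of $\resclosurepredicate{P}$ over the purified $P$ in $D$, together with the bounded term arguments).

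This yields the recurrence $M_{i-1}\leq A+n\cdot M_i$ whenever $D_i$ is a $\mathrm{PurDel}$ step, and $M_{i-1}\leq M_i$ otherwise; the base case is $M_m=O(1)$ since $\seq{\mathrm{wit}_m(D)}=(W_1,\ldots,W_d)$. Unrolling along the $p$ $\mathrm{PurDel}$ steps gives $M_0\leq A(1+n+\cdots+n^{p-1})+n^p M_m$, which evaluates to $O(p)$ when $n=1$ and to $O(n^p)$ when $n\geq 2$, establishing both upper bounds.

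For the matching lower bounds I would exhibit explicit families of derivations. For $n=1$ take $N_{p,1}\definedas\mset{X(a_1),\ldots,X(a_p)}$ with pairwise distinct constants $a_i$ and let $D_{p,1}$ sequentially apply $\mathrm{PurDel}_{\underline{X(a_i)}}$; each step is trivially applicable because $N_{p,1}$ contains no negative $X$-literal. The backward computation using \Cref{witness-definition} yields $\mathrm{wit}(D_{p,1})\iff\lambda u.\,W_X(u)\lor u\oeq a_1\lor\cdots\lor u\oeq a_p$, of size $\Theta(p)$. For $n\geq 2$ take $N_{p,n}\definedas\mset{P_1,\ldots,P_p}$ with $P_i\definedas\underline{\neg X(u)}\lor\neg X(a_{i,1})\lor\cdots\lor\neg X(a_{i,n-1})$ for pairwise distinct constants, and let $D_{p,n}$ purify the $P_i$ one by one (again trivially applicable, as all $X$-literals in $N_{p,n}$ are negative). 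Then $\resclosurepredicate{P_i}\iff\lambda u.\,X(u)\land(\neg X(a_{i,1})\lor\cdots\lor\neg X(a_{i,n-1}))$ has exactly $n$ occurrences of $X$, so each backward $\mathrm{PurDel}_{P_i}$ step multiplies the current witness size by $n$ up to a constant additive overhead, yielding $|\mathrm{wit}(D_{p,n})|=\Omega(n^p)$.

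The hard part will be the $\mathrm{PurDel}$ case of the recurrence: one has to argue carefully that one-sidedness together with \Cref{one-sided-implies-ResU-finite} really bounds the number of $\seq{X}$-atoms in $\resclosurepredicate{P}$ by $k_P$, and that the size inflation introduced by $\lambda$-application with the (bounded) term arguments occurring in $D$ can be absorbed into the constant $A$ rather than into the base of the exponential. Once this bookkeeping is done, the closed-form resolution of the linear recurrence and the verification that the two lower-bound families behave exactly as claimed are both straightforward; in particular, because the constants $a_{i,j}$ are pairwise distinct, no redundancy or simplification can collapse the iteratively built witness below the asymptotic size.
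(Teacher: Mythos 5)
Your proof is correct and follows essentially the same route as the paper: the upper bound via the recurrence $M_{i-1}\leq A+nM_i$ driven only by the $p$ purified-clause-deletion steps (using that one-sidedness makes $\resclosurepremise{P}{\seq{c}}$ a two-element set with at most $n$ occurrences of $\seq{X}$-atoms in $\resclosurepredicate{P}$), and the lower bound via an explicit family of $p$ one-sided clauses with $n$ $\seq{X}$-literals of a single polarity so that every $\mathrm{PurDel}$ step is vacuously applicable and each backward step multiplies the witness size by $n$. The paper's witness family $N_{p,n}=\mset{\Lor_{j=1}^n X(c_{i,j})}$ is just the dual of your $n\geq 2$ construction (and subsumes your $n=1$ case), so the difference is cosmetic.
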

\begin{proof}
  Note that an application of $T_S$ to witnesses $\seq{\witness}$ only increases if $S$ is a purified clause deletion step.
  Now let $P = \underline{L(\seq{t})} \lor C_P$ be a purified $X_i$-pointed clause in $D$.
  Since $D$ is one-sided we have that $P$ is one-sided, i.e., $P$ contains $X_i$ only with a single polarity.
  In turn $\resclosurepremise{P}{\seq{c}} = \mset{L(\seq{t})^\perp, \seq{c} \noeq \seq{t} \lor C_P}$.
  In $L(\seq{t})^\perp$ we have that $X_i$ occurs exactly once.
  In $C_P$ the total number of $\seq{X}$-literals is at most $n-1$.
  In total the number of $\seq{X}$-literals in $\resclosurepredicate{P}$ is at most $n$.
  Further let $q$ be the maximum size of $\resclosurepredicate{P}$ over all purified pointed clauses $P$ in $D$.
  Thus for predicate expressions $\seq{\beta} = (\beta_1, \dots, \beta_d)$ with $b := \max_{1 \leq i \leq d} \abs{\beta_i}$ we get 
  \begin{equation*}
    \label{eq.resclosurepredicate-inequality}
    \abs{\resclosurepredicate{P}[\seq{X} \leftarrow \seq{\beta}]} \leq n b + q.
  \end{equation*}
  
  Denote by $T_S(\seq{\beta})_i$ the $i$-the component of the tuple $T_S(\seq{\beta})$. Since purified clause deletion steps are the only derivation steps that increase the size of witnesses across $T_S$ we get
  \begin{equation}
    \label{eq.transformation-inequality}
    \abs{T_{\mathrm{PurDel}_P}(\seq{\beta})_i} \leq q + n b
    \text{ and } \abs{T_{S}(\seq{\beta})_i} \leq b \text{ for $S \neq \mathrm{PurDel}_P$}
  \end{equation}
  Remember that $\seq{\mathrm{wit}_m(D)} = (\lambda \seq{u_1}. W_1(\seq{u_1}), \dots, \lambda \seq{u_d}. W_d(\seq{u_d}))$ for some predicate variables $W_i$ and let $w := \max_{1 \leq i \leq d} \abs{W_i(\seq{u_i})}$.
  Recall, $\seq{\mathrm{wit}_{i-1}(D)} = T_{D_i}(\seq{\mathrm{wit}_i(D)})$ for all $1 \leq i \leq m$.
  By inductively applying \eqref{eq.transformation-inequality} for all derivation steps in $D$
  we get $\abs{\witness_i} \leq w_p$
  for a sequence $(w_k)_{k \in \N}$ which satisfies the recurrence relation 
  $$w_0 = w, \quad w_{k+1} = q + n w_k.$$
  The solution to this linear recurrence relation can be computed by standard methods as 
  $$w_k = w n^k + q\sum_{i=0}^{k-1} n^{i}.$$
  For $n=1$ we get $w_k = w + qk = O(k)$ and thus $\abs{\witness_i} = O(p)$ for $p \to \infty$.
  For $n>1$ we get $w_k = O(n^k)$ for $k \to \infty$ and thus $\abs{\witness_i} = O(n^p)$ for $p \to \infty$.

  We now show the second statement.
  Let $p,n \in \N$ as above.
  For $1 \leq i \leq p$ and $1 \leq j \leq n$ let $c_{i,j}$ be pairwise distinct constant symbols.
  We define the clause set 
  $$
  N_{p,n} := \mset{\Lor_{j=1}^n X({c_{i,j}}) \suchthat 1 \leq i \leq p}.
  $$
  For $1 \leq i \leq p$ consider the pointed clause $P_i := \underline{X({c_{i,1}})} \lor \Lor_{j=2}^n X({c_{i,j}})$.
  Note that all $P_i$ are one-sided.
  Also, since all $X$-literals in $N_{p,n}$ are positive, there are no resolvents that can be performed and thus every $P_i$ is purified in any subset of $N_{p,n}$.
  Thus, we can apply purified clause deletion to all $P_i$ in $N_{p,n}$ to get a one-sided $X$-eliminating $\mathcal{C}$-derivation $D_{p,n}$ from $N_{p,n}$ by
  \begin{prooftree}
    \Axiom$\mset{P_1, \dots, P_p}\fCenter$
    \RightLabel{$\mathrm{PurDel}_{P_1}$}
    \UnaryInf$\mset{P_2,\dots,P_p}\fCenter$
    \RightLabel{$\mathrm{PurDel}_{P_2}$}
    \UnaryInf$\vdots\fCenter$
    \RightLabel{$\mathrm{PurDel}_{P_{p-1}}$}
    \UnaryInf$\mset{P_p}\fCenter$
    \RightLabel{$\mathrm{PurDel}_{P_p}$}
    \UnaryInf$\emptyset\fCenter$
  \end{prooftree}
  Observe that $p$ is the number of purified pointed clauses in $D$ and that $n$ is the maximum number of $X$-literals of a purified clause in $D$.
  Note that 
  $$\resclosurepremise{P_i}{{d}} = \mset{\neg X({d}), {d} \noeq {c_{i,1}} \lor \Lor_{j=2}^n X({c_{i,j}})}$$ and thus $\resclosurepredicate{P_i} = \lambda {u}. X({u}) \lor (u \oeq {c_{i,1}} \land \Land_{j=2}^n \neg X({c_{i,j}}))$.
  Using this, we get for $1 \leq i \leq p$  
  $$\mathrm{wit}_{p-i-1}(D_{p,n}) = \lambda {u}. \mathrm{wit}_{p-i}[{u}] \lor ({u} \oeq {c_{p-i, 1}} \land \Land_{j=2}^n \neg \mathrm{wit}_{p-i}[{c_{p-i,j}}]).$$
  Let $w_i := \abs{\mathrm{wit}_i(D_{p,n})}$.
  Then $w_p = \abs{\lambda {u}. W({u})} = 2$ and 
  \begin{align*}
    w_{p-i-1} &= \abs{\lambda {u}. \mathrm{wit}_{p-i}[{u}] \lor \left({u} \oeq {c_{p-i, 1}} \land \Land_{j=2}^n \neg \mathrm{wit}_{p-i}[{c_{p-i,j}}]\right)} \\
    &= w_{p-i} + 3 + \sum_{j=2}^n (1 + w_{p-i})\\
    &= n w_{p-i} + n + 2.
  \end{align*}
  Solving this linear recurrence relation yields
  $$w_{p-i} = 2n^i + (n+2)\sum_{j=0}^{i-1} n^{i-j-1}.$$
  Now we get $\abs{\mathrm{wit}(D_{p,n})} = w_0 = 2n^p + (n+2)\sum_{j=0}^{p-1} n^{p-j-1}$.
  For $n=1$ we have $\abs{\mathrm{wit}(D_{p,n})} = 2 + 3p = \Omega(p)$ for $p \to \infty$.
  For $n > 1$ we have $\abs{\mathrm{wit}(D_{p,n})} = \Omega(n^p)$ for $p \to \infty$.
\end{proof}

\section{Improvement of Ackermann's Lemma}
\label{sec.connections-to-ackermanns-lemma}

Recall Ackermann's Lemma:
\begin{theorem}
  \label{ackermanns-lemma}
  Let $\varphi$, $\psi$ be first-order formulas where $X$ only occurs positively in~$\varphi$ and $X$ does not occur in $\psi$.
  Then
  \begin{align*}
    \Exists{X} (\varphi \land \Forall{\seq{u}} (X(\seq{u}) \limp \psi(\seq{u}, \seq{v}))) \iff \varphi[X \leftarrow \lambda \seq{u}. \psi(\seq{u}, \seq{v})]
  \end{align*}

  Let $\varphi$, $\psi$ be first-order formulas where $X$ only occurs negatively in~$\varphi$ and $X$ does not occur in $\psi$.
  Then
  \begin{align*}
  \Exists{X} (\varphi \land \Forall{\seq{u}} (\psi(\seq{u}, \seq{v}) \limp X(\seq{u}))) \iff \varphi[X \leftarrow \lambda \seq{u}. \psi(\seq{u}, \seq{v})]
  \end{align*}
\end{theorem}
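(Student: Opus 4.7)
The plan is to prove each equivalence separately, direction by direction, working at the level of a fixed structure $\mathcal{M}$ with a fixed assignment of the free variables in $\seq{v}$. I would treat the positive case in full and then obtain the negative case by an analogous argument using anti-monotonicity (alternatively, one can reduce to the positive case by substituting $\lambda \seq{u}.\neg X(\seq{u})$ for $X$ and rearranging, but the direct argument seems cleaner).

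For the positive case, the direction $\Leftarrow$ is immediate: assuming $\varphi[X \leftarrow \lambda \seq{u}.\psi(\seq{u},\seq{v})]$ holds in $\mathcal{M}$, I would instantiate the existential by $\witness := \lambda \seq{u}.\psi(\seq{u},\seq{v})$. The first conjunct then coincides with the assumption, and the second conjunct $\Forall{\seq{u}}(\witness(\seq{u}) \limp \psi(\seq{u},\seq{v}))$ is the tautology $\Forall{\seq{u}}(\psi \limp \psi)$. For the direction $\Rightarrow$, suppose $P$ is a relation on $\mathcal{M}$ satisfying both conjuncts; the second conjunct $\Forall{\seq{u}}(P(\seq{u}) \limp \psi(\seq{u},\seq{v}))$ expresses that the extension of $P$ is pointwise contained in that of $\lambda \seq{u}.\psi$. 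It then suffices to pass from $\mathcal{M} \models \varphi[X \leftarrow P]$ to $\mathcal{M} \models \varphi[X \leftarrow \lambda \seq{u}.\psi]$, which is precisely a monotonicity statement. The negative case is entirely parallel: the witness $\witness := \lambda \seq{u}.\psi(\seq{u},\seq{v})$ again handles $\Leftarrow$, and for $\Rightarrow$ any $P$ satisfying the second conjunct has $\psi$ pointwise contained in $P$, so anti-monotonicity gives $\varphi[X \leftarrow P] \limp \varphi[X \leftarrow \lambda \seq{u}.\psi]$.

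The main obstacle is the monotonicity lemma: if $X$ occurs only positively in $\varphi$ and $\mathcal{M} \models \Forall{\seq{u}}(P_1(\seq{u}) \limp P_2(\seq{u}))$, then $\mathcal{M} \models \varphi[X \leftarrow P_1] \limp \varphi[X \leftarrow P_2]$; dually, if $X$ occurs only negatively in $\varphi$, the implication reverses. I would prove both halves simultaneously by induction on $\varphi$, since the negation case for the positive statement immediately invokes the negative statement on the subformula. Atomic formulas and quantifier cases are routine (polarity is preserved, and the substitution commutes with $\Forall{}$/$\Exists{}$ up to bound-variable renaming), and the propositional cases follow by tracking how polarity threads through $\land$, $\lor$, $\limp$, $\neg$. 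Once the monotonicity lemma is in place, plugging it into the argument above completes both cases of the theorem.
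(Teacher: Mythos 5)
Your proposal is correct. Note that the paper does not prove this theorem itself; it only cites the literature (Lemma 6.1 of the Gabbay--Schmidt--Sza{\l}as book), so there is no in-paper argument to compare against. Your proof is the standard one and is exactly what that reference does: the $\Leftarrow$ direction by instantiating the existential with $\lambda \seq{u}.\,\psi(\seq{u},\seq{v})$, which trivializes the second conjunct, and the $\Rightarrow$ direction by the (anti-)monotonicity lemma for formulas in which $X$ occurs with a single polarity, proved by a simultaneous induction on $\varphi$ in which the $\neg$ and $\limp$ cases swap between the two halves. Working in a fixed structure with a fixed assignment of $\seq{v}$ is the right way to handle the free parameters; the only point worth stating explicitly is that the substitution $\varphi[X \leftarrow \lambda\seq{u}.\,\psi(\seq{u},\seq{v})]$ is understood to be capture-avoiding with respect to $\seq{v}$, which the paper's substitution convention presupposes.
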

\begin{proof}
  See, e.g.,~\cite[Lemma 6.1]{Gabbay08Second}.
\end{proof}

Note that in both cases Ackermann's Lemma provides a witness $\lambda \seq{u}. \psi(\seq{u}, \seq{v})$.
Therefore Ackermann's Lemma is a method for solving WSOQE.

We show that our method extends Ackermann's Lemma on clause sets:
\begin{proposition}
  \label{improvement-of-ackermanns-lemma}
  Let $N'$ be a finite clause set where $X$ only occurs positively and let $C(\seq{u})$ be a clause where $X$ does not occur.
  Further set $N := N' \union \mset{\neg X(\seq{u}) \lor C(\seq{u})}$.
  Then there is a one-sided $X$-eliminating derivation $D$ from $N$ such that $\mathrm{wit}(D) \iff \lambda \seq{u}. C(\seq{u})$, which is the witness produced by Ackermann's Lemma applied to $\Exists{X} N$.

  Let $N'$ be a finite clause set where $X$ only occurs negatively and let $C(\seq{u})$ be a clause where $X$ does not occur.
  Further set $N := N' \union \mset{X(\seq{u}) \lor C(\seq{u})}$.
  Then there is a one-sided $X$-eliminating derivation $D$ from $N$ such that $\mathrm{wit}(D) \iff \lambda \seq{u}. \neg C(\seq{u})$, which is the witness produced by Ackermann's Lemma applied to $\Exists{X} N$.
\end{proposition}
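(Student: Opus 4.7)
The plan is to construct $D$ explicitly in each case and then unfold $\mathrm{wit}(D)$ from the back. I describe Case 1 in detail; Case 2 is entirely dual, swapping the polarity of the designated literal of $P$, $\mathrm{ExtPurDel}_X^{+}$ for $\mathrm{ExtPurDel}_X^{-}$, and the negative branch of \Cref{res-neg-definition} for the positive one.

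Set $P := \underline{\neg X(\seq{u})} \lor C(\seq{u})$, which is one-sided. First I would build $D$ in three phases: perform all constraint-resolution inferences between $P$ (on its designated literal) and clauses in the current set that carry a positive $X$-occurrence; then apply $\mathrm{PurDel}_P$; then apply $\mathrm{ExtPurDel}_X^{+}$. The key observation for the first phase is that each resolution of $P$ with a clause $D' \lor X(\seq{s})$ yields $\seq{u} \noeq \seq{s} \lor C(\seq{u}) \lor D'$, where $C(\seq{u})$ contributes no $X$-literal; hence the positive $X$-literal count in the non-$P$ premise strictly decreases. Starting from the finite $N'$, iterated resolution along these chains terminates at a fixpoint, at which every further resolvent of $P$ coincides with one already present (up to constraint elimination and renaming) and is therefore redundant, legitimising $\mathrm{PurDel}_P$. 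After this step every remaining clause still carries $X$ only positively, so $\mathrm{ExtPurDel}_X^{+}$ applies and produces an $X$-free conclusion. Hence $D$ is $X$-eliminating and, since the only purified pointed clause in $D$ is the one-sided $P$, also one-sided.

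Next I would compute $\resclosurepredicate{P}$: \Cref{one-sided-implies-ResU-finite} combined with constraint elimination gives $\resclosurepremise{P}{\seq{c}} \iff \mset{X(\seq{c}),\, C(\seq{c})}$, so the negative branch of \Cref{res-neg-definition} simplifies to
\[
  \resclosurepredicate{P} \iff \lambda \seq{u}.\bigl(X(\seq{u}) \land C(\seq{u})\bigr).
\]
Unfolding $\mathrm{wit}(D)$ backwards via \Cref{witness-definition,ext-definition}: the base case is $\mathrm{wit}_m(D) = W_X$; $T_{\mathrm{ExtPurDel}_X^{+}}$ rewrites this to $\lambda \seq{u}.\top$; $T_{\mathrm{PurDel}_P}$ substitutes it into $\resclosurepredicate{P}$ to yield $\lambda \seq{u}.(\top \land C(\seq{u})) \iff \lambda \seq{u}.C(\seq{u})$; and the preceding $\mathrm{Res}$ steps leave the witness fixed. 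This matches the Ackermann witness (\Cref{ackermanns-lemma}). For Case 2 the analogous computation gives $\resclosurepredicate{P} \iff \lambda \seq{u}.\bigl(X(\seq{u}) \lor \neg C(\seq{u})\bigr)$, and then $T_{\mathrm{ExtPurDel}_X^{-}}$ substitutes $\lambda \seq{u}.\bot$ for $X$, so $\mathrm{wit}(D) \iff \lambda \seq{u}.\neg C(\seq{u})$.

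The main obstacle will be the first paragraph: one has to argue carefully that the purification phase terminates and that at termination $\mathrm{PurDel}_P$ is genuinely applicable, i.e., every outstanding resolvent of $P$ with a clause in the saturated set is already redundant. The decreasing positive-$X$-literal-count invariant along resolution chains, together with the finiteness of $N'$, takes care of this. The remaining witness arithmetic is then a routine unfolding driven by the one-sided closed form of $\resclosurepredicate{P}$.
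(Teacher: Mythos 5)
Your proposal is correct and matches the paper's own proof essentially step for step: the same three-phase derivation (saturate $\resclosure{P}{N'}$, then $\mathrm{PurDel}_P$, then $\mathrm{ExtPurDel}_X^{+}$), the same termination measure via the strictly decreasing count of positive $X$-literals in resolvents, and the same closed form $\resclosurepredicate{P} \iff \lambda\seq{u}.(X(\seq{u}) \land C(\seq{u}))$ yielding $\mathrm{wit}(D) \iff \lambda\seq{u}.C(\seq{u})$ after substituting $\lambda\seq{u}.\top$. The dual case is likewise handled by the paper exactly as you indicate.
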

\begin{proof}
  Let $N' = \mset{C_1, \dots, C_n}$.
  Informally, the derivation proceeds as follows:
  First, form the resolution closure $\resclosure{P}{N'}$ with respect to the $X$-pointed clause $P = \underline{\neg X(\seq{u})} \lor C(\seq{u})$, resulting in a clause set $N_{\mathrm{Res}_P}$.
  If this terminates, delete $P$ by purified clause deletion, resulting in a clause set $N'_{\mathrm{Res}_P}$.
  The clause set $N'_{\mathrm{Res}_P}$ will contain $X$ only positively, so extended purity deletion is applicable, resulting in a clause set $N^\ast$.
  More formally, the $X$-eliminating derivation is given by $D=$
  \begin{prooftree}
    \Axiom$N\fCenter$
    \RightLabel{$\mathrm{Res}^\ast$}
    \UnaryInf$N_{\mathrm{Res}_P}\fCenter$
    \RightLabel{$\mathrm{PurDel}_P$}
    \UnaryInf$N'_{\mathrm{Res}_P}\fCenter$
    \RightLabel{$\mathrm{ExtPurDel}_X^{+}$}
    \UnaryInf$N^\ast\fCenter$
  \end{prooftree}
  Note that $\resclosurepremise{P}{\seq{c}} = \mset{X(\seq{c}), \seq{u} \noeq \seq{c} \lor C(\seq{u})}$ and thus 
  $$\resclosurepredicate{P} = \lambda \seq{v}. X(\seq{v}) \land \Forall{\seq{u}} (\seq{v} \noeq \seq{u} \lor C(\seq{u})) \iff \lambda \seq{v}. X(\seq{v}) \land C(\seq{v}).$$ 
  Now we have $\mathrm{wit}(D) = \resclosurepredicate{P}[X \leftarrow \lambda \seq{u}. \top] = \lambda \seq{v}. \top \land C(\seq{v}) \iff \lambda \seq{u}. C(\seq{u})$.

  To finish the proof it remains to show that forming the resolution closure terminates, i.e., to show that $\resclosure{P}{N'}$ is finite and to show that the clause set $N'_{\mathrm{Res}_P}$ contains $X$ only positively.
  One can show by induction on $\resclosure{P}{N'}$ that every clause $C \in \resclosure{P}{N'}$ only contains $X$ positively and whenever 
  \begin{prooftree}
    \AxiomC{$C$}
    \AxiomC{$P$}
    \BinaryInfC{$R$}
  \end{prooftree}
  is a constraint resolution inference between $C$ and $P$ upon $\neg X(\seq{u})$, then $R$ has one fewer positive occurrences of $X$ than $C$.
  This implies that the number of positive $X$-literals is a termination measure on $\resclosure{P}{N'}$, showing its finiteness.
  
  The other version can be proved by analogous arguments.
\end{proof}

\section{Examples}
\label{sec.examples}

\subsection{$X$-eliminating derivation with infinite witness}
\begin{example}
  \label{derivation-with-infinite-witness}
  Recall the clause set $N$ from \Cref{ex.main-example}
  \begin{align*}
    (1)\ B(a,v) \qquad (2)\ X(a) \qquad (3)\ B(u,v) \lor \neg X(u) \lor X(v) \qquad  (4)\ \neg X(c)
  \end{align*}
  and some resolvents from $N$ are
  \begin{align*}
    (5)\ a \noeq u \lor B(u,v) \lor X(v) \quad \text{($2$ with $3$)} \quad \text{and} \quad
    (6)\ & a \noeq c & \text{($2$ with $4$)}
  \end{align*}
  A derivation from $N$ different to $D$ from \Cref{ex.main-example} is given by $D' =$
  \begin{prooftree}
    \Axiom$\mset{1,2,3,4}\fCenter$
    \RightLabel{$\mathrm{PurDel}_{3.2}$}
    \UnaryInf$\mset{1,2,4}\fCenter$
    \RightLabel{$\mathrm{Res}_{2.1,4.1}$}
    \UnaryInf$\mset{1,2,4,6}\fCenter$
    \RightLabel{$\mathrm{PurDel}_{4.1}$}
    \UnaryInf$\mset{1,2,6}\fCenter$
    \RightLabel{$\mathrm{ExtPurDel}_X^{+}$}
    \UnaryInf$\mset{1,6}\fCenter$
  \end{prooftree}

  The $\mathrm{PurDel}_{3.2}$ derivation step is applicable from $\mset{1,2,3,4}$ since the only resolvent of $3.2$ with a clause in $N$ is the clause~$5$ which is redundant in $\mset{1,2,4}$.
  Since $\mset{1,6}$ does not contain $X$ we have that $D'$ is $X$-eliminating.
  However, $D'$ is not one-sided as the purified clause deletion step $\mathrm{PurDel}_{3.2}$ deletes the clause~$3$ which contains $X$ positively and negatively.
  The witness construction steps are:
  \renewcommand{\arraystretch}{1.2}
  $$
  \begin{array}{r@{\quad}l@{\quad}l@{\quad}l@{\quad}l}
    i\ & N(D')_i & D_{i+1}' & T_{D_{i+1}'}(\witness) &  \mathrm{wit}_{i}(D') \\
    \hline
    4\ & \mset{1,6} & - & - & W_X \\
    3\ & \mset{1,2,6} & \mathrm{ExtPurDel}_X^{-} & \lambda u. \bot & \lambda u. \bot  \\
    2\ & \mset{1,2,4,6} & \mathrm{PurDel}_{2.1} &  \resclosurepredicate{2.1}[X \leftarrow \witness] & \lambda u. u \oeq a \\
    1\ & \mset{1,2,4} & \mathrm{Res}_{2.1,4.1} & \witness & \lambda u. u \oeq a \\
    0\ & \mset{1,2,3,4} & \mathrm{PurDel}_{3.2} & \resclosurepredicate{3.2} & \resclosurepredicate{3.2}[X \leftarrow \lambda u. u \oeq a]
  \end{array}
  $$
  \renewcommand{\arraystretch}{1.0}%
  Note that $\resclosurepremise{3.2}{d}$ is infinite by \Cref{ex.resolution-closure} and therefore also $\resclosurepredicate{3.2}$ is an infinite expression.
  This means the witness $\mathrm{wit}(D')$ is infinite and has the form
  \begin{align*}
    \lambda u. &u \noeq c \\
    & \land \Forall{u'} (B(u,u') \lor u' \noeq c) \\
    & \land \Forall{u'}\Forall{u''} (B(u,u') \lor B(u',u'') \lor u'' \noeq c) \\
    & \land \dots
  \end{align*}
  This shows that from the same clause set there can be different $\seq{X}$-eliminating derivations where one leads to a finite witness and the other leads to an infinite witness.
\end{example}

\subsection{Multiple witnesses}

\begin{example}
  \label{ex.multiple-witnesses}
  Recall the example $\Exists{X} (X(a) \land \Forall{u} (X(u) \limp B(u)))$ from the introduction.
  Its clause normal form is
  $$
  (1)\ X(a) \quad (2)\ \neg X(u) \lor B(u).
  $$
  The only resolvent is $(3)\ B(a)$ (after constraint elimination).
  We can look at $4$ different derivations:
  \begin{align*}
    D^{(i)}&=\Axiom$\mset{1,2}\fCenter$
    \RightLabel{$\mathrm{Res}$}
    \UnaryInf$\mset{1,2,3}\fCenter$
    \RightLabel{$\mathrm{PurDel}_{1.1}$}
    \UnaryInf$\mset{2,3}\fCenter$
    \RightLabel{$\mathrm{PurDel}_{2.1}$}
    \UnaryInf$\mset{3}\fCenter$
    \DisplayProof
    &D^{(ii)}&=\Axiom$\mset{1,2}\fCenter$
    \RightLabel{$\mathrm{Res}$}
    \UnaryInf$\mset{1,2,3}\fCenter$
    \RightLabel{$\mathrm{PurDel}_{2.1}$}
    \UnaryInf$\mset{1,3}\fCenter$
    \RightLabel{$\mathrm{PurDel}_{1.1}$}
    \UnaryInf$\mset{3}\fCenter$
    \DisplayProof \\
    D^{(iii)}&=\Axiom$\mset{1,2}\fCenter$
    \RightLabel{$\mathrm{Res}$}
    \UnaryInf$\mset{1,2,3}\fCenter$
    \RightLabel{$\mathrm{PurDel}_{1.1}$}
    \UnaryInf$\mset{2,3}\fCenter$
    \RightLabel{$\mathrm{ExtPurDel}_{X}^{-}$}
    \UnaryInf$\mset{3}\fCenter$
    \DisplayProof
    &D^{(iv)}&=\Axiom$\mset{1,2}\fCenter$
    \RightLabel{$\mathrm{Res}$}
    \UnaryInf$\mset{1,2,3}\fCenter$
    \RightLabel{$\mathrm{PurDel}_{2.1}$}
    \UnaryInf$\mset{2,3}\fCenter$
    \RightLabel{$\mathrm{ExtPurDel}_{X}^{+}$}
    \UnaryInf$\mset{3}\fCenter$
    \DisplayProof
  \end{align*}

  They differ by the order in which purified clauses are deleted and whether extended purity deletion is used.
  We can compute $\resclosurepredicate{1.1} = \lambda u. X(u) \lor u \oeq a$ and $\resclosurepredicate{2.1} = \lambda u. X(u) \land B(u)$.
  The corresponding witnesses are
  \begin{align*}
    \mathrm{wit}(D^{(i)}) &= \resclosurepredicate{1.1}[X \leftarrow \resclosurepredicate{2.1}[X \leftarrow W]] = \lambda u. (W(u) \land B(u)) \lor u \oeq a \\
    \mathrm{wit}(D^{(ii)}) &= \resclosurepredicate{2.1}[X \leftarrow \resclosurepredicate{1.1}[X \leftarrow W]] = \lambda u. (W(u) \lor u \oeq a) \land B(u) \\
    \mathrm{wit}(D^{(iii)}) &= \resclosurepredicate{1.1}[X \leftarrow \lambda u. \bot] \iff \lambda u. u \oeq a \\
    \mathrm{wit}(D^{(iv)}) &= \resclosurepredicate{2.1}[X \leftarrow \lambda u. \top] \iff \lambda u. B(u)
  \end{align*}
  Note that all four witnesses are mutually non-equivalent.
  Furthermore, any instantiation of $W$ in $\mathrm{wit}(D^{(i)})$ and $\mathrm{wit}(D^{(ii)})$ gives a witness.
  For example we get further witnesses:
  \begin{align*}
    \mathrm{wit}(D^{(i)})[W \leftarrow \lambda u. \bot] &\iff \lambda u. u \oeq a \\
    \mathrm{wit}(D^{(i)})[W \leftarrow \lambda u. \top] &\iff \lambda u. B(u) \lor u \oeq a \\
    \mathrm{wit}(D^{(ii)})[W \leftarrow \lambda u. \bot] &\iff \lambda u. u \oeq a \land B(u) \\
    \mathrm{wit}(D^{(ii)})[W \leftarrow \lambda u. \top] &\iff \lambda u. B(u)
  \end{align*}
\end{example}

\subsection{Example with multiple variables}

We illustrate how the algorithm works with multiple variables on a simple example:
\begin{example}
  \label{ex.multiple-variables}
  Consider an example, adapted from~\cite{Wernhard17Boolean}, of a clause set $N$ with clauses
  $$
    (1)\ \neg X_1(u) \lor X_2(u) \quad (2)\ \neg A(u) \lor X_2(u) \quad (3)\ \neg X_2(u) \lor B(u)
  $$
  and one of its resolvents
  $$
  (4)\ \neg A(u) \lor B(u) \quad\text{(2 with 3)}
  $$
  One derivation from $N$ is given by $D=$
  $$
    \Axiom$\mset{1,2,3}\fCenter$
    \RightLabel{$\mathrm{PurDel}_{1.1}$}
    \UnaryInf$\mset{2,3}\fCenter$
    \RightLabel{$\mathrm{Res}$}
    \UnaryInf$\mset{2,3,4}\fCenter$
    \RightLabel{$\mathrm{PurDel}_{2.2}$}
    \UnaryInf$\mset{3,4}\fCenter$
    \RightLabel{$\mathrm{ExtPurDel}_{X_2}^{-}$}
    \UnaryInf$\mset{4}\fCenter$
    \DisplayProof
  $$

  Denote $(\alpha_i, \beta_i) := \seq{\mathrm{wit}_i(D)}$.
  Then the witness construction steps are (a dash `$-$' indicates that no change happened to the previous step)
  \renewcommand{\arraystretch}{1.2}
  $$
  \begin{array}{r@{\ }l@{\ }l@{\ }l@{\ }l}
    i & D_{i+1}& T_{D_{i+1}}(\alpha, \beta) &  \alpha_i & \beta_i \\
    \hline
    4 & \text{n/a} & \text{n/a} & W_1 & W_2 \\
    3  & \mathrm{ExtPurDel}_{X_2}^{-} & (\alpha, \lambda u. \bot) & - & \lambda u. \bot  \\
    2 & \mathrm{PurDel}_{2.2} &  (\alpha, \resclosurepredicate{2.2}[X_1 \leftarrow \alpha, X_2 \leftarrow \beta]) & - & \lambda u. A(u) \\
    1 & \mathrm{Res} & (\alpha, \beta) & - & \lambda u. A(u) \\
    0 & \mathrm{PurDel}_{1.1} & (\resclosurepredicate{1.1}[X_1 \leftarrow \alpha, X_2 \leftarrow \beta], \beta) & \lambda u. W_1(u) \land A(u)&  -
  \end{array}
  $$
  Note that $\resclosurepredicate{2.2} = \lambda u. X_2(u) \lor A(u)$ and $\resclosurepredicate{1.1} = \lambda u. X_1(u) \land X_2(u)$.
  Thus $\seq{\mathrm{wit}(D)} = (\lambda u. W_1(u) \land A(u), \lambda u. A(u))$.
  \renewcommand{\arraystretch}{1.0}%
\end{example}

\subsection{Example of Witness Produced by Implementation}

\begin{example}
  For the input clause set $N=$ 
  $$
  \mset{
  \neg B(a, v) \lor \neg B(v, w),
  \neg X(a),
  \neg B(u,v) \lor \neg X(u) \lor X(v),
  \neg X(c)
  }
  $$
  our implementation finds the following witness, if given a derivation limit of $20$:

  \begin{align*}
  \lambda u_0. 
    (&\ \forall v
      \forall v_0
      \forall v_1
      \forall v_2
        (\neg B(v_1, v_0) \lor  \neg B(v, c) \lor  \neg B(u_0, v_2) \lor  \neg B(v_2, v_1) \\
        &\qquad\qquad\qquad\quad\lor  \neg B(v_0, v)) \\
      &\quad \land 
      \forall v
      \forall v_0
      (\neg B(v, c) \lor  \neg B(v_0, v) \lor  \neg B(u_0, v_0)) \\
      &\quad \land 
      \forall v
      (\neg B(v, c) \lor  \neg B(u_0, v)) \\
      &\quad \land 
      B(a, u_0) \\
      &\quad \land 
      \neg B(u_0, c) \\
      &\quad \land 
      \forall v
      \forall v_0
      \forall v_3
      \forall v_1
      \forall v_2
      (\neg B(v_1, v_0) \lor 
        \neg B(v, c) \lor 
        \neg B(v_2, v_1) \lor 
        \neg B(u_0, v_3) \\
        &\qquad\qquad\qquad\qquad\qquad
        \lor \neg B(v_0, v) \lor 
        \neg B(v_3, v_2))\\
      &\quad \land 
      \forall v
      \forall v_0
      \forall v_1
      (\neg B(v, c) \lor  \neg B(v_0, v) \lor  \neg B(v_1, v_0) \lor  \neg B(u_0, v_1)) \\
      &\quad \land 
      c \neq u_0) \\
    &\lor a = u_0
  \end{align*}
\end{example}

\end{document}